\documentclass[11pt]{article}

\newif\ifauthornote
\authornotefalse

\usepackage[margin=1in]{geometry}
\usepackage{amsmath,amssymb,amsthm,mathtools}
\usepackage{enumitem}
\usepackage{microtype}
\usepackage{xspace}
\usepackage{xcolor}
\definecolor{darkgreen}{RGB}{0,100,0}
\usepackage[bookmarks=true,pdfstartview=FitH,colorlinks,linkcolor=darkgreen,filecolor=darkgreen,citecolor=darkgreen,urlcolor=darkgreen]{hyperref}

\usepackage[capitalize,noabbrev]{cleveref}

\newtheorem{theorem}{Theorem}
\newtheorem{lemma}[theorem]{Lemma}

\newtheorem{fact}[theorem]{Fact}
\theoremstyle{definition}

\theoremstyle{remark}
\newtheorem{remark}[theorem]{Remark}

\newcommand{\name}{{\sf Otter}\xspace}
\newcommand{\R}{\mathbb{R}}
\newcommand{\one}{\mathbf{1}}
\newcommand{\Yside}{\mathcal{Y}}
\newcommand{\Xside}{\mathcal{X}}
\newcommand{\Util}{\mathsf{Util}}

\newcommand{\netYSale}{-\Delta_Y}
\newcommand{\netXSale}{-\Delta_X}
\newcommand{\initialYprice}{\sigma_0}
\newcommand{\ystar}[1]{y_{#1}^*}
\newcommand{\Ystar}{Y^*}
\newcommand{\crossing}{M}
\newcommand{\maxwf}{W}
\newcommand{\DYother}{D_Y^{\mathrm{oth}}}
\newcommand{\DXother}{D_X^{\mathrm{oth}}}
\newcommand{\MYother}{M_Y^{\mathrm{oth}}}
\newcommand{\MXother}{M_X^{\mathrm{oth}}}
\newcommand{\SYother}{S_Y^{\mathrm{oth}}}
\newcommand{\SXother}{S_X^{\mathrm{oth}}}
\newcommand{\welfareGain}[1]{\phi_{#1}}
\newcommand{\welfareGainY}{\phi_Y}
\newcommand{\welfareGainX}{\phi_X}
\newcommand{\playerids}{\mathcal{I}}
\newcommand{\reportv}[1]{\ensuremath{\widetilde{v}_{#1}}\xspace}
\newcommand{\reportq}[1]{\ensuremath{\widetilde{q}_{#1}}\xspace}
\newcommand{\reportu}[1]{\ensuremath{\widetilde{u}_{#1}}\xspace}
\newcommand{\reportr}[1]{\ensuremath{\widetilde{r}_{#1}}\xspace}
\newcommand{\Drop}{\textsf{(Drop)}\xspace}
\newcommand{\NPR}{\textsf{(NPR)}\xspace}

\ifauthornote
\newcommand{\elaine}[1]{{\color{red}[elaine: #1]}\xspace}
\else
\newcommand{\elaine}[1]{}
\fi

\title{\name: A Provably MEV-Resilient Automated Market Maker via Surplus Redistribution\thanks{The order of authorship is randomized. The random order was generated using the
American Economic Association's Author Randomization Tool, and can be verified with
confirmation code \texttt{6SqRwY10BsKd} at
\url{https://www.aeaweb.org/journals/policies/random-author-order/search?RandomAuthorsSearch\%5Bsearch\%5D=6SqRwY10BsKd}.}}
\author{
Elaine Shi\thanks{Carnegie Mellon University and Oblivious Labs.
\texttt{elaineshi@cmu.edu}. Supported through a grant from LayerZero via Oblivious Labs.}
\and
Mengqian Zhang\thanks{Carnegie Mellon University.
\texttt{mengqianzhang@cmu.edu}. Supported through a grant from LayerZero via CMU, and a
CyLab Secure Blockchain Initiative grant.}
\and
Hao Chung\thanks{LayerZero Labs. \texttt{hao.chung@layerzerolabs.org}}
\and
Yuhao Li\thanks{Columbia University. \texttt{yuhaoli@cs.columbia.edu}}
}
\date{}

\begin{document}
\begin{titlepage}
\maketitle
\thispagestyle{empty}

\begin{abstract}
Miner extractable value (MEV) in automated market makers allows block builders to profit from transaction ordering and injected trades, imposing costs on users and contributing to builder centralization. We introduce \name (Optimal Truthful Trading with Excess Redistribution), a two-asset batch AMM that achieves provable MEV resilience when the consensus layer provides censorship resilience and block space is uncongested. In particular, \name makes truthful behavior a dominant strategy for both users and builders. Consequently, a builder cannot profit from strategic deviations, including reordering bids or injecting sybil bids. These guarantees continue to hold even when the builder is itself a trader with intrinsic value.
Moreover, we show that our mechanism maximizes social welfare, in a strong sense, within a natural class of mechanisms satisfying the desired game-theoretic properties.

To achieve these guarantees, we introduce a new paradigm called \emph{surplus redistribution}, 
which provably prevents residual surplus from being captured as MEV by redirecting it to the broader community. Specifically, the pool's output tokens need not be distributed entirely among the users in the current batch. Instead, any residual surplus may be transferred, for example, to a smart contract governed by the decentralized community. The accumulated surplus can subsequently be used to benefit community members in ways that preserve the mechanism's game-theoretic guarantees --- for example, by subsidizing traders' transaction fees, rewarding liquidity providers, or returning assets to the pool to reduce price impact and slippage for future traders.

Our approach relies on the underlying consensus layer to provide censorship resilience. We motivate the necessity of this assumption through an impossibility result showing that the desired game-theoretic guarantees become unattainable when the builder is additionally allowed to censor transactions. Thus, our results also provide a mathematically formal demonstration of how consensus-level security guarantees can fundamentally expand what is achievable at the application (i.e., smart-contract) layer. Finally, we establish additional results characterizing the builder fee structures compatible with our desired incentive guarantees.
\end{abstract}
\end{titlepage}

\tableofcontents
\clearpage

\section{Introduction}
\label{sec:intro}

MEV, short for Miner (or Maximum) Extractable Value, refers to the widespread
phenomenon whereby a block builder leverages its unilateral control over the
inclusion and ordering of transactions within a block to extract profit from
this privileged position. MEV is harmful in multiple respects. It not only
imposes additional costs on ordinary users, but also fosters opaque off-chain
markets in which block builders, traders, and specialized searchers compete to
identify and exploit MEV opportunities. These off-chain ecosystems, in turn,
create strong forces toward centralization. Indeed, a recent measurement study
by Yang et al.~\cite{buildercentral} found that the top two block builders
produce more than 85\% of Ethereum blocks, severely undermining the
decentralized vision of blockchain systems.

In this work, we focus on Automated Market Makers (AMMs), one of the most
widely used applications in decentralized finance (DeFi) and a major source of
MEV. Numerous prior works~\cite{highfreq-dex,park2023conceptual,theorymev} have demonstrated how a
strategic block builder can exploit its control over transaction inclusion and
ordering to launch front-running and back-running attacks, often collectively
referred to as {\it sandwich attacks}. Through such attacks, the block builder
can extract essentially risk-free profits at the expense of ordinary users.

In a typical two-asset AMM, users trade one of two assets, denoted $X$ and $Y$,
against a smart contract, which we henceforth also refer to as the {\it pool}.
Initially, the state of the pool is denoted by $(x_0,y_0)$, meaning that the
pool holds $x_0$ units of $X$ and $y_0$ units of $Y$. The exchange rate for
each trade is determined by a predefined pricing curve. For example, the widely
adopted constant-product rule requires
\[
    (x_0-x)(y_0+y)=C
\]
for some positive constant $C$. Thus, starting from state $(x_0,y_0)$, a user
who contributes $y$ units of $Y$ to the pool receives $x$ units of $X$.
Following the trade, the pool transitions to state $(x_0-x,y_0+y)$, 
preserving the invariant that the product of the reserves remains equal to
$C$.

\subsection{Our Results and Contributions}

Our main contribution is a new two-asset AMM mechanism called
\name (short for ``Optimal Truthful Trading with Excess Redistribution'')
that achieves provable MEV resilience, assuming that the underlying consensus
protocol provides censorship resilience and that block space is uncongested
(i.e., there is ample capacity to include all relevant transactions).
Specifically, compared with today's AMMs, our mechanism satisfies several
desirable properties:

\begin{enumerate}[leftmargin=6mm,itemsep=1pt]

\item
{\bf Provable MEV resilience.}
We prove that a block builder (henceforth simply a {\it builder})
cannot benefit from any strategic manipulation within its control,
including misreporting its true type, manipulating transaction ordering,
or creating pseudonymous identities to inject sybil bids.
Since we assume that the underlying consensus protocol provides censorship
resilience, censoring transactions lies outside the builder's strategy space.
Moreover, this guarantee continues to hold even when the builder is itself
a user with an intrinsic demand to trade; we refer to such an entity as a
{\it builder-as-user}.
In particular, the profit-maximizing strategy for a builder or
builder-as-user is to truthfully report its type (if it has an intrinsic
demand to trade) and refrain from injecting sybil bids.
Our mechanism is inherently {\it order-independent}: its outcome
does not depend on the ordering of transactions within a block.
Consequently, the builder gains no advantage from reordering transactions
and may order them arbitrarily without affecting the outcome.

\item
{\bf Truthful reporting and sybil-proofness.}
We prove that every user's profit-maximizing strategy is to truthfully
report its type, including both its valuation for the asset being purchased
and its budget, where both quantities may be denominated in units of the
other asset. Furthermore, no user can benefit by creating pseudonymous
identities and submitting additional bids through them---a property we
refer to as {\it sybil-proofness}.

This stands in sharp contrast to today's AMMs. Under common MEV attacks,
most notably sandwich attacks~\cite{highfreq-dex,park2023conceptual,theorymev}, users may be forced to trade at
the worst possible price permitted by their reported slippage
tolerance. Consequently, users face an inherently strategic choice when
setting their slippage tolerance: it must be sufficiently permissive for
the transaction to execute, yet sufficiently restrictive to limit their
exposure to MEV extraction.

Today's users may also have an incentive to delay submitting their bids,
hoping to first observe others' bids and then choose their own actions
strategically. By contrast, the incentive-compatibility guarantees of
our mechanism hold {\it ex post}: even after observing all other bids,
each user, builder, or builder-as-user still maximizes its own profit
by behaving truthfully. Thus, observing others' bids before submitting
one's own confers no strategic advantage, eliminating the incentive for
last-minute bid submission.

\item
{\bf Social welfare optimization.}
Since truthful behavior constitutes an equilibrium for all participants,
including users, builders, and builders-as-users, we show that our AMM
maximizes social welfare in a strong sense at equilibrium.

By contrast, today's AMMs generally incentivize both users and builders
to behave strategically, and their resulting equilibrium behavior is
poorly understood. Without a well-characterized equilibrium, it is
difficult for a mechanism designer to provide rigorous guarantees about
economic objectives such as social welfare.
\end{enumerate}

Last but not least, our new AMM is intuitive, conceptually simple, and
computationally efficient, making it a promising candidate for practical
implementation and deployment.

\paragraph{Surplus redistribution as a new paradigm.}
To simultaneously achieve the above desiderata, we introduce a new paradigm
that we call {\it surplus redistribution}.
Most popular AMMs today~\cite{uniswap}, as well as a large body of the
literature~\cite{credible-ex,ammmechdesign,ammtrilemma}, impose two requirements:
(1) all tokens output by the pool must be completely distributed to the users,
and no surplus is allowed to remain; 
and
(2) the resulting state of the pool, after each trade or batch of trades,
must land on the prescribed pricing curve (e.g., the constant-product curve).

We depart from this conventional model. While we still require
the end state of the pool to conform to the pricing curve, 
we allow only a portion of the tokens output by the pool to be  
paid to the users. 
Rather than allowing the remaining surplus to be captured as MEV by privileged
participants, our mechanism redistributes it for the benefit of the broader
community.
For example, the surplus may be transferred to a smart contract governed
through a decentralized mechanism and subsequently allocated in various ways
that benefit the community, such as:

\begin{enumerate}[label=\textnormal{(\arabic*)},leftmargin=8mm,itemsep=1pt]
\item
as additional rewards for liquidity providers, thereby incentivizing greater
liquidity provision and, in turn, reducing price impact and slippage for users;

\item
back to the pool itself, thereby deepening liquidity and reducing the cost of
future trades;

\item
to subsidize or offset users' transaction fees; and/or

\item
to the underlying layer-one blockchain or its stakeholders, where it can
support the continued development and maintenance of the blockchain,
or provide economic incentives that strengthen the security and robustness
of the broader ecosystem.
\end{enumerate}

We summarize 
our main contribution in the following theorem:

\begin{theorem}[MEV-resilient AMM mechanism]
Suppose that the underlying consensus layer ensures 
censorship resilience, and that we allow surplus redistribution. Then, there exists 
a computationally efficient AMM mechanism
that incentivizes truthful reporting, achieves strategy-proofness
for a block builder who may or may not have intrinsic demand, 
and maximizes social welfare 
in a strong sense (to be formally defined later).
\end{theorem}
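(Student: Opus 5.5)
We prove the theorem by constructing the mechanism and then verifying each property in turn. The approach is a uniform-price batch auction against the pool's pricing curve, with surplus redistribution.

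\textbf{The mechanism.} Each bid is a type: a buy side (buy $X$ paying $Y$, or the reverse), a per-unit valuation $v_i$, and a budget $q_i$. All bids in the block are aggregated into monotone demand and supply curves. The aggregate net flow $\netYSale$ (or $\netXSale$) is then crossed against the pool's curve starting at $(x_0,y_0)$. This yields a clearing price and an end state $(x^*,y^*)$ on the curve. The crossing is computed by a one-dimensional monotone search, which gives computational efficiency. Winners trade at a single uniform price derived from the crossing. Any tokens the pool releases beyond what the uniform price owes the winners are routed to the community contract rather than to any batch participant.

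Because the outcome depends only on the multiset of bids, order-independence is immediate. Reordering therefore gives the builder nothing.

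\textbf{Incentive properties.}
\begin{itemize}
\item \emph{Truthfulness.} For users, we show that the allocation to bidder $i$ is monotone in $(v_i,q_i)$. Given the others' bids, $i$ faces a threshold price, a Myerson-style critical value, that its own report cannot lower. The natural crossing point can be moved by $i$'s own bid. We handle this by pricing each bidder at the price that would clear with $i$'s marginal demand excluded, or at the worst price consistent with the curve. The pool-side surplus created by this gap is exactly what is redirected to the community. This is the step where surplus redistribution is essential.
\item \emph{Sybil-proofness.} We show that splitting one type into several bids, or adding fake bids, only shifts the clearing price against the splitter. Fake bids either lose, or win at a price no better than their value. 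Since surplus never flows back to participants, the sybil cannot recoup its losses.
\item \emph{Builder strategy-proofness.} Censorship resilience removes censoring from the builder's strategy space. Order-independence removes reordering. The builder's only remaining deviations are misreporting its own type and injecting bids, and these are exactly the user deviations already handled. So the same argument applies to a builder-as-user. A pure builder gets no share of the surplus, so its utility is constant under injection, or decreases if an injected bid wins at a loss.
\end{itemize}

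\textbf{Welfare.} We first fix the class of mechanisms that are order-independent, truthful, sybil-proof, builder-strategy-proof, and end on the curve. We then argue that any such mechanism must allocate to a subset of bidders that are dominated by the critical-price winners. Combining this with the monotonicity characterization shows that our allocation is pointwise welfare-maximal within the class. ``Strong sense'' means welfare at least as high for every bid profile.

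\textbf{Main obstacle.} The hardest step is the joint requirement of sybil-proofness and builder-as-user strategy-proofness. A builder-as-user can inject sybil bids to push the clearing price in favor of its real bid, and these fake bids need not lose money individually. We handle this as follows:
\begin{enumerate}
\item Define the price each winner pays as a function only of the others' aggregate curve.
\item Show that this function is monotone under adding bids, so that adding sybil demand only worsens the price faced on the same side. Sybil supply on the opposite side must actually trade with the pool at a price at least its value, so it costs the deviator at least what it gains.
\end{enumerate}
This amounts to a careful accounting in which the deviator's gain on the real bid is bounded by the sybil's loss, and proving that inequality is where we expect most of the work.
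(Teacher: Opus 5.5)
\textbf{The pricing rule is the gap.} Your reduction of builder strategy-proofness to UIC via order-independence is the same as the paper's (Fact~\ref{fct:uicimpliesbic}). The problem is upstream: the pricing rule is never pinned down, and it fails under both readings you offer. Because a bidder's own volume moves the pool price, no single per-unit price can serve as a Myerson threshold. Take a lone sell-$Y$ user with $v=0$ and budget $q$ facing a constant-product pool. If it is paid the final marginal price (your ``worst price consistent with the curve''), it receives $qF'(q)=kq/(y_0+q)^2$, which strictly decreases for $q>y_0$, so understating its budget pays. If it is paid the price that clears without it, it receives $\sigma_0$ per unit. But $\sigma_0q>F(q)=x_0q/(y_0+q)$ for every $q>0$, so that payment is infeasible; burning can discard tokens but cannot create them. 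What is needed is a quantity-dependent payment equal to the integral of the residual marginal price, i.e.\ the Clarke-pivot compensation $x_i^*=\reportv{i}\ystar{i}+W_F(S)-W_F(S\setminus\{i\})$ that the paper uses. Budget misreports are then handled not by monotonicity in the two-dimensional type $(v,q)$, which by itself does not yield truthfulness. Instead, any outcome with net sale at most $q$ is a feasible allocation for the truthful report, and a net sale above $q$ gives utility $-\infty$.

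\textbf{Sybils, two-sidedness, and welfare.} ``Splitting only shifts the price against the splitter'' does not hold in general, since VCG-style payments are not sybil-proof. The paper needs submodularity of $W_F$, obtained from the layer-cake formula $W_F(S)=\int_0^{\sigma_0}\min\{D_S(t),K_F(t)\}\,dt$, so that the sybils' summed marginal contributions never exceed their joint one. Your step on opposite-direction sybils also misdescribes what happens. In a workable design, eligible opposite-side orders do not trade with the pool at their value. They execute at the spot price against the dominant side, and they genuinely help the deviator's real bid by lengthening the spot segment of the dominant curve $\widetilde F_Y^{M}$. Bounding that gain requires three things: the crossing-shift inequality $\welfareGainY(M+\alpha;v,\widehat q)\le\welfareGainY(M;v,q)+(\sigma_0-v)\alpha$, the gross-budget relaxation $S_Y\le q+R_Y$, and a separate case for sybils that flip which side dominates. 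All of this presupposes structure your plan lacks: dropping asks worse than the initial spot price, executing the minority side at spot, and auctioning the dominant side on the extended curve. Finally, your welfare claim fixes neither the numeraire nor whether burnt surplus counts, omits Pareto optimality for eligible bids, and asserts pointwise maximality over a class you never analyze. The paper proves maximality of dominant-side welfare only among mechanisms satisfying \Drop and \NPR. Those conditions are what force minority execution at the spot price (Lemma~\ref{lem:natural-minority-structure}), after which the monotonicity $\widetilde F^{M}\le\widetilde F^{M^*}$ for $M\le M^*$ closes the comparison.
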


\paragraph{Mitigating the externalities of an already centralized builder economy.}
We argue that deploying our new AMM {\it alongside, and in direct competition with},
legacy AMMs can significantly mitigate the negative externalities arising from
today's highly centralized builder economy, in the following ways:

\begin{itemize}[leftmargin=6mm,itemsep=1pt]

\item
{\bf Eliminating builders' privileged control over economic surplus.}
Today, MEV extraction has contributed to a severe centralization of the builder
economy, which has effectively become a duopoly~\cite{buildercentral}.
This concentration allows a small number of dominant builders to capture a
disproportionate share of the economic surplus generated through MEV.
By contrast, our new AMM prevents this surplus from being captured as MEV and
instead redistributes it to the broader decentralized ecosystem.

\item
{\bf Pressuring builders to offer competitive pricing.}
The first three avenues for surplus redistribution described above can all,
directly or indirectly, reduce users' effective trading costs, making our AMM
economically competitive with legacy AMMs.
Under today's AMMs, users targeted by MEV attacks effectively suffer from  
the worst execution price permitted by their reported slippage tolerance.
By contrast, in our new AMM, increases in the execution price arise naturally 
from competition among users, and provably never from MEV extraction.

Deploying our AMM alongside legacy AMMs therefore creates competitive pressure
on the existing builder economy. Builders 
may be compelled to offer users more favorable pricing,
effectively returning a greater share of the economic surplus to users and
liquidity providers rather than capturing it as MEV.

\end{itemize}

\paragraph{Necessity of the censorship-resilience assumption.}
To establish the desirable properties of our AMM, we assume that the
underlying consensus layer provides censorship resilience.
We argue that this assumption is both necessary and reasonable.
We first establish its necessity through the following impossibility result:
if block space is finite and the builder has the ability to censor bids,
then no non-trivial AMM can simultaneously guarantee truthful reporting
for individual users and strategy-proofness for a builder-as-user.

\begin{theorem}[Necessity of censorship resilience]
Suppose that block space is finite and that the block builder can censor bids.
Then, no non-trivial AMM mechanism can simultaneously incentivize truthful
reporting by every individual user and be strategy-proof for a
builder-as-user.
\label{thm:intro-censor-imp}
\end{theorem}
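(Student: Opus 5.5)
The plan is to adapt the classic impossibility argument for transaction fee mechanisms with finite block size (in the style of Chung--Shi) to the two-asset AMM setting. Here the builder-as-user can exclude bids, and the mechanism's outcome depends only on the included set of bids.

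\emph{Step 1 (fix the model).} First I would state what ``non-trivial'' means. I take it to mean that there is some bid vector in which at least one user trades a positive amount, for example buys a positive quantity of $X$ with $Y$. Because block space is finite, say capacity $k$, the builder chooses which at most $k$ bids enter the block. Truthfulness for an individual user (UIC) is taken as a dominant-strategy condition: for every fixed set of other included bids, reporting the true type maximizes the user's utility. Strategy-proofness for a builder-as-user means the builder cannot raise its joint utility by any deviation in its strategy space. That utility is its trading utility plus any builder revenue, and the deviations include choosing which bids to include and misreporting its own type.

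\emph{Step 2 (UIC gives a Myerson-type structure).} Fix the other bids. UIC forces the quantity a user receives to be monotone in its reported valuation. It also forces the payment to be pinned down by the allocation rule through a threshold/payment-identity argument. The budget dimension can be fixed at a large value so that the problem reduces to a one-dimensional valuation. In particular, whenever a user's allocation is positive and depends nontrivially on its bid, its truthful utility is strictly positive, that is, it earns an information rent. I would also invoke the conservation constraint: whatever the pool outputs either goes to users or is redistributed, so the builder-as-user cannot be credited with surplus it does not receive.

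\emph{Step 3 (the deviation; this is the main obstacle).} Now fill the block with more than $k$ competing bids, and let the builder-as-user have an intrinsic demand. The builder has two candidate deviations. The first is to censor high bidders and include only low-valuation bids, which reduces competition and improves the builder's own execution price. The second, when the builder is not trading, is to swap which users are included. I would argue as follows. Take a configuration where the allocation rule gives the builder's own bid a positive quantity at a price that depends on the other included bids. Non-triviality together with monotonicity gives such a configuration, possibly after a hybrid argument that moves from a trading instance toward one involving the builder. Then replacing a high competing bid by a very low one, which is permitted because capacity is exhausted and the builder selects the included set, strictly lowers the builder's threshold payment and leaves its allocation weakly higher. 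That is a profitable deviation. The delicate part is the opposite case, where the builder's utility does not depend on the competitors. There I would show that the allocation to others must then be independent of bids beyond a point, and use an ``exclude a user and impersonate it'' argument to contradict UIC: the builder inserts a fake bid with the excluded user's type and collects that user's positive information rent from Step 2.

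\emph{Step 4 (conclude).} Combining the two cases, either UIC fails or the builder-as-user has a profitable censoring deviation, unless every allocation is zero. That contradicts non-triviality. I expect Step 3 to be the hardest, specifically handling the AMM's divisible allocations and the two-dimensional types, so I would first prove the result for a single-direction $Y\to X$ trade with unbounded budgets and then embed the general mechanism into that case.
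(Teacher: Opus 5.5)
Your proposal has a genuine gap. The contradiction in Step~3 never materializes, because nothing in your argument forces the builder's honest utility to be small.

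The first branch rests on the claim that replacing a competing bid by a weaker one ``strictly lowers the builder's threshold payment.'' UIC constrains how a user's outcome varies with its \emph{own} report (monotone allocation plus the payment identity). It says nothing about how that user's critical value depends on which \emph{other} bids are included, since removing bids is not in a user's strategy space. Moreover, ``the builder's utility depends on competitors'' does not imply that some available censorship strictly helps: the prescribed inclusion may already give the builder its preferred set. In this seller setting the builder wants \emph{higher} compensation, so even the direction of your claimed effect is unclear.

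The second branch, ``exclude a user and impersonate it,'' is the right move, but you never show it is profitable. It contradicts builder strategy-proofness (not UIC, as you write) only if impersonation yields more than the builder's honest utility, and you give no reason why the honest utility is low. Your Step~2 conservation remark also does not rule out outcome-dependent builder fees, which could differ between the honest and the censoring behavior.

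The missing idea is an averaging argument that uses finite block space to bound total surplus, not merely to let the builder choose the included set. The paper proceeds as follows:
\begin{itemize}
\item Fix a block $S$ and a bid $j\in S$ with $T=\mathbb E[x_j(S)]>0$ and cap $q$.
\item Flood the pending pool with $S\setminus\{j\}$ plus $N$ identical identities of cap $q$ and tiny true cost $\varepsilon<T/(2q)$.
\item At most $k$ bids are included, so every realized block has total compensation at most $C=F(Q_S+kq)$, independently of $N$. By IR and averaging, some added identity $i^*$ has honest expected utility at most $C/N$.
\item If $i^*$ belongs to the builder, it censors everything except $S\setminus\{j\}$ and has $i^*$ submit $j$'s report. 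By anonymity the on-chain rule sees exactly $S$, so the builder gets $T-\varepsilon A\ge T-\varepsilon q>T/2$.
\item Taking $N>2C/T$ gives the contradiction. \Cref{thm:zero-builder-fee} ensures builder fees are zero in both scenarios, so both utilities are pure trading utilities.
\end{itemize}
No Myerson structure, threshold monotonicity, or case split is needed, so your Step~2 and the first branch of Step~3 can be dropped. In the impersonation step the builder does not collect ``that user's information rent.'' It collects $j$'s compensation $T$ at its own small cost $\varepsilon$, which is why $\varepsilon$ must be chosen small relative to $T/q$.
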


The censorship-resilience assumption is also well motivated, as numerous blockchain projects are actively pursuing censorship resilience at the consensus layer, most notably Ethereum through FOCIL~\cite{focil}. Beyond FOCIL, a variety of other approaches to achieving censorship resilience have been proposed~\cite{mcp,latencyCR}. More broadly, motivated in part by the urgency of mitigating MEV and its negative externalities, many blockchain projects are exploring enhancements to their consensus protocols that provide stronger guarantees over transaction inclusion and ordering. These efforts include censorship resilience~\cite{focil,mcp,latencyCR}, fair sequencing~\cite{decentral-seq,espresso-seq,orderfair00,orderfair01,orderfair02,orderfair03,orderfair04}, and private transaction submission~\cite{encryptmempool00,encryptmempool01,batchdec00,trx,batchdec01,batchdec02,batchdec03,batchdec04}.

Despite this growing effort at the consensus layer, relatively little is
understood about how such guarantees can expand the possibilities for
application-layer mechanism design from a game-theoretic perspective.
In particular, can stronger consensus-layer guarantees enable fundamentally
new mechanism-design feasibility results that would otherwise be impossible?

Our results provide a concrete affirmative answer to this question.
Together, our AMM construction and
\Cref{thm:intro-censor-imp} jointly establish a separation between consensus layers
with and without censorship resilience: the desired incentive guarantees are
achievable in the former setting, whereas no non-trivial AMM can achieve them
in the latter.
Thus, our results mathematically confirm the community's wide-spread 
intuition that censorship resilience
at the consensus layer can play a fundamental role in mitigating MEV at the
application layer.

\paragraph{Additional results.}
As a by-product of our analysis, we also characterize the permissible forms
of builder compensation when a portion of the surplus is allocated to the
builder. Specifically, we prove the following:

\begin{theorem}[Constraints on surplus returned to the builder]
For any AMM mechanism that incentivizes truthful reporting and is
strategy-proof for a builder-as-user, the amount of surplus returned to the
builder must be constant and independent of the outcome of the AMM.
\label{thm:intro-0-builder-rev}
\end{theorem}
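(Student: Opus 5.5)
We would prove \Cref{thm:intro-0-builder-rev} by combining user truthfulness with builder-as-user strategy-proofness. The idea is to show that the builder's revenue cannot depend on the bids at all. Throughout, fix any mechanism satisfying both properties. Write the builder's compensation as a function $b(\mathbf{v})$ of the vector of included bids; order-independence is not needed, since this is simply a function of the block contents. We will show $b$ is constant in three steps.

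\textbf{Step 1: the other users' bids cannot move $b$.} Consider a builder-as-user $j$ facing an arbitrary bid profile $\mathbf{v}_{-j}$ of others. Censorship resilience fixes the set of other bids. However, the builder may always inject a sybil bid, or may equivalently view itself as controlling one additional identity $i$. Suppose user $i$ (honest, truthful) changes its bid from $v_i$ to $v_i'$. Consider a coalition formed by the builder and a sybil identity that imitates $i$. Strategy-proofness of the builder says its utility (its trade utility plus $b$) is maximized by not injecting a sybil bid. Truthfulness for user $i$ says $i$'s own utility is unaffected in the direction that would allow a profitable deviation.

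\textbf{Step 2: the key argument.} We would take a builder that has no intrinsic demand (a pure builder). Its utility is then exactly $b(\cdot)$ plus the utility of any sybil bids it injects. First take profiles $\mathbf{v}$ and $\mathbf{v}\cup\{s\}$, where $s$ is a sybil bid. Strategy-proofness gives
\[
b(\mathbf{v}) \ge b(\mathbf{v}\cup\{s\}) + u_s(\mathbf{v}\cup\{s\}),
\]
where $u_s$ is the sybil's trade utility evaluated at the builder's zero value, so it is just a net payment.

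Next, apply the same inequality in reverse. When $s$ is a genuine user with that bid, it is present regardless of the builder. The builder could instead drop out of its intrinsic demand, or, for a builder-as-user whose true type equals $s$, misreport in a way that makes the bid behave as absent. The cleanest version is the following. Let a builder-as-user have true type $t$ equal to bid $s$. It can either report $s$ truthfully, or report a ``null'' type that never trades (zero budget). Strategy-proofness in both directions, taken over types $t$ for which truthful reporting versus the null report are each optimal, forces
\[
b(\mathbf{v}\cup\{s\}) + u_t = b(\mathbf{v}\cup\{\text{null}\}).
\]
Next we need $b(\mathbf{v}\cup\{\text{null}\}) = b(\mathbf{v})$. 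The null bid is a no-op, which gives this equality. For a builder whose true type is null, it also gives $b(\mathbf{v})\ge b(\mathbf{v}\cup\{s\})+u_{\text{null}}(s)$. Combining this with the case of a builder whose true type is $s$ yields $b(\mathbf{v}\cup\{s\})=b(\mathbf{v})$ for the $s$ at which the builder's trade utility from $s$ is zero. We extend this to all $s$ by varying the builder's valuation: user truthfulness, via a Myerson-style payment identity, pins $u_t$ to an integral of allocation, which is independent of $b$.

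\textbf{Step 3: conclusion and main obstacle.} Iterating the add-one-bid invariance from the empty block shows $b$ is constant across all profiles. The main obstacle is the second direction in Step 2. Here we must show the builder cannot profitably \emph{avoid} a bid it would otherwise have as a genuine user. This requires carefully using that the builder-as-user's trade outcome under truthful reporting coincides with that of an ordinary user (by user-IC and the Myerson characterization). Only then do the builder's utility differences reduce exactly to differences in $b$.
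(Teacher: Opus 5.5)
Your route (add-one-bid invariance, iterated from the empty block) has a genuine gap at its central step. The displayed equality $b(\mathbf{v}\cup\{s\})+u_t=b(\mathbf{v}\cup\{\text{null}\})$ does not follow from strategy-proofness. A builder of type $t$ only gives ``$\ge$'': bidding truthfully beats abstaining. The equality is in fact false for legitimate mechanisms. For the paper's own mechanism, where $b\equiv 0$, it would force every truthful user's trade utility to vanish.

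The reverse inequality you hope to get from a no-demand (``null'') builder is vacuous whenever the injected bid sells a positive amount of $Y$. In this model, a zero-budget type that net-sells $Y$ has utility $-\infty$. So what you can actually derive is invariance only for bids that receive zero allocation.

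The extension ``to all $s$ by varying the valuation'' is where all the content lies, and it is not carried out. The Myerson identity pins a truthful user's utility only up to an additive constant. Its correct consequence here is that $\bar x_i$ and $\bar x_i+b$ both truthfully implement the same allocation $\bar y_i$, so $b$ cannot depend on coordinate $i$'s ask. But that gives no bridge to the zero-allocation case. A general mechanism need not have any ask at which the bid's allocation is exactly zero: individual rationality and feasibility only force $\bar y_i(v)\to 0$ as $v\to\infty$. So your iteration from the empty block does not go through without an extra assumption such as dropping ineligible bids. Step~1 does not establish anything concrete.

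The missing idea is to avoid changing the set of bids at all. Keep the number of bids and their budgets fixed. Then use the per-coordinate payment-uniqueness argument to move every ask, one coordinate at a time, to $\sigma_0$ without changing the expected builder fee $\bar\mu$ (your $b$).

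At this anchor profile, individual rationality gives $\sum_i\bar x_i\ge\sigma_0\bar Y$. Feasibility (via Jensen for randomized mechanisms) and the tangent bound $F(y)\le\sigma_0 y$ give $\sum_i\bar x_i+\bar\mu+\bar\beta\le F(\bar Y)\le\sigma_0\bar Y$. Hence $\bar\mu+\bar\beta\le 0$, and so $\bar\mu=0$. This is the paper's proof, and it yields the stronger conclusion that the builder fee is identically zero, not merely constant.
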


Interestingly, the proofs of both
\Cref{thm:intro-censor-imp} and
\Cref{thm:intro-0-builder-rev}
draw on techniques from the recent literature on transaction fee mechanisms
(TFMs)~\cite{roughgardeneip1559-ec,foundation-tfm}.
Our results thus uncover a useful connection between TFM and AMM mechanism
design, demonstrating how techniques and insights developed in the study of
one can inform the other.

Last but not the least, we also prove that allowing surplus redistribution is essential
for obtaining a ``dream'' AMM mechanism
that simultaneously satisfies our desired incentive compatibility and 
efficiency guarantees. 

\begin{theorem}[The necessity of surplus redistribution]
If we do not allow surplus redistribution, then 
no deterministic AMM mechanism can simultaneously 
incentivize truthful reporting and achieve Pareto optimality for orders
with ``reasonable'' asks (to be formally defined later). 
\label{thm:intro-necessity-burn}
\end{theorem}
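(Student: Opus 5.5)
The plan is to exhibit a small family of instances and show that truthfulness, Pareto optimality, and exact distribution of the pool's output (no surplus redistribution) cannot hold together. The outcome on any batch is pinned down by three things: the pool ends on the pricing curve, all output goes to the users, and Pareto optimality holds for orders with reasonable asks. Truthfulness then forces a price-taking structure that these constraints cannot accommodate.

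I would work on the $Y$ side only: users contribute $Y$ and receive $X$, and each user has a per-unit valuation $v_i$ for $X$ and a budget. Start with a single user. If the pool moves along the constant-product curve and all $X$ goes to the user, the user pays the curve's cumulative price. Pareto optimality with a reasonable ask forces the trade to run exactly until the marginal curve price reaches $v_i$, whenever the budget allows. I would then apply a Myerson-style characterization to the one-dimensional valuation report (budget held non-binding). Truthfulness requires the allocation to be monotone and the payment to equal the threshold integral
\[
p(v)=v\,x(v)-\int_0^v x(t)\,dt.
\]
For a single user this holds exactly for the curve payment, so one bidder gives no contradiction.

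The contradiction comes from two users with valuations $v_1>v_2$, both reasonable. Pareto optimality (no surplus, pool on the curve) dictates which pool states are feasible and that the lower-valued user's marginal price must be at least the final marginal curve price. Taking user~$1$'s threshold payment from the Myerson formula, I would compare it with the total $Y$ the pool must receive to move from $(x_0,y_0)$ to the end state. Summing the two users' threshold payments, I expect them to fall strictly short of the curve's required $Y$ input, or else exceed it. The reason is that each user's payment is computed holding the other fixed, much as VCG payments fail to balance the budget. Since surplus redistribution is disallowed, the gap cannot be absorbed. So either some user pays more than the threshold and truthfulness fails, or the pool leaves the curve.

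The main obstacle is making this budget-imbalance step rigorous for a general deterministic mechanism rather than a specific one. Pareto optimality must be strong enough to fix the allocation, or at least the total quantity traded, as a function of reports. I would first try choosing valuations so that the efficient total trade is unique, for instance where the final marginal price lands between $v_2$ and $v_1$ and only user~$1$ trades, with user~$2$ marginal. Then user~$1$'s threshold is determined by the point at which user~$2$ would start to trade, and it differs from the pool's average curve price by an explicit positive amount. Finally, I would check that budgets can be set non-binding, so that these asks count as ``reasonable.''
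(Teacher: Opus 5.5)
Your route can be completed, but the step you flag as the main obstacle is missing an ingredient that the argument cannot do without.

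\textbf{Closing the step.} Use your parametrization: buyers of $X$ with values $v_i$ in $Y$ per $X$; $C:=F^{-1}$ is the $Y$-cost of extracting $X$, so $C'_+(0)=\rho_0$; and $x^{\mathrm{solo}}(t)$ is the quantity at which marginal cost reaches $t$. Fix user~2's report at $v_2$, with $v_1>v_2>\rho_0$ and nonbinding budgets.
\begin{itemize}
\item For $\rho_0<t<v_2$, Pareto optimality gives $x_1(t)=0$; otherwise the underfilled, higher-value user~2 buys from user~1 at a price in $(t,v_2)$.
\item For ineligible reports $t\le\rho_0$, where Pareto optimality says nothing, monotonicity extends $x_1(t)=0$.
\item For $t>v_2$, Pareto optimality gives $x_2=0$ and $x_1(t)=x^{\mathrm{solo}}(t)$.
\end{itemize}
The payment identity then yields $u_1(v_1)=u_1(0)+\int_{v_2}^{v_1}x^{\mathrm{solo}}(t)\,dt$. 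Zero burn instead forces user~1 to pay exactly $C(x^{\mathrm{solo}}(v_1))$, i.e.\ $u_1(v_1)=\int_{\rho_0}^{v_1}x^{\mathrm{solo}}(t)\,dt$. The discrepancy is therefore $\int_{\rho_0}^{v_2}x^{\mathrm{solo}}(t)\,dt-u_1(0)$.

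So there is no ``short or exceed'' ambiguity. The truthful payment exceeds the curve cost by the excluded user's stand-alone surplus, which is the VCG externality that \name burns.

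\textbf{The missing ingredient.} The contradiction requires $u_1(0)=0$, and your formula $p(v)=vx(v)-\int_0^v x$ assumes it silently. This is not automatic. Suppose user~1 receives a rebate of exactly $\int_{\rho_0}^{v_2}x^{\mathrm{solo}}(t)\,dt$ whenever it is shut out, financed by user~2. That rebate would absorb the whole gap within your one-parameter family. You must therefore invoke:
\begin{itemize}
\item no free lunch, Fact~\ref{fct:uic-implies-nfl}, to get zero transfer at zero allocation, exactly as the paper does. It follows from sybil-UIC plus the zero-budget normalization, which is an explicit hypothesis of Theorem~\ref{thm:zero-burn-impossibility};
\item zero builder fees, either as the model's standing assumption or via Theorem~\ref{thm:zero-builder-fee}, so the gap cannot be paid to the builder.
\end{itemize}
Two smaller corrections: what makes the gap positive is \emph{strict} eligibility $v_2>\rho_0$, not large budgets; and nothing specific to constant product is used, so the argument applies to any curve satisfying (A1)--(A4).

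\textbf{Comparison with the paper.} With these additions your argument is a valid proof by a genuinely different route. The paper avoids the payment identity entirely. It gives both users the same cap $Q$ with $F'_+(Q)<F(Q)/Q$. It then shows directly that the shut-out user, whose ask lies in $(F'_+(Q),F(Q)/Q)$, can report an ask below $F'_+(Q)$. It thereby becomes the sole supplier of exactly $Q$, since the cap binds, and collects all of $F(Q)$ at the average rate. That needs only two profiles and no a.e.\ or continuity bookkeeping. Your envelope computation is the integrated form of the same deviation: a type just below $v_2$ mimics a report just above it, and so buys at the average rather than the marginal price. Your version has to pin down the allocation over a continuum of reports. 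In exchange, it quantifies exactly how much surplus any UIC, Pareto-optimal mechanism must dispose of on such profiles.
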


\Cref{thm:intro-necessity-burn} also explains 
why recent works on AMM mechanism design
that do not allow surplus redistribution~\cite{ammmechdesign,ammtrilemma}
cannot achieve reasonable notions of social welfare maximization. 

We stress, however, 
that we introduce surplus redistribution not merely as a technical
device to circumvent the impossibility of
\Cref{thm:intro-necessity-burn}, but more importantly,
as a desirable paradigm itself to provably defend against
MEV and substantially reduce the negative externalities
of the already centralized builder economy~\cite{buildercentral}.  

Both 
\Cref{thm:intro-0-builder-rev} and \Cref{thm:intro-necessity-burn}
hold in a strong sense, even when the underlying consensus
offers strong properties including censorship resilience and fair sequencing.

\subsection{Comparison with Existing Approaches of MEV Mitigation} 
\paragraph{Capturing MEV through builder auction.}
A line of work~\cite{drake2023mevburn,domothy2022burningmev,timeboost} has explored capturing MEV through auctions for block-building or transaction-ordering rights. Such auction-based approaches have also been deployed by major blockchain ecosystems, including Ethereum's Proposer-Builder Separation (PBS) framework~\cite{pbs} and Arbitrum's Timeboost~\cite{timeboost}. Broadly, these approaches allocate privileged block-building or transaction-ordering rights through competitive bidding, allowing some of the associated MEV to be captured by the protocol or returned to the broader community. Unfortunately, such approaches may exacerbate builder centralization by favoring participants with superior access to high-value order flow or other MEV opportunities, thereby disadvantaging smaller builders with fewer resources or customers~\cite{buildercentral,burian2024mevcapturedecentralizationexecution}. As the builder economy becomes increasingly concentrated, diminished competition may in turn allow dominant builders to exercise market power, extract monopoly rents, and retain a larger share of the MEV, ultimately undermining the goal of returning MEV to the broader community.

\paragraph{Application-specific MEV mitigation or recovery at the
consensus layer.}
Several academic works, as well as discussions in the blockchain community,
have proposed implementing application-specific MEV mitigation or recovery
directly at the consensus layer.
For example, Ferreira and Parkes~\cite{credible-ex} propose enforcing
application-specific sequencing rules at the consensus layer that restrict
how a block builder may order transactions, thereby limiting its ability to
extract MEV.
In particular, they design sequencing rules for AMM contracts that provide
weak but provable guarantees against MEV extraction.
Beyond enforcing verifiable sequencing rules, 
the cryptocurrency community%
 have also informally
explored the possibility of estimating the amount of MEV extracted by a
builder and requiring the builder to return, or pay as a tax, an amount
proportional to the estimated MEV to the underlying blockchain.

A fundamental drawback of this general approach is that it pushes
application-specific logic into the consensus layer, blurring the
architectural boundary between consensus and applications.
As the ecosystem grows to support an increasingly diverse collection of
applications, incorporating application-specific MEV rules into consensus
could substantially increase the complexity of the consensus layer.
In the extreme, the consensus layer would need to understand increasingly
rich aspects of application semantics in order to determine which builder
behaviors are permissible.

Consensus-level MEV recovery faces an additional challenge: it requires the
consensus layer to identify and quantify MEV across potentially arbitrary
smart contracts.
Such an approach may be inherently difficult to sustain in a permissionless
environment, where developers can continually deploy new or deliberately
obfuscated contracts.
Each new application may require the consensus protocol or its maintainers to
analyze the contract's semantics and devise an application-specific method for
detecting and quantifying its MEV.
This creates an inherently reactive dynamic: as new strategies and contracts
emerge, the consensus layer must continually adapt its MEV-detection
mechanisms to keep pace.

Our approach instead preserves a clean separation of concerns.
We ask the consensus layer to provide only generic, application-agnostic
primitives---most importantly, censorship resilience---while leaving
application-specific incentive design to the application layer.
Our results demonstrate that even such a generic consensus-layer guarantee can
be sufficient to enable strong, provable MEV resilience at the application
layer.

The approach of Ferreira and Parkes~\cite{credible-ex} also faces additional
limitations specific to their model.
Their framework requires trades to be executed sequentially, with the pool
state satisfying the prescribed pricing curve after every individual trade.
Under this restriction, they establish strong impossibility results.
In particular, even arbitrage resilience---a guarantee strictly weaker than
the incentive-compatibility properties we seek---is impossible in their
model.
They therefore consider a substantially weaker MEV-resilience notion:
roughly speaking, if the builder earns risk-free profit, then each user must
receive an execution price at least as favorable as the price it would have
received had its order been the only order in the block.

\subsection{Additional Related Work} 

\paragraph{Comparison with closely related work.}
Our work is directly inspired by that of Chan, Wu, and Shi~\cite{ammmechdesign}, who construct an AMM that processes trades in batches while ensuring that the pool's end state conforms to the pricing curve. Their AMM achieves the same game-theoretic guarantees as ours, but has two important limitations in comparison. First, their construction relies on a substantially stronger assumption about the consensus layer---namely, fair sequencing. Moreover, the idealized fair-sequencing model assumed in their work is stronger than existing guarantees in the literature~\cite{decentral-seq,espresso-seq,orderfair00,orderfair01,orderfair02,orderfair03,orderfair04}, 
which provide only approximate sequencing fairness. 
It remains unclear whether their idealized notion of fair sequencing can be realized in practice. Second, unlike our AMM, their mechanism does not achieve optimal social welfare.

\paragraph{Automated market makers.}
Milionis, Moallemi, and Roughgarden~\cite{myersonian} study the design of a market maker's demand curve with the dual objectives of maximizing profit and eliciting truthful reports from traders. Their focus is orthogonal to ours, as their model does not seek to capture or mitigate MEV. In particular, traders submit orders directly to the market maker, and their model
does not capture strategic transaction ordering and associated arbitrage opportunities, 
such as front-running or back-running other traders' orders.
Bartoletti et al.~\cite{maximizemev} study how to determine an adversary's optimal MEV-extraction strategy in an AMM. In particular, they develop an algorithmic procedure for strategically ordering user transactions and inserting adversarial trades so as to maximize the value extracted from users. Their work seeks to characterize and optimize MEV extraction, rather than to design mechanisms that prevent or mitigate MEV.

A line of work has explored batch trading, particularly batch clearing at a uniform price~\cite{batchamm00,batchamm01,cow,ramseyer2023augmenting}. However, uniform-price batch clearing alone does not guarantee incentive compatibility: when only a subset of eligible orders can be executed, selecting orders based on reported valuations may incentivize users to misreport. 
For example, Canidio and Fritsch~\cite{batchamm00,batchamm01} propose a uniform-price batch AMM with a modified potential function that achieves arbitrage resilience, but does not provide incentive compatibility even with an honest miner. Zhang et al.~\cite{batchnotic} likewise observe this limitation and study optimal strategic behavior under batch clearing. Other batch-trading mechanisms~\cite{fairbatchauction} similarly do not satisfy our notion of incentive compatibility.
A recent work by Li, Shi, and Zhang~\cite{ammtrilemma} establishes a somewhat counterintuitive tension between uniform pricing and incentive compatibility. Although uniform pricing eliminates internal arbitrage, they show that it is fundamentally incompatible with the stronger goal of incentive compatibility: in the absence of surplus redistribution, no efficient AMM mechanism can simultaneously achieve uniform pricing and incentive compatibility.

\paragraph{Other related work.}
Chitra et al.~\cite{mevredistr}
propose to redistribute MEV to stake-holders to enhance the blockchain's
economic security; however, their work does not suggest any secure mechanism
for capturing the MEV without leading
to builder centralization. 
In this sense, their work is orthogonal to ours. In particular, the surplus
accumulated through our \name mechanism can also be (in part) redistributed
to stake-holders as they suggest.
Zhang et al.~\cite{zhang2025rediswap} propose a mechanism that captures MEV in AMMs and redistributes it to users and liquidity providers. However, their mechanism does not incentivize
truthful reporting for users. Moreover, it relies on a stronger assumption about the underlying consensus layer: in addition to censorship resilience, the consensus layer must support private transaction submission~\cite{zhang2025rediswap}.

Hartline and Roughgarden~\cite{moneyburnmech} study routing, scheduling, anti-spam, and resource-allocation settings in which conventional monetary transfers are undesirable or technologically unavailable. They design mechanisms that instead require agents to destroy resources --- for example, by expending computation or accepting degraded service --- and use such resource destruction as an instrument for providing incentives. Although our work shares the high-level idea of burning, it differs fundamentally in both setting and objective. In our mechanism, the burnt surplus is redistributed to the broader community and therefore contributes to social welfare; in their framework, by contrast, burnt resources constitute a social loss.
Burning has also been used as a mechanism design tool
in the recent transaction fee mechanism literature~\cite{roughgardeneip1559-ec,foundation-tfm}. 

The elegant work of Yokoo et al.~\cite{falsename00} considered combinatorial auctions with a finite set of indivisible items. They showed that, under a suitable submodularity condition on the welfare function, the VCG auction is sybil-proof, even though VCG is not sybil-proof in general.
While our approach to achieving Sybil-proofness is reminiscent of that of Yokoo et al.~\cite{falsename00}, we stress that our auction setting differs substantially from theirs, and thus their proof
does not directly carry over to our setting.  
Specifically, 
in our setting, the tokens being traded are infinitely divisible, trades with the pool are governed by a pricing curve, and agents are subject to budget caps. Moreover, our Sybil strategy space is broader: Sybil identities may not only misreport both valuations and budgets, but may also submit bids in the opposite trading direction.

\section{Model}

Imagine a pool that holds two crypto-assets (also called tokens) named $X$ and $Y$, respectively.
Users can trade with the pool through two types
of orders, called sell-$Y$ and sell-$X$ orders, respectively.

\paragraph{\texorpdfstring{Sell-$Y$ orders.}{Sell-Y orders.}}
A real user $i$ of sell-$Y$ profile has type $(v_i,q_i)\in\R_{\ge0}^2$, indicating
that the user 
has a {\it budget} of $q_i \geq 0$ units of $Y$ to spend on purchasing $X$, 
and its reservation value of each unit of $Y$ is $v_i \geq 0$, measured in units of $X$.   
A sell-$Y$ identity reports a per-unit ask and a declared budget
$(\reportv{i},\reportq{i})\in\R_{\ge0}^2$.  Truthful reporting means
$(\reportv{i},\reportq{i})=(v_i,q_i)$.

\paragraph{\texorpdfstring{Sell-$X$ orders.}{Sell-X orders.}}
Symmetrically, a real user $j$ of sell-$X$ profile has type $(u_j,r_j)\in\R_{\ge0}^2$, where $r_j$ is
its budget in units of $X$, and $u_j$ is its
reservation value for one unit of $X$, measured in units of $Y$.  
A sell-$X$ identity reports
$(\reportu{j},\reportr{j})\in\R_{\ge0}^2$, and truthful reporting means
$(\reportu{j},\reportr{j})=(u_j,r_j)$.

\paragraph{AMM mechanism.}
An AMM mechanism receives as input a list of orders, each of either
sell-$Y$ or sell-$X$ type, and decides an outcome $(x_i, y_i)$ for each order:
\begin{itemize}[leftmargin=6mm,itemsep=1pt]
\item 
For a sell-$Y$ order $(\reportv{i}, \reportq{i})$, it means that the user
obtains $x_i$ units in $X$ in exchange for $y_i$
units of $Y$. 
It is required that $0 \leq y_i \leq \reportq{i}$.
\item 
For a sell-$X$ order, it means that the user
obtains $y_i$ units in $Y$ in exchange for $x_i$ 
units of $X$.  
It is required that $0 \leq x_i \leq \reportr{i}$.
\end{itemize}

Unless otherwise noted, we require that an AMM mechanism satisfy the following natural properties: 

\begin{itemize}[leftmargin=6mm,itemsep=1pt]
\item 
{\it Individual rationality.} 
Under truthful reporting, every user obtains nonnegative utility.
\item 
{\it Feasibility.}
Every token paid to an order is funded by tokens supplied by opposite-side
orders or by an AMM trade. No outside subsidy is required. Any surplus compensation
token is {\it burnt} (i.e., redistributed the decentralized community).
\end{itemize}

\begin{remark}
As formalized in \Cref{sec:strategy}, the builder, who determines the order
in which transactions appear in a block, is also a strategic player.
We assume that the AMM mechanism pays no fee to the builder.
\Cref{sec:builder-impossibility} justifies this assumption by showing that
our desired incentive guarantees preclude outcome-dependent builder fees.
This restriction does not rule out a fixed exogenous payment, funded, for
example, by surplus that the mechanism previously redistributed to the
community.  Provided that this payment is independent of the current
batch's outcome, it has no effect on our
game-theoretic analysis.
\end{remark}

\subsection{Strategy Space and Utility}
\label{sec:strategy}

\paragraph{Strategy space.}
A truthful user always reports its true type. 
A strategic user, however, can not only misrepresent its valuation and budget,
but also create an arbitrary finite number of identities,
and inject sybil bids. 
The declared valuation and budget associated with each sybil identity can be
arbitrary finite nonnegative reals.  Moreover, the bids submitted by a
strategic user need not be in the same direction as its true type.
For example, a sell-$Y$ user may submit sell-$X$ bids, and vice versa.

A strategic builder can arbitrarily order the bids in the block,  
and inject sybil bids. 
If the strategic builder also has intrinsic demand, then it
can also misreport its valuation and budget. 
Note that 
the mechanism observes only the declared valuations and budgets,
and it is not aware which identities belong to the same real user or builder-as-user.

\paragraph{Utility.}
Let $\Yside$ and $\Xside$ be the sets of identities that have sell-$Y$ and sell-$X$ profiles,
respectively. 
For the finite set $\playerids$ of identities controlled by one real user, define
aggregate token flows
\begin{equation}
\begin{aligned}
    S_Y(\playerids)&:=\sum_{i\in\playerids\cap\Yside} y_i,
    &R_X(\playerids)&:=\sum_{i\in\playerids\cap\Yside} x_i,\\
    S_X(\playerids)&:=\sum_{i\in\playerids\cap\Xside} x_i,
    &R_Y(\playerids)&:=\sum_{i\in\playerids\cap\Xside} y_i.
\end{aligned}
\label{eq:portfolio-flows}
\end{equation}
Here $S_T$ denotes the amount of token $T \in \{X, Y\}$ supplied by the user and $R_T$
denotes the amount of token $T$ received by the user.  Its net token changes
are
\begin{equation*}
    \Delta_X(\playerids):=R_X(\playerids)-S_X(\playerids),
    \qquad
    \Delta_Y(\playerids):=R_Y(\playerids)-S_Y(\playerids).
\end{equation*}

A true sell-$Y$
user of type $(v,q)$ has utility
\begin{equation}
    \Util^Y(\playerids;v,q)
    :=
    \begin{cases}
       -\infty,
          &\text{if} \ \netYSale(\playerids)>q\ \text{or}\
           \bigl(\netYSale(\playerids)<0\ \text{and}\
                 \Delta_X(\playerids)<0\bigr) \\
       \Delta_X(\playerids)+v\Delta_Y(\playerids),
          &\text{o.w.}
    \end{cases}
    \label{eq:portfolio-utility-Y}
\end{equation}
Thus a true sell-$Y$ user receives catastrophic utility if its realized net
sale in $Y$ exceeds the true budget $q$, or if its aggregate trade is strictly
in the opposite direction of its desire, namely, if it obtains a strict net
gain in $Y$ while incurring a strict net loss in $X$.
Otherwise its utility is computed in the normal quasilinear way from
its aggregate net trade.  In particular, the utility function does not
penalize a sell-$Y$ user merely for obtaining a net gain in $Y$ when it does
not lose $X$; this includes a potential free-lunch outcome with weak net gains
in both tokens.  Nor does it penalize the user if it gross-sells more than $q$
units of $Y$ through some identities provided opposite-direction identities
buy back enough $Y$ that its final net sale does not exceed $q$.
This makes the possible strategy larger and will later make
our game-theoretic definitions stronger.

Symmetrically, for a true sell-$X$ user of type $(u,r)$,
its utility is defined as:
\begin{equation*}
    \Util^X(\playerids;u,r)
    :=
    \begin{cases}
       -\infty,
          &\text{if} \ \netXSale(\playerids)>r\ \text{or}\
           \bigl(\netXSale(\playerids)<0\ \text{and}\
                 \Delta_Y(\playerids)<0\bigr) \\
       \Delta_Y(\playerids)+u\Delta_X(\playerids),
          &\text{o.w.}
    \end{cases}
\end{equation*}
Thus a true sell-$X$ user receives utility $-\infty$ if it net-sells more
than $r$ units of $X$, or if it obtains a strict net gain in $X$ while
incurring a strict net loss in $Y$.

\subsection{A Concave AMM Pricing Curve}
\label{sec:curve}

Suppose the pool initially holds finite reserves $x_0>0$ and $y_0>0$ of
tokens $X$ and $Y$, respectively.  Let $\widehat y$ denote the curve's
effective $Y$-payout capacity, where $0<\widehat y\le y_0$.
Trades are governed by a pricing curve
\[
    F:(-\widehat y,+\infty)\longrightarrow(-\infty,x_0).
\]
A signed amount $y\in(-\widehat y,+\infty)$ denotes net
$Y$ sent into the AMM, and $F(y)$ denotes net $X$ sent out of the AMM.  Thus
$y>0$ describes selling $Y$ to the AMM and receiving $X$, while $y<0$
describes withdrawing $Y$ while depositing $X$.
We assume:
\begin{enumerate}[label=(A\arabic*),leftmargin=3em]
    \item $F(0)=0$;
    \item $F$ is continuous, strictly increasing, and concave;
    \item $F$ is differentiable at $0$ with $0<F'(0)<\infty$;
    \item $F$ has the lower-endpoint behavior $\lim_{p\downarrow-\widehat y}F(p)=-\infty$.
\end{enumerate}

Define the initial prices
\[
    \sigma_0:=F'(0)
    \quad\text{($X$ per $Y$)},
    \qquad
    \rho_0:=\frac{1}{F'(0)}=\frac{1}{\sigma_0}
    \quad\text{($Y$ per $X$)}.
\]

For the sell-$Y$ direction, the mechanism uses the restriction of $F$ to
$[0,+\infty)$.  
In our mechanism description later, whenever we use $F$ 
as a compensation curve for the sell-$Y$ direction, 
we are implicitly using $F$ restricted to $[0,+\infty)$.  
The pool pays
out $F(y)\in[0,x_0)$ units of $X$ upon receiving $y\ge0$ units of $Y$,
and the initial marginal compensation is $\sigma_0$.
For the sell-$X$ direction, 
let
\[
    H:[0,+\infty)\longrightarrow[0,\widehat y),
    \qquad
    H(x):=-F^{-1}(-x).
\]
The pool pays out $H(x)$ units of $Y$ upon receiving $x\ge0$
units of $X$.
Its initial marginal compensation is $\rho_0 = 1/\sigma_0$.
Assumptions (A2) and (A4) ensure that $F^{-1}(-x)$ exists for every $x\ge0$.
Since $F$ is increasing and concave, $H$ is increasing and concave.
Thus the mechanism uses $F|_{[0,+\infty)}$ and $H$ as its two one-sided
compensation curves.  Both are increasing and concave and have unbounded
nonnegative input domains.

\paragraph{Constant product as a special case.}
For a constant-product AMM with initial reserves $(x_0,y_0)$ and
$k=x_0y_0$, take $\widehat y=y_0$.  Then
\[
    F(y)
    =x_0-\frac{k}{y_0+y},
    \qquad y>-y_0.
\]
It follows that
\[
    F'(y)=\frac{k}{(y_0+y)^2}>0,
    \qquad
    F''(y)=-\frac{2k}{(y_0+y)^3}<0,
\]
and
\[
    \sigma_0=\frac{x_0}{y_0},
    \qquad
    \rho_0=\frac{y_0}{x_0}.
\]

In other words, the pool pays out 
$x_0-\frac{k}{y_0+y}$ units of $X$ for receiving
$y \ge 0$ units of $Y$, 
and it pays out 
  $-F^{-1}(-x) = y_0-\frac{k}{x_0+x}$ units of $Y$ for receiving $x \ge 0$ units of $X$.
Hence constant product is a special case of our generalized concave pricing
curve formulation.

\subsection{Desired Properties}

We would like the AMM mechanism to satisfy
the following desired properties.

\begin{enumerate}[label=\textnormal{(\arabic*)},leftmargin=8mm]

    \item \textbf{User incentive compatibility (a.k.a. truthfulness).}
User incentive compatibility (UIC) requires
that 
truthfully reporting both the reservation value and the budget 
be a dominant strategy over the entire admissible strategy space. In particular, a user cannot benefit from misreporting either its valuation or its budget, 
nor from creating additional identities and injecting sybil bids.

\item \textbf{Strategy proofness for the builder or builder-as-user.}
For either a builder or a builder-as-user, acting honestly is a dominant strategy.
Specifically, the builder or builder-as-user is incentivized
to truthfully report  
its valuation and budget (if there is any intrinsic demand),
refrain from injecting any sybil bids, and order
the bids
in the prescribed manner\footnote{If the AMM mechanism
treats the input bids as a set like ours, 
then the ordering has no effect on the execution outcome.}.

\item
{\bf Pareto optimality for eligible bids.}
A bid is considered {\it eligible} if its ask is no greater than the initial
spot compensation. In other words,
a sell-$Y$ order $(\reportv{}, \reportq{})$ is eligible iff 
$\reportv{} \leq \sigma_0$;
and a sell-$X$ order $(\reportu{}, \reportr{})$ 
is eligible if $\reportu{} \leq \rho_0$. 
We require the following Pareto optimality condition 
for eligible bids: 
under truthful reporting, %
\begin{enumerate}[leftmargin=5mm,itemsep=1pt]
\item 
no subset of users can trade among themselves (subject to available budgets) in a way  
that achieves a Pareto improvement in their joint utilities; and
\item  
no individual user can make an additional trade
direclty with the pool that strictly improves its utility.  
\end{enumerate}
\end{enumerate}

\begin{remark}[Our definitions imply sybil-proofness] 
Since our strategy space allows the user or builder(-as-user)
to create sybil identities and inject sybil bids, 
our user incentive compatibility
and strategy proofness for the builder(-as-user) definitions
directly imply sybil-proofness against these strategic players. 
Sybil-proofness is also referred
to as false-name-proofness in some prior literature~\cite{falsename00,falsename01,falsename02}.
\end{remark}

\begin{remark}[Why restrict Pareto optimality only to eligible bids]
Later in \Cref{sec:ic-le-impossibility}, we prove an impossibility result showing that Pareto optimality cannot be achieved, subject to our incentive-compatibility notions, if non-eligible users are also taken into account. In light of this impossibility, excluding ineligible users from consideration is a natural design choice. Intuitively, the initial spot price can be viewed as setting a baseline price for eligible trades, so that only users whose asks are at least as favorable as this baseline are considered. 
A similar approach has been adopted in prior work on AMM mechanism design~\cite{ammtrilemma}.
\end{remark}

\begin{remark}[Regarding social welfare maximization]
In our problem formulation, 
the two tokens have no common numeraire, and users may value them at
  different exchange rates. 
Therefore, it is tricky to define a global notion of social welfare since 
utilities denominated in different tokens cannot be canonically aggregated.
To argue that our mechanism achieves
a reasonable notion of social welfare maximization, we proceed in two steps.
\begin{itemize}[leftmargin=6mm,itemsep=1pt]
\item 
In this section, we first 
define Pareto optimality for eligible users. Any reasonable notion of social-welfare
maximization restricted to eligible users should, at a minimum, 
preclude a Pareto improvement for those users. 
\item 
Later in \Cref{sec:sw}, we show that
  our game-theoretic requirements impose strong restrictions 
on the structure of a mechanism. Given these structural restrictions, we
  can subsequently introduce a natural welfare criterion 
for eligible users and prove that our mechanism maximizes it. 
\end{itemize}
Thus, Pareto
  optimality provides a numeraire-free 
efficiency requirement at the model level, while welfare maximization is established later
  within the structure implied by our incentive requirements.
  \end{remark}

\subsection{Basic Facts}

\paragraph{UIC implies no arbitrage.}
Some earlier works~\cite{ammmechdesign,ammtrilemma} on AMM mechanism design
also defined an extra property called 
{\it no free lunch}. No free lunch requires
that for any subset $\playerids$ of identities, 
let 
$\Delta_X:=\Delta_X(\playerids)$ and
$\Delta_Y:=\Delta_Y(\playerids)$
be their net gain in the tokens $X$ and $Y$, respectively.
Then, it must be that  
$\Delta_X > 0 \Longrightarrow \Delta_Y < 0$
and 
$\Delta_Y > 0 \Longrightarrow \Delta_X < 0$. 
Intuitively, no free lunch implies
that no user or builder(-as-user) can make risk free profit, 
i.e., the mechanism is {\it arbitrage-free}. 
It also means that for a user or builder with no intrinsic demand and budget,
playing honestly without placing any bids
is a profit-maximizing strategy.

The following fact shows that our notion of UIC
implies the no free lunch property. 
A similar observation was made by 
Chan et al.~\cite{ammmechdesign}.

\begin{fact}[UIC implies no free lunch]
\label{fct:uic-implies-nfl}
Suppose that a deterministic AMM mechanism satisfies UIC 
as defined above. 
Suppose also
that a \emph{truthful} zero-budget user of either direction receives zero utility.
Then the mechanism satisfies no free lunch. 
\end{fact}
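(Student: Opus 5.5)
We argue by contraposition: a free lunch for some set of identities yields a profitable UIC deviation for a zero-budget truthful user. Suppose the mechanism violates no free lunch, so there is some input bid vector and a subset $\playerids$ of identities with, say, $\Delta_X(\playerids)>0$ and $\Delta_Y(\playerids)\ge 0$. The case $\Delta_Y(\playerids)>0$ with $\Delta_X(\playerids)\ge 0$ is symmetric with the roles of $X$ and $Y$ exchanged.

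Consider a real sell-$Y$ user with true type $(v,q)=(v,0)$ for some $v\ge 0$. Fix the other bids to be those outside $\playerids$ in the bad profile. If this user reports truthfully, it submits the zero-budget order, so the bid vector differs from the bad profile. By hypothesis its utility is then $0$; this hypothesis is what makes the baseline independent of the other bids.

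For the deviation, the user instead controls exactly the identities in $\playerids$, with their declared bids, which is allowed since sybil identities may bid in either direction with arbitrary finite reports. Because the mechanism is deterministic and sees only the declared bids, the outcome is exactly that of the bad profile. The user's aggregate flows are therefore $\Delta_X>0$ and $\Delta_Y\ge 0$. We then check the $-\infty$ clauses of $\Util^Y$. First, $\netYSale=-\Delta_Y\le 0=q$, so the budget clause does not fire. Second, the opposite-direction clause requires $\Delta_X<0$, which fails. Hence the utility is $\Delta_X+v\Delta_Y>0$, which strictly exceeds the truthful utility $0$ and contradicts UIC. For the symmetric case we use a zero-budget sell-$X$ user with $\Util^X=\Delta_Y+u\Delta_X>0$.

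The main subtlety is the comparison step. UIC is stated for a fixed environment of other bids, and the truthful bid adds one zero-budget order to that environment. We therefore take the environment to be the complement of $\playerids$, so the deviation reproduces exactly the original profile. The zero-utility hypothesis for truthful zero-budget users is precisely what pins down the truthful baseline in that environment.

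A second point to verify is whether the statement intends the weak version, $\Delta_X>0\Rightarrow\Delta_Y<0$, which rules out $\Delta_Y=0$. The argument covers this, since $\Delta_Y\ge 0$ together with $\Delta_X>0$ already gives strictly positive utility.

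Finally, the user must not be penalized for ending with a net gain in $Y$ as a sell-$Y$ user. The paper's utility definition explicitly allows this when $X$ is not lost, so the deviation's utility is finite and positive.
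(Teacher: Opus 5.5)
Your proof is correct and follows the paper's argument. Both treat the free-lunch identities as sybils of a strategic zero-budget user, whose truthful utility is $0$ by hypothesis, and check that the deviation triggers neither $-\infty$ clause and so yields strictly positive utility; the only cosmetic difference is that you allow an arbitrary valuation $v$ (utility $\Delta_X+v\Delta_Y$) where the paper takes $(v,q)=(0,0)$ (utility $\Delta_X$).
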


\begin{proof}
Fix all reports outside an arbitrary finite collection $\playerids$ of
identities, and write $\Delta_X:=\Delta_X(\playerids)$ and
$\Delta_Y:=\Delta_Y(\playerids)$.
Suppose, towards a contradiction, that $\Delta_X>0$ and 
    $\Delta_Y\ge0$. 
For the other direction where 
$\Delta_Y>0$
    and $\Delta_X\ge0$, the proof is symmetric. 

Interpret the identities in $\playerids$ as
the sybil identities created by a strategic  
sell-$Y$ user with no intrinsic demand and budget, i.e., 
its type is $(v,q)=(0,0)$.  
Observe that the 
strategic deviation does not violate its true budget, 
and its utility is 
    $\Util^Y(\playerids;0,0)=\Delta_X>0$.
By assumption, truthful reporting by the zero-budget user gives utility zero,
so this sybil strategy is strictly profitable, contradicting user
incentive compatibility.  
\end{proof}

For our mechanism, the zero-budget normalization in
\Cref{fct:uic-implies-nfl} is automatic: a truthful report with
$q=0$ (or $r=0$) has zero allocation, and hence zero compensation.  Therefore,
once the full cross-direction incentive guarantee is established in
Section~\ref{sec:sybil}, \Cref{fct:uic-implies-nfl} yields no free lunch as a
corollary.

\begin{fact}[Order-insensitive mechanism + UIC $\Longrightarrow$ strategy-proofness for 
or builder(-as-user)]
Suppose that the AMM mechanism treats
the input bids as an \emph{unordered} set. 
Then, under our strategy space, 
UIC implies strategy proofness
for the builder or builder-as-user.
\label{fct:uicimpliesbic}
\end{fact}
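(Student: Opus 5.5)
Since the mechanism treats the bids as an unordered set, the outcome depends only on the multiset of submitted reports. The builder's only levers are then the order of bids, sybil injection, and (for a builder-as-user) misreporting its own valuation and budget. We handle the two cases in turn.

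\emph{Ordering.} By order-insensitivity, every permutation of the bids yields the same outcome $(x_i,y_i)_i$. The aggregate flows $S_T,R_T,\Delta_T$ of every identity set, and hence the builder's utility, are therefore unchanged by reordering. So without loss of generality we may assume the builder uses the prescribed ordering. We then show that what remains of its strategy space is exactly the strategy space of an ordinary strategic user.

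\emph{Builder-as-user.} Take a builder-as-user with true sell-$Y$ type $(v,q)$; the sell-$X$ case is symmetric. Fix the set $B$ of other users' bids, which it cannot censor by the censorship-resilience assumption. Any builder strategy amounts to choosing a finite set $\playerids$ of identities with arbitrary reports, in either direction, plus an ordering. After discarding the ordering as above, the outcome is the mechanism applied to $B\cup\playerids$. This is precisely the outcome reached by a strategic user of type $(v,q)$ who submits the identities $\playerids$ while the others report $B$. The builder's utility $\Util^Y(\playerids;v,q)$ is the same formula as that user's utility. UIC says truthful reporting (a single truthful identity, no sybils) maximizes this utility for every $B$. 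Since the guarantee is ex post, it holds for the realized $B$, so honesty is a dominant strategy for the builder-as-user.

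\emph{Builder without intrinsic demand.} Model the pure builder as a user of type $(0,0)$ under either utility function. Honest behaviour means injecting no bids, i.e., $\playerids=\emptyset$, which gives zero flows and zero utility. Any injection strategy is a sybil strategy for this zero-budget user, whose truthful report has zero allocation and zero utility. UIC then gives that no injection yields positive utility. This is the same argument as in \Cref{fct:uic-implies-nfl}.

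\emph{Main obstacle.} The step needing most care is making sure the builder's utility model and strategy space coincide with the user's. Two points must hold. First, the builder's utility must be defined by the same aggregate-flow formula; we take this as part of the model, since the builder receives no fee by the remark above. Second, the builder must not learn anything that a user lacks. This is covered by UIC being required ex post, i.e., for every fixed profile of other bids.
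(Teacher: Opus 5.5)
Your proof is correct and follows the paper's argument. Reordering is the builder's only power beyond an ordinary user's, and it is useless for an order-insensitive mechanism, so any profitable builder(-as-user) deviation would be a profitable deviation for an ordinary user of the same type, contradicting UIC; you spell out the ex post and utility-model details that the paper leaves implicit. Your separate treatment of the builder without intrinsic demand as a type-$(0,0)$ user mirrors \Cref{fct:uic-implies-nfl} and, like it, needs a truthful zero-budget report to receive zero compensation, since the cap forces only zero allocation. That is an extra hypothesis in general, but it is automatic for \name, as the paper notes after \Cref{fct:uic-implies-nfl}.
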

\begin{proof}
Under our strategy space, 
the only privilege of the builder(-as-user) over an ordinary user  
is its ability to reorder bids within the block.  
However, if the mechanism
is insensitive to ordering, this ability is useless.  
Therefore, if the builder(-as-user)
with intrinsic type $(v, q)$
has a profitable deviation, 
then so does an ordinary user with the same type.
\end{proof}

Since our proposed mechanism is order-insensitive, \Cref{fct:uicimpliesbic} implies that establishing UIC immediately yields strategy-proofness for the builder (or builder-as-user).

\section{Warmup: A One-Sided VCG Auction}
\label{sec:generic-seller}
As a warmup, we first present an AMM mechanism
assuming that there are only sell-$Y$ bids
and no sell-$X$ bids are allowed. 
For the opposite case, in which there are only sell-$X$ bids,
a symmetric mechanism can be defined. 
The one-sided construction applies more generally to any compensation curve
\[
    F:[0,+\infty)\longrightarrow\R_{\ge0},
\]
that is continuous, strictly increasing, and concave, with
\[
    F(0)=0,
    \qquad
    \initialYprice:=F'_+(0)\in(0,+\infty).
\]
The restriction of the signed AMM curve from \Cref{sec:curve} to
$[0,+\infty)$ is one such curve.

A reported sell-$Y$ user $i$ has per-unit ask $\reportv{i}\ge0$ and a budget 
of $\reportq{i}\ge0$. 
Let $0 \leq y_i \leq \reportq{i}$
be the amount of $Y$ user $i$ 
ends up spending. 
For a finite set $S$ of bids,
define the social welfare
under the outcome $\{y_i\}_{i \in S}$ as
\begin{equation*}
{\sf SW}_F(S, \{y_i\}_{i \in S}) := 
      \underbrace{F\!\left(\sum_{i\in S}y_i\right)}_{\text{$X$ tokens
output by pool}}
       -\underbrace{\sum_{i\in S} \reportv{i}y_i}_{\substack{
\text{cost
of $Y$ tokens}\\[2pt] \text{sent to pool}}},
\end{equation*}

Specifically, the social welfare is defined using $X$
as a numeraire, and   
accounts for the total utility
of the sell-$Y$ users, as well as the burnt amount 
that is effectively redistributed to the community.
In the above definition of ${\sf SW}_F(S)$, the first term is the amount
of $X$ tokens  
output by the pool, and the second term
can be viewed
as a production cost
of the $Y$ tokens sent to the pool.

The maximum achievable social welfare
given a finite set $S$ of sell-$Y$ bids 
is defined as follows: 
\begin{equation}
  \maxwf_F(S)
    :=
    \max_{0\le y_i\le \reportq{i}\ (i\in S)}
    \left\{
{\sf SW}_F(S, \{y_i\}_{i \in S})
    \right\}.
    \label{eq:WG}
\end{equation}

We run a VCG-style auction~\cite{Vickrey61,Clarke71,Groves73} defined as follows. 
\begin{itemize}[leftmargin=6mm,itemsep=1pt]
\item {\bf Allocation rule.}
The allocation rule decides how much $Y$ 
each user spends.
Among welfare maximizers, first choose one maximizing total $Y$ sold 
and then apply an arbitrary exogenous tie-breaking rule.  
Let $\ystar{i}$ be the resulting allocation
chosen for seller $i$, and
let $\Ystar=\sum_i\ystar{i}$.

\item {\bf Compensation rule.}
The compensation rule decides
the amount of $X$ each user obtains as compensation. 
Each seller $i$ gets the 
Clarke pivot compensation defined
as follows:  %
\begin{equation}
\text{let} \ \ x^*_i
    :=
    F(\Ystar)-\sum_{j\ne i}\reportv{j}\ystar{j}-\maxwf_F(S\setminus\{i\}).
    \label{eq:seller-payment}
\end{equation}
Equivalently, let 
\[
    \welfareGain{i}:=\maxwf_F(S)-\maxwf_F(S\setminus\{i\}),
\]
then
user $i$'s compensation can be written as
\begin{equation}
    x^*_i=\reportv{i}\ystar{i}+\welfareGain{i}
    \label{eq:payment-marginal}
\end{equation}
where the first term represents the cost to user $i$
for spending $\ystar{i}$ units of $Y$, 
and the second term represents
a rebate to user $i$ that is equal
to the increase in welfare 
due to user $i$'s existence. 
\end{itemize}

\paragraph{Efficient implementation.} 
The allocation in \eqref{eq:WG} also has a conceptually 
simple and efficient implementation.  
Imagine we start at the initial price
and in each step, we extract an infinitesimally small unit
$dX$ from the pool, and among users
with remaining budget, provide
$dX$ to one with the smallest ask
$\reportv{i} \leq \frac{dX}{dY}$, where $\frac{dX}{dY}$
is the current marginal rate. The seller supplies the corresponding amount
$dY$ and receives $dX$\footnote{We stress that
this description {\it concerns the efficient
allocation; the actual compensations 
are subsequently determined by the Clarke pivot formula}
as in Equation~\eqref{eq:seller-payment}.}.
Repeat the above until 
the cheapest remaining user has an ask above the current
marginal price.  
At equality, use the same quantity-maximizing and exogenous
tie-breaking convention as above.  

In particular, the allocation 
rule can be computed by 1) sorting the users in  
nondecreasing order based on their asks, 
2) for each user in sorted order until the stopping criterion
is triggered, compute
the point on the AMM's pricing curve
at which either the user's budget
is exhausted, or its ask exceeds the marginal price
(in which case the user's budget is partially spent). 
Thus the allocation requires $O(|S|\log |S|)$ comparisons,
and the computation of $O(|S|)$ crossing   
points on the pricing curve.

\begin{fact}[Equivalence of the efficient implementation]
\label{fct:marginal-auction}
The efficient implementation described above results in exactly the same allocation 
as the social-welfare maximizing allocation of Equation~\eqref{eq:WG}.  
\end{fact}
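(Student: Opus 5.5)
The approach is to compare the optimization problem \eqref{eq:WG} with the greedy process through a KKT, or marginal-analysis, characterization. Both produce an allocation that satisfies the same optimality conditions, and the tie-breaking conventions coincide. Write $Y=\sum_i y_i$. The objective $F(Y)-\sum_i \reportv{i}y_i$ is concave, since $F$ is concave and the subtracted part is linear, and the feasible box $\prod_i[0,\reportq{i}]$ is compact. Hence a maximizer exists.

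The first step characterizes the maximizers. I claim $\{y_i\}$ is optimal if and only if there is a threshold slope $s\in[F'_+(Y),F'_-(Y)]$ such that three conditions hold: every $i$ with $\reportv{i}<s$ has $y_i=\reportq{i}$; every $i$ with $\reportv{i}>s$ has $y_i=0$; and users with $\reportv{i}=s$ are split arbitrarily, subject to $\sum y_i=Y$. When $Y=0$, read $F'_-(0)$ as $+\infty$ for the upper side. Necessity follows from an exchange argument. If some $i$ with a lower ask is not saturated while some $j$ with a higher ask has $y_j>0$, shifting $\epsilon$ of $Y$ from $j$ to $i$ keeps $Y$ fixed and strictly increases welfare. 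So the optimum fills asks in nondecreasing order. The total $Y$ must then maximize the one-dimensional concave function $g(Y)=F(Y)-C(Y)$. Here $C$ is the piecewise-linear convex cost obtained by filling budgets in ask order, and its slope at $Y$ is the marginal ask. The first-order condition for a concave $g$ says exactly that $0\in\partial g(Y)$, which is the threshold condition. Sufficiency follows from concavity of $g$ together with the fact that ask-sorted filling attains the minimum cost $C(Y)$ for each fixed $Y$.

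The second step shows that the greedy process outputs a point satisfying these conditions. The process serves users in nondecreasing ask order, so it produces the ask-sorted filling. It stops at the first $Y$ where the cheapest remaining ask is at least the current marginal price. Before that point, every served unit had ask at most the marginal rate, so $C'$ lies below $F'$ along the path and $g$ is nondecreasing. At the stopping point, $g$ cannot increase further. Because $g$ is concave, a local stopping point is a global maximum. This gives $0\in\partial g(Y)$.

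The last step handles ties, and I expect it to be the main obstacle. The maximizer may fail to be unique in two ways. First, $g$ can be flat on an interval, when some asks equal $F'$ on a segment where $F$ is linear. Second, users with equal asks can split a fixed $Y$ in different ways. The first kind of non-uniqueness is resolved by the quantity-maximizing rule: the greedy process, at equality, keeps trading to the right end of the flat interval, and the set of maximizing $Y$ values is a closed interval by concavity and continuity. The second kind is resolved by applying the same exogenous tie-breaking rule among equal-ask users in both descriptions. Since the set of optimal allocations is completely described by the threshold structure above, the two procedures select from the same set with the same rules. Therefore they produce identical allocations.
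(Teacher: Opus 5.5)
Your proposal is correct and follows essentially the paper's route: an exchange argument reduces \eqref{eq:WG} to the one-dimensional concave problem $\max_Y\{F(Y)-C(Y)\}$, the marginal process traces ask-sorted filling and halts at a maximizer by concavity, and flat intervals and equal-ask splits are resolved by the quantity-maximizing rule and the shared exogenous tie-break. Your explicit threshold characterization only restates the same first-order condition; one small fix is that in your second step the process should stop only when the cheapest remaining ask is strictly above the marginal rate, as in the paper and as you yourself use in the tie-breaking step, since stopping at equality lands at the left end of a flat interval.
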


\begin{proof}
For $y\in[0,\sum_i\reportq{i}]$, let
\[
    C(y):=
    \min\left\{
       \sum_i\reportv{i}y_i:
       0\le y_i\le\reportq{i},\ \sum_i y_i=y
    \right\}.
\]
An exchange argument shows that $C(y)$ is obtained by filling users in
nondecreasing order of ask: if a higher-ask user supplies a positive amount
while a lower-ask user has unused capacity, moving any common positive
amount from the former to the latter weakly lowers cost (strictly when their
asks differ).  Hence Equation~\eqref{eq:WG} is equivalent to the one-dimensional
problem
\begin{equation}
    \max_{0\le y\le\sum_i\reportq{i}}\{F(y)-C(y)\}.
    \label{eq:one-dimensional-welfare}
\end{equation}
The function $C$ is convex and piecewise linear, and its right marginal cost
at $y$ is the ask of the cheapest user that still has capacity.  Concavity
of $F$ makes its marginal output nonincreasing.  Consequently the objective
in Equation~\eqref{eq:one-dimensional-welfare} is nondecreasing while the
curve's marginal output is at least the next ask and is nonincreasing once the
inequality is reversed.  The marginal-auction rule therefore stops at a maximizer of
Equation~\eqref{eq:one-dimensional-welfare}.  Its convention at equality chooses the
maximal maximizing quantity, and filling cheapest capacity with the fixed
tie-break then gives precisely the selected maximizer in Equation~\eqref{eq:WG}.
\end{proof}

\begin{theorem}[One-sided VCG auction]
\label{thm:generic-seller}
Let $F:[0,+\infty)\to\R_{\ge0}$ be a compensation curve satisfying the
one-sided curve conditions stated at the beginning of this section.
Suppose that only sell-$Y$ bids are allowed.
Then, the above one-sided VCG auction satisfies the following
properties: 
\begin{enumerate}[label=(\alph*),leftmargin=2.5em]
    \item {\bf Well-formedness}: every user's allocation and compensation are within the desired
    ranges:
    \begin{equation}
        0\le \ystar{i}\le \reportq{i}
        \quad\text{and}\quad
        0\le x_i^*
        \le F(\Ystar)-F(\Ystar-\ystar{i})
        \le \initialYprice\ystar{i};
        \label{eq:individual-payment-upper}
    \end{equation}
    \item {\bf Individual rationality}: every user's truthful utility is
    nonnegative;
    \item {\bf Feasibility}: aggregate VCG compensation is covered by the concave output:
    \begin{equation}
        \sum_i x_i^*\le F(\Ystar);
        \label{eq:aggregate-payment-upper}
\end{equation}
    \item {\bf Submodularity of welfare}: the set function $W_F$ is normalized, monotone, and submodular;
    \item {\bf UIC}: the mechanism satisfies user incentive compatibility (UIC).
\end{enumerate}
\end{theorem}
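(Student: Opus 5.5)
The plan is to handle (a)--(c) by direct comparisons that use only feasibility of particular allocations and concavity of $F$; to prove (d) from the one-dimensional reformulation in \Cref{fct:marginal-auction}; and to derive (e), including sybil-proofness, from (d) by a Yokoo-style aggregation argument.

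\textbf{Steps (a)--(c).} By definition the allocation satisfies $0\le \ystar{i}\le\reportq{i}$. Monotonicity in (d) gives $\welfareGain{i}\ge0$, so $x^*_i=\reportv{i}\ystar{i}+\welfareGain{i}\ge0$. For the upper bound I will use that $\{\ystar{j}\}_{j\ne i}$ is feasible for $S\setminus\{i\}$. Hence $\maxwf_F(S\setminus\{i\})\ge F(\Ystar-\ystar{i})-\sum_{j\ne i}\reportv{j}\ystar{j}$. Plugging this into \eqref{eq:seller-payment} gives $x^*_i\le F(\Ystar)-F(\Ystar-\ystar{i})$. Concavity together with $F'_+(0)=\initialYprice$ then bounds this by $\initialYprice\ystar{i}$. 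Individual rationality (b) is immediate, since truthful utility is $x^*_i-v_i\ystar{i}=\welfareGain{i}\ge0$.

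For (c), I will index users $1,\dots,n$ and let $Y_{<i}=\sum_{j<i}\ystar{j}$. Since $\Ystar-\ystar{i}\ge Y_{<i}$, concavity gives $F(\Ystar)-F(\Ystar-\ystar{i})\le F(Y_{<i}+\ystar{i})-F(Y_{<i})$. Summing over $i$ telescopes to $F(\Ystar)$.

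\textbf{Step (d).} Normalization is $\maxwf_F(\emptyset)=F(0)=0$. Monotonicity holds because a new bidder may be assigned zero. Submodularity is the main obstacle. By the proof of \Cref{fct:marginal-auction},
\[
  \maxwf_F(S)=\int_0^\infty\bigl(F'(t)-c_S(t)\bigr)^+\,dt ,
\]
where $c_S$ is the nondecreasing ``supply curve'' obtained by listing the units offered by bids in $S$ in increasing order of ask, with $c_S=+\infty$ beyond total capacity. Adding bids can only lower $c_S$ pointwise. The claim to establish is that for $S\subseteq T$ and a new bid $i$,
\[
  \maxwf_F(S\cup\{i\})-\maxwf_F(S)\ \ge\ \maxwf_F(T\cup\{i\})-\maxwf_F(T).
\]
My first attempt is to write the marginal gain of $i$ as the integral of the reduction of $(F'-c)^+$ caused by inserting $i$'s $\reportq{i}$ units at level $\reportv{i}$. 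This inserts a flat block into the sorted curve and shifts every higher-ask unit to the right by $\reportq{i}$. I will then compare the two shifts using $c_T\le c_S$ and the monotonicity of $F'$. Equivalently, I can discretize into infinitesimal unit sellers facing a single buyer (the pool) with decreasing marginal values. In that assignment-game setting, the welfare is submodular in the set of sellers, since sellers are substitutes, and passing to the limit recovers the continuous case. I expect this to be the delicate technical part.

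\textbf{Step (e): UIC.} Fix the other reports, let $O$ be the set of other identities, and let the deviator, with true type $(v,q)$, control identities $\playerids$. We may assume its net sale is at most $q$, since otherwise its utility is $-\infty$. Since only sell-$Y$ bids exist, the net sale equals $\sum_{k\in\playerids}\ystar{k}$. Iterating submodularity gives $\sum_{k\in\playerids}\welfareGain{k}\le \maxwf_F(O\cup\playerids)-\maxwf_F(O)$. Combined with \eqref{eq:payment-marginal} and the identity $\maxwf_F(O\cup\playerids)=F(\Ystar)-\sum_{\text{all}}\reportv{j}\ystar{j}$, this yields
\[
  \sum_{k\in\playerids}x^*_k-v\sum_{k\in\playerids}\ystar{k}\ \le\ F(\Ystar)-\sum_{j\in O}\reportv{j}\ystar{j}-v\sum_{k\in\playerids}\ystar{k}-\maxwf_F(O).
\]
The right-hand side is the true-cost welfare of an allocation that is feasible for the single truthful bid $(v,q)$ alongside $O$, minus $\maxwf_F(O)$. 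It is therefore at most $\maxwf_F(O\cup\{(v,q)\})-\maxwf_F(O)$, which is exactly the truthful utility. The argument covers misreported asks and budgets and any number of sybils simultaneously, which establishes UIC.
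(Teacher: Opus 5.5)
Your arguments for (a), (b), and (e) match the paper's essentially step for step. Your telescoping proof of (c) is a correct alternative to the paper's chord bound $F(\Ystar-\ystar{i})\ge(1-\ystar{i}/\Ystar)F(\Ystar)$: you bound $F(\Ystar)-F(\Ystar-\ystar{i})$ by the increment $F(Y_{<i}+\ystar{i})-F(Y_{<i})$ and sum, which also avoids the separate case $\Ystar=0$.

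The gap is in (d), the property that (e) depends on: you assert submodularity but never prove it. In the quantity-axis form $\int_0^\infty(F'_+(y)-c_S(y))^+\,dy$, inserting a bid shifts every higher-ask unit to the right, so the dependence on $S$ is not additive. Your plan to ``compare the two shifts'' is never carried out. The discrete alternative can be made rigorous, but as written it rests on two unsupplied pieces. First, an uncited theorem: Shapley's result that same-side agents in an assignment game are substitutes. Second, an unstated approximation argument: you would need to show that $W_F(S)$ is the pointwise limit of the discretized welfare functions even when caps are not multiples of the grid size. You would also need to show that submodularity over unit sellers descends to submodularity over the blocks of units that form each bid.

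The missing idea is a single Fubini swap, and you are one step from it. Your integral is the area of the region $\{(y,t):c_S(y)\le t\le F'_+(y)\}$. Slice it horizontally at each price level $t\in[0,\initialYprice]$ instead of vertically at each quantity $y$. Because $c_S$ is nondecreasing and $F'_+$ is nonincreasing, the slice is, up to a null set, $[0,D_S(t)]\cap[0,K_F(t)]$, where $D_S(t)=\sum_{i\in S}\reportq{i}\one\{\reportv{i}\le t\}$ and $K_F(t)=\sup\{y\ge0:F'_+(y)\ge t\}$. This gives $W_F(S)=\int_0^{\initialYprice}\min\{D_S(t),K_F(t)\}\,dt$, which is the paper's representation.

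In this form $D_S(t)$ is nonnegative and modular in $S$, while $K_F(t)$ does not depend on $S$. So each integrand is a capped modular set function, hence monotone and submodular, and integration preserves both properties. Equivalently, the marginal gain of bid $i$ is $\int_{\reportv{i}}^{\initialYprice}\min\{\reportq{i},(K_F(t)-D_S(t))_+\}\,dt$ (zero if $\reportv{i}>\initialYprice$), which is visibly nonincreasing in $S$. With (d) established this way, your proof of (e) goes through unchanged.
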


\begin{proof}
The allocation bounds in part (a) follow directly from the feasible region in
\eqref{eq:WG}.
Under truthful reporting,
\[
    \Util_i=x_i^*-v_i\ystar{i}
       =W_F(S)-W_F(S\setminus\{i\})
       \ge0,
\]
because adding an optional user cannot reduce the optimum.  This proves (b)
and also shows $\welfareGain{i}\ge0$ in \eqref{eq:payment-marginal}.
To obtain the upper bound, delete user $i$ but retain every other user's
selected allocation.  This is feasible for $S\setminus\{i\}$, so
\[
    W_F(S\setminus\{i\})
    \ge
    F(\Ystar-\ystar{i})-\sum_{j\ne i}\reportv{j}\ystar{j}.
\]
Substitution into \eqref{eq:seller-payment} yields
\[
    x_i^*
    \le
    F(\Ystar)-F(\Ystar-\ystar{i}).
\]
Because a concave function has nonincreasing marginal slope and initial right
slope $\initialYprice$,
\[
    F(\Ystar)-F(\Ystar-\ystar{i})\le \initialYprice\ystar{i}.
\]
Nonnegativity follows from \eqref{eq:payment-marginal}.  Together with the
allocation bound proved above, this completes the proof of (a).

Summing the above upper bound on $x_i^*$ gives
\[
    \sum_i x_i^*
    \le
    \sum_i\bigl(F(\Ystar)-F(\Ystar-\ystar{i})\bigr).
\]
If $\Ystar=0$, the claim is immediate since it must be the case that 
for every $i$, 
$\ystar{i} = 0$
and $x^*_i \leq F(\Ystar) - F(\Ystar-\ystar{i}) =0$. 
Otherwise concavity and $F(0)=0$ imply
\[
    F(\Ystar-\ystar{i})
    \ge
    \left(1-\frac{\ystar{i}}{\Ystar}\right)F(\Ystar).
\]
Therefore
\[
    F(\Ystar)-F(\Ystar-\ystar{i})
    \le
    \frac{\ystar{i}}{\Ystar}F(\Ystar).
\]
Summing over $i$ proves \eqref{eq:aggregate-payment-upper}, and hence (c).

We next prove submodularity.  For $t\in[0,\initialYprice]$, let
\[
    D_S(t)
    :=\sum_{i\in S}\reportq{i}\one\{\reportv{i}\le t\}
\]
be aggregate capacity of $Y$ whose ask is at most $t$, and let
\[
    K_F(t)
    :=\sup\{y\ge0:F'_+(y)\ge t\},
\]
where the supremum is allowed to be $+\infty$.
Let $\alpha_S(y)$ be the inverse aggregate supply curve: the minimum reported
per-unit cost of the $y$th infinitesimal unit of the capacity.  For a fixed
total input $Y$, the minimum reported cost is
\[
    \int_0^Y\alpha_S(y)\,dy,
\]
while concavity gives
\[
    F(Y)=\int_0^YF'_+(y)\,dy.
\]
Since $F'_+$ is nonincreasing and $\alpha_S$ is nondecreasing, their difference
is nonincreasing.  Maximizing over $Y$ therefore integrates exactly the region
where marginal compensation is at least marginal cost.  A
layer-cake/Fubini argument gives
\begin{equation}
    W_F(S)
    =
    \int_0^{\initialYprice}
    \min\{D_S(t),K_F(t)\}\,dt.
    \label{eq:seller-layercake}
\end{equation}
For every fixed $t$, the map
\[
    S\longmapsto
    \min\left\{
       \sum_{i\in S}\reportq{i}\one\{\reportv{i}\le t\},
       K_F(t)
    \right\}
\]
is a capped modular set function and is therefore monotone and submodular.
Integration preserves these properties, and $W_F(\varnothing)=0$.  This proves
(d).

Finally, we prove UIC.  Fix all reports not controlled by one real user, and
denote their set by $O$.  Let the user, whose true cost is $vy$ on $[0,q]$,
submit an arbitrary finite collection $\playerids$ of admissible reports --- a 
singleton collection is an ordinary joint misreport of value and cap.  Let
$y_i,x_i$ be the selected sale and compensation of report
$i\in\playerids$, and let
\[
    Y_{\playerids}:=\sum_{i\in\playerids}y_i.
\]
The reports may have budget caps whose sum exceeds $q$.  If $Y_{\playerids}>q$, the real
user's utility is $-\infty$, so the deviation is immediately unprofitable.
It therefore remains only to analyze the case $Y_{\playerids}\le q$.  Put
\[
    \welfareGain{i}
    :=W_F(O\cup\playerids)-W_F(O\cup(\playerids\setminus\{i\})).
\]
By \eqref{eq:payment-marginal},
\[
    x_i=\reportv{i}y_i+\welfareGain{i}.
\]
Submodularity implies
\[
    \sum_{i\in\playerids}\welfareGain{i}
    \le
    W_F(O\cup\playerids)-W_F(O).
\]
Hence the real user's utility from the possibly sybil deviation is
\begin{align*}
    \Util^{\mathrm{Syb}}
    &=\sum_{i\in\playerids}x_i-vY_{\playerids}\\
    &\le
    \sum_{i\in\playerids}\reportv{i}y_i
    +W_F(O\cup\playerids)-W_F(O)-vY_{\playerids}.
\end{align*}
At the chosen allocation for $O\cup\playerids$, the reported costs 
of the sybil identities $\playerids$ cancel
with the same terms inside $W_F(O\cup\playerids)$.  Thus
\[
    \Util^{\mathrm{Syb}}
    \le
    F\!\left(Y_{\playerids}+\sum_{j\in O}y_j\right)
    -\sum_{j\in O}\reportv{j}y_j-vY_{\playerids}-W_F(O).
\]
Because $Y_{\playerids}\le q$, the first three terms are the welfare of a feasible
allocation after merging $\playerids$ into a truthful user of type $(v,q)$
that sells $Y_{\playerids}\le q$.
Therefore
\[
    \Util^{\mathrm{Syb}}
    \le
    W_F(O\cup\{(v,q)\})-W_F(O),
\]
which is exactly the utility from reporting $(v,q)$ truthfully as a single
identity.  Hence no deviation is profitable.  This proves
UIC, including ordinary joint misreports and sybil bids, and establishes (e).
\end{proof}

\section{The Full Two-Sided Mechanism}
\label{sec:mechanism}

We now describe our full \name mechanism,
which extends the one-sided mechanism in \Cref{sec:generic-seller}.

\paragraph{Intuition.}
The full mechanism first discards all ineligible bids
whose asks are less favorable than the initial price. 
Among the remaining eligible bids, it then computes the spot-eligible supply on
both sides at the initial spot price. Henceforth, let $D_Y$ and $D_X$ denote the
quantities of $Y$ and $X$, respectively, available to be sold at the initial
spot price or better. Without loss of generality,
suppose that $D_Y \geq \rho_0 D_X$,
in which case we say that the sell-$Y$ side is {\it dominant};  
the other case is symmetric. 
The mechanism then fully executes all eligible sell-$X$ orders at the initial
spot price.
On the sell-$Y$ side, 
we run the one-sided VCG auction 
of \Cref{sec:generic-seller}
using a modified  
curve $\widetilde F_Y^{\crossing}(y)$ defined for
$y\ge0$. This
curve is obtained by extending 
$F(y)$ so that the first $\crossing := \rho_0 D_X$ units 
of $Y$ sold enjoy the initial spot price.
Equivalently, we can view the pool as being augmented with $D_X$ units of $X$ at the initial spot price.

\subsection{Detailed Description}
A sell-$Y$ report is willing to sell at the initial spot price exactly when
$\reportv{i}\le\sigma_0$.  Define total spot-eligible sell-$Y$ quantity, in units of
$Y$, by
\begin{equation*}
    D_Y
    :=\sum_{i\in\Yside}\reportq{i}\one\{\reportv{i}\le\sigma_0\}.
\end{equation*}
Similarly, define total spot-eligible sell-$X$ quantity, in units of $X$, by
\begin{equation*}
    D_X
    :=\sum_{j\in\Xside}\reportr{j}\one\{\reportu{j}\le\rho_0\}.
\end{equation*}
At the spot price, $D_X$ units of $X$ require $\rho_0D_X$ units of $Y$ as
compensation.  We call a profile \emph{sell-$Y$ dominated} if
\begin{equation}
    D_Y\ge \rho_0D_X,
    \label{eq:Ydom}
\end{equation}
and \emph{sell-$X$ dominated} otherwise.  Ties are resolved in favor of the
sell-$Y$ direction.

\paragraph{\texorpdfstring{Sell-$Y$ dominated profiles}{Sell-Y dominated profiles}.}
Suppose \eqref{eq:Ydom} holds and define
\begin{equation*}
    \crossing:=\rho_0D_X.
\end{equation*}
Thus $\crossing$ is the number of units of $Y$ needed to buy all spot-eligible sell-$X$
supply at the initial price.

First, fully execute every sell-$X$ order with $\reportu{j}\le\rho_0$ at the initial
spot price:
\begin{equation}
    x_j=\reportr{j},
    \qquad
    y_j=\rho_0\reportr{j}.
    \label{eq:minority-X}
\end{equation}
Every sell-$X$ identity with $\reportu{j}>\rho_0$ is assigned zero allocation
and zero compensation.  These
minority-side trades supply $D_X$ units of $X$ and require exactly $\crossing$ units of
$Y$ as compensation.

The dominant sell-$Y$ side faces the following effective compensation curve,
measured in units of $X$ available as a function of units of $Y$ sold:
\begin{equation}
    \widetilde F_Y^{\crossing}(y)
    :=
    \begin{cases}
       \sigma_0 y,&0\le y\le \crossing,\\[1mm]
       D_X+F(y-\crossing),&y\ge\crossing.
    \end{cases}
    \label{eq:Ftilde-Y}
\end{equation}
Because $D_X=\sigma_0\crossing$ and $F'_+(0)=\sigma_0$, the two pieces meet
continuously with the same initial marginal slope.  Since $F$ is concave,
$\widetilde F_Y^{\crossing}$ is increasing and concave, and therefore
satisfies the one-sided curve conditions.

Run the one-sided VCG mechanism of
Section~\ref{sec:generic-seller} on all sell-$Y$ reports with compensation
curve $\widetilde F_Y^{\crossing}$.  
The interpretation of \eqref{eq:Ftilde-Y} is intuitive: the first
$\crossing = \rho_0 D_X$ units of dominant-side $Y$ can be compensated using the $D_X$ units of $X$
supplied by the sell-$X$ orders at the initial spot price.  If more
than $\crossing$ units of $Y$ are sold, only the excess $Y$ is sent to the AMM and the
original curve $F$ is used thereafter.

\begin{fact}
Let $Q_Y$ be the aggregate $Y$ 
spent by the sell-$Y$ users.
Then, $Q_Y \geq \crossing$. 
\label{fct:cross-cover-Y}
\end{fact}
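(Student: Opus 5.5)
We prove that the one-sided VCG auction run on $\widetilde F_Y^{\crossing}$ selects a total quantity $\Ystar = Q_Y$ with $Q_Y\ge\crossing$. The idea is to use the equivalent marginal-auction description from \Cref{fct:marginal-auction}, together with the fact that $\widetilde F_Y^{\crossing}$ has constant marginal rate $\sigma_0$ on $[0,\crossing]$. The argument is purely about the allocation rule; compensations play no role.

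First, we apply \Cref{fct:marginal-auction} to the curve $\widetilde F_Y^{\crossing}$, which is continuous, increasing, concave, and has initial slope $\sigma_0$. The selected total $\Ystar$ is then the maximal maximizer of $\widetilde F_Y^{\crossing}(y)-C(y)$ over $0\le y\le\sum_i\reportq{i}$. Here $C$ is the convex, piecewise-linear minimum-cost function obtained by filling sell-$Y$ reports in nondecreasing order of ask.

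Second, we bound the objective on the interval $[0,\min\{\crossing,\sum_i\reportq{i}\}]$. Since the profile is sell-$Y$ dominated, $\crossing=\rho_0D_X\le D_Y$. For any $y\le D_Y$, the cheapest $y$ units of capacity come from reports with $\reportv{i}\le\sigma_0$, because $D_Y$ is precisely the capacity with ask at most $\sigma_0$. Hence the right marginal cost satisfies $C'_+(y)\le\sigma_0$ for all $y<D_Y$. On $[0,\crossing]$, however, the right marginal output of $\widetilde F_Y^{\crossing}$ is exactly $\sigma_0$. The objective is therefore nondecreasing on $[0,\crossing]$, since its right derivative there is at least $0$.

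Third, we conclude from maximality. Any maximizer $y'<\crossing$ satisfies $\text{obj}(\crossing)\ge\text{obj}(y')$, and $\crossing$ is feasible because $\crossing\le D_Y\le\sum_i\reportq{i}$. So $\crossing$ is also a maximizer. Because the allocation rule breaks ties toward the largest total quantity, $\Ystar\ge\crossing$.

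The only delicate point is handling equalities: asks equal to $\sigma_0$ make the objective flat rather than strictly increasing. This is exactly why the quantity-maximizing tie-break in the allocation rule is needed, and the argument relies on it explicitly. Edge cases such as $\crossing=0$ are trivial.
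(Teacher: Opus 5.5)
Your proof is correct and is essentially the paper's argument. The curve $\widetilde F_Y^{\crossing}$ pays exactly $\sigma_0$ at the margin on $[0,\crossing]$, the eligible capacity $D_Y\ge\crossing$ has asks at most $\sigma_0$, and the quantity-maximizing tie rule handles asks equal to $\sigma_0$; the only difference is cosmetic, in that you argue monotonicity of the one-dimensional objective $\widetilde F_Y^{\crossing}-C$ from \Cref{fct:marginal-auction}, whereas the paper directly perturbs an underfilled spot-eligible seller.
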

\begin{proof}
By definition, the total reported budget 
of sell-$Y$ identities with ask at most $\sigma_0$ is at least $\crossing$.  On
$[0,\crossing]$, the marginal compensation of $\widetilde F_Y^{\crossing}$ is exactly
$\sigma_0$.  If a welfare maximizer sold a total $Q_Y <\crossing$, some unsold
spot-eligible capacity would remain.  Increasing a seller with ask strictly
below $\sigma_0$ would strictly increase welfare; increasing one with ask
exactly $\sigma_0$ would preserve welfare while increasing total sold quantity.
Either alternative contradicts optimality together with our quantity-maximizing
tie rule.  Hence $Q_Y\ge \crossing$.
\end{proof}

\paragraph{\texorpdfstring{Sell-$X$ dominated profiles}{Sell-X dominated profiles}.}
Now suppose
\[
    D_Y<\rho_0D_X.
\]
Define the amount of $X$ needed to buy all spot-eligible sell-$Y$ supply at the
initial spot price:
\begin{equation*}
    \crossing:=\sigma_0D_Y=\frac{D_Y}{\rho_0}.
\end{equation*}

First, fully execute every sell-$Y$ order with $\reportv{i}\le\sigma_0$ at the initial
spot price:
\begin{equation*}
    y_i=\reportq{i},
    \qquad
    x_i=\sigma_0\reportq{i}.
\end{equation*}
Every sell-$Y$ identity with $\reportv{i}>\sigma_0$ is assigned zero allocation
and zero compensation.  These
minority-side trades supply $D_Y$ units of $Y$ and require exactly $\crossing$ units of
$X$ as compensation.

The dominant sell-$X$ side faces the effective compensation curve, now in
units of $Y$ available as a function of units of $X$ sold,
\begin{equation*}
    \widetilde H_X^{\crossing}(x)
    :=
    \begin{cases}
       \rho_0 x,&0\le x\le \crossing,\\[1mm]
       D_Y+H(x-\crossing),&x\ge\crossing.
    \end{cases}
\end{equation*}
Because $D_Y=\rho_0\crossing$ and $H'_+(0)=\rho_0$, this is increasing and concave.
It therefore satisfies the one-sided curve conditions.
Run the one-sided VCG mechanism of \Cref{sec:generic-seller}
on all sell-$X$ reports using
$\widetilde H_X^{\crossing}$.

\begin{fact}
Let $Q_X$ denote the aggregate $X$ spent by sell-$X$ users.  
Then, $Q_X \ge \crossing$. 
\label{fct:cross-cover-X}
\end{fact}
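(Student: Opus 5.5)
The plan is to mirror the argument of \Cref{fct:cross-cover-Y} with the roles of $X$ and $Y$ exchanged. Three ingredients are needed. First, in a sell-$X$ dominated profile we have $D_Y<\rho_0D_X$, so with $\crossing=\sigma_0D_Y=D_Y/\rho_0$ we get $D_X>\crossing$. That is, the total reported budget of sell-$X$ identities with ask $\reportu{j}\le\rho_0$ strictly exceeds $\crossing$. Second, by construction $\widetilde H_X^{\crossing}$ is linear with slope exactly $\rho_0$ on $[0,\crossing]$. Third, by the equivalence of the efficient implementation (\Cref{fct:marginal-auction}), the one-sided VCG auction run with $\widetilde H_X^{\crossing}$ selects a welfare maximizer of ${\sf SW}_{\widetilde H_X^{\crossing}}$ over sell-$X$ reports. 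Among maximizers it picks one with maximal total quantity sold.

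The argument is by contradiction. Suppose the selected allocation has aggregate $Q_X<\crossing$. Since the spot-eligible capacity is $D_X>\crossing>Q_X$, some identity $j$ with $\reportu{j}\le\rho_0$ has unused capacity, meaning $x_j<\reportr{j}$. Increase $x_j$ by a small $\epsilon>0$ chosen so that $x_j+\epsilon\le\reportr{j}$ and $Q_X+\epsilon\le\crossing$. Because the total stays in $[0,\crossing]$, the curve term rises by exactly $\rho_0\epsilon$ while the cost rises by $\reportu{j}\epsilon$. The welfare change is therefore $(\rho_0-\reportu{j})\epsilon\ge 0$.

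There are two cases. If $\reportu{j}<\rho_0$, welfare strictly increases, which contradicts optimality. If $\reportu{j}=\rho_0$, welfare is unchanged while the total quantity strictly increases, which contradicts the quantity-maximizing tie rule. Either way we reach a contradiction, so $Q_X\ge\crossing$.

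I expect no real obstacle. The only point needing care is that the perturbation stays within the linear piece $[0,\crossing]$, so that the marginal compensation is exactly $\rho_0$ rather than something smaller from concavity; this is why $\epsilon$ is also capped so that $Q_X+\epsilon\le\crossing$. The strictness of $D_X>\crossing$ is not even needed; $D_X\ge\crossing$ would suffice.
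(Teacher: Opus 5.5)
Your proposal is correct and follows the paper's approach. The paper proves this fact by symmetry with \Cref{fct:cross-cover-Y}, which uses the same argument you give: perturb an underfilled spot-eligible seller along the linear segment of slope $\rho_0$, obtaining a strict welfare gain when its ask is strictly below the spot rate and a contradiction with the quantity-maximizing tie rule when its ask equals the spot rate.
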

\begin{proof}
The proof is symmetric to that of \Cref{fct:cross-cover-Y}.
\end{proof}

\begin{theorem}[Main theorem: our full two-sided mechanism]
\label{thm:main}
Suppose that the pricing curve $F$ satisfies assumptions (A1)--(A4)
defined in \Cref{sec:curve}. Our full two-sided mechanism is well-formed and
satisfies individual rationality, feasibility, user incentive compatibility
(UIC), strategy-proofness against a builder(-as-user), and Pareto optimality
for eligible users.
\end{theorem}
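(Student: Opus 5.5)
The proof splits into a structural part (well-formedness, individual rationality, feasibility, builder strategy-proofness), an efficiency part (Pareto optimality for eligible bids), and an incentive part (UIC). The incentive part is the real work. By symmetry I will treat a sell-$Y$ dominated profile throughout; the sell-$X$ dominated case uses $\widetilde H_X^{\crossing}$ instead.

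\textbf{Structural properties.} The minority side trades at exactly $\rho_0$, so every such order satisfies $0\le x_j\le\reportr{j}$ and gets utility $(\rho_0-u_j)r_j\ge0$ when truthful and eligible. For the dominant side I will invoke \Cref{thm:generic-seller} with curve $\widetilde F_Y^{\crossing}$, which I have already checked is continuous, increasing and concave with initial slope $\sigma_0$. This gives well-formedness, individual rationality, and $\sum_i x_i^*\le \widetilde F_Y^{\crossing}(Q_Y)$.

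For feasibility, \Cref{fct:cross-cover-Y} gives $Q_Y\ge\crossing$. The pool therefore receives $Q_Y-\crossing\ge0$ units of $Y$ and pays $F(Q_Y-\crossing)$ units of $X$, so its end state lies on the curve. The $X$ available to sell-$Y$ users is then $D_X+F(Q_Y-\crossing)=\widetilde F_Y^{\crossing}(Q_Y)$, which is at least the total VCG compensation. Any excess is burnt.

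The mechanism treats bids as an unordered set, so \Cref{fct:uicimpliesbic} reduces builder(-as-user) strategy-proofness to UIC.

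\textbf{Pareto optimality.} Every eligible minority order is fully filled, so minority users have no unused budget. The allocation on the dominant side maximizes welfare against $\widetilde F_Y^{\crossing}$. By \Cref{fct:marginal-auction}, any dominant user with leftover budget has ask at least the final marginal rate $F'_+(Q_Y-\crossing)$. Hence no additional trade with the pool helps that user.

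A Pareto-improving trade among users would require someone to supply $X$ to a sell-$Y$ user. Eligible sell-$X$ users have already exhausted their budgets. Sell-$Y$ users cannot help each other without one of them acting against its own direction, which the utility definition penalizes or which yields no strict gain. Hence no coalition of eligible users has a Pareto improvement.

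\textbf{UIC (main obstacle).} Fix the other reports $O$ and a real user controlling identities $\playerids$ on both sides. The difficulty is that sell-$X$ sybils shift $\crossing$, and hence change the curve faced by sell-$Y$ identities. They can also flip which side is dominant.

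The approach is to express every outcome as a single one-sided VCG problem on the $Y$ side. The layer-cake formula \eqref{eq:seller-layercake} applied to $\widetilde F_Y^{\crossing}$ gives $K(t)=\crossing+K_F(t)$ for $t\le\sigma_0$. I will therefore view $\rho_0$ units of $Y$-capacity as a unit of $X$ supplied at price exactly $\sigma_0$. Each eligible sell-$X$ sybil selling $r$ units of $X$ then corresponds to adding a curve segment of length $\rho_0 r$ at slope $\sigma_0$.

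The sell-$Y$ part of the deviation is bounded exactly as in the proof of \Cref{thm:generic-seller}, using submodularity. It remains to bound the sell-$X$ part. The increase in $W$ from extending the curve by $\rho_0 r$ at marginal slope at most $\sigma_0$ is at most $r$ units of $X$. The sybil gives up $r$ units of $X$ and receives $\rho_0 r$ units of $Y$, worth $v\rho_0 r$. I would show the combined effect is at most the truthful marginal contribution $W(O\cup\{(v,q)\})-W(O)$, by merging the sybil's net position into a single truthful report of the real user. This is the same cancellation-and-feasibility argument as before, now with $\widetilde F$ parameterized by the deviator's own $X$ supply.

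For a true sell-$X$ user on the minority side, truthful utility already equals the maximal spot-price surplus. Any bundle of sybils can be shown to buy $Y$ at rate no better than $\rho_0$, using the bound $x_i^*\le\sigma_0 y_i$ from \eqref{eq:individual-payment-upper} and the no-free-lunch reasoning of \Cref{fct:uic-implies-nfl}.

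The step I expect to be hardest is a deviation that flips dominance. I would handle it by a continuity and monotonicity argument in the deviator's reported supply. At the flip point $D_Y=\rho_0 D_X$ both branches trade everything eligible at spot price, so the two formulas for utility agree there. The bound is then established separately on each branch and glued at the boundary.
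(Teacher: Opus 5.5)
Your structural part, the Pareto argument, and the reduction of builder strategy-proofness to UIC via \Cref{fct:uicimpliesbic} essentially follow the paper. The genuine gap is in UIC, in the case you flag yourself. Truthful reporting puts the user on the dominant sell-$Y$ side, but a deviation with enough eligible sell-$X$ sybils makes the profile sell-$X$ dominated. The user's eligible sell-$Y$ sybils then become minority orders paid the full spot rate $\sigma_0$ rather than VCG compensation. ``Continuity and monotonicity in the deviator's reported supply, glued at the boundary'' does not bound this, for three reasons. A deviation is an arbitrary finite set of identities in both directions, so there is no one-parameter path. You never say what is monotone. And agreement of the two utility formulas at $D_Y=\rho_0D_X$ says nothing about deviations deep inside the sell-$X$-dominated region, where you still need a direct inequality against a truthful utility computed in the other regime.

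The paper closes this case without any continuity, by splitting on the regime of the sybil profile and using two ingredients your proposal lacks. First, a lower bound on truthful dominant utility, $\Util^{\mathrm{tr}}=\welfareGainY(\MYother;v,q)\ge(\sigma_0-v)\ell$ with $\ell=(\MYother-\DYother)_+$. The user fills the part of the spot-rate segment that others' eligible capacity leaves uncovered, so this is immediate from the layer-cake formula. Second, in the flipped profile, the decomposition $\Util^{\mathrm{syb}}=(\sigma_0-v)(S_Y-\rho_0S_X)+v(R_Y-\rho_0S_X)$, whose second term is nonpositive by the dominant-side payment bound. Combine it with \Cref{fct:cross-cover-X} applied to the sybil profile, $S_X+\SXother\ge\sigma_0(\DYother+S_Y)$. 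Together with $\SXother\le\DXother=\sigma_0\MYother$, this yields $S_Y-\rho_0S_X\le\MYother-\DYother\le\ell$, hence $\Util^{\mathrm{syb}}\le(\sigma_0-v)\ell\le\Util^{\mathrm{tr}}$. This is the quantitative content of ``to flip dominance you must sell enough $X$ at no better than spot to absorb your own spot-rate sale.''

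On the sell-$Y$-dominated branch your merging idea is viable, and it differs from the paper. The paper applies one-sided UIC with a virtual user of cap $S_Y$ and then \Cref{lem:crossing-shift} with $\widehat q=S_Y\le q+R_Y$. Taken at face value, though, your heuristic fails. Let $\alpha:=R_Y=\rho_0S_X$. Bounding the effect of the curve extension by $r=\sigma_0\alpha$ and charging the sybil $(\sigma_0-v)\alpha$ overshoots the target by $v\alpha$, and the gross sale $S_Y$ may exceed $q$. What makes the merge work is the exact identity $\widetilde F_Y^{\crossing+\alpha}(z)=\widetilde F_Y^{\crossing}(z-\alpha)+\sigma_0\alpha$ for $z\ge\crossing+\alpha$, with $\crossing=\MYother$; it applies at $z=Q_Y$ by \Cref{fct:cross-cover-Y} in the sybil profile. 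You also need $W_{\widetilde F_Y^{\crossing+\alpha}}(O)\ge W_{\widetilde F_Y^{\crossing}}(O)$. After submodularity and cancellation, the net sale $S_Y-R_Y\in[0,q]$ is then a feasible allocation for the truthful report, giving $\Util^{\mathrm{syb}}\le\welfareGainY(\MYother;v,q)$ without the layer-cake shift lemma. Before that, you must dispose of negative net sales by showing they force $\Delta_X<0$, which you omit.

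Two smaller points. In the Pareto part, the utility function does not penalize a partial unwind that preserves net direction; this is exactly the trade used in the proof of \Cref{thm:zero-burn-impossibility}. What rules out such trades among sell-$Y$ users is cheapest-first filling (\Cref{fct:marginal-auction}). For the sell-$X$ user, do not invoke \Cref{fct:uic-implies-nfl}, which derives no free lunch \emph{from} UIC and so is circular here. The rate bounds $R_Y\le\rho_0S_X$ and $R_X\le\sigma_0S_Y$ already give $\Util\le(\rho_0-u)(S_X-R_X)$.
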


The remainder of the paper proves the theorem.  In comparison with our
one-sided guarantees in Theorem~\ref{thm:generic-seller}, we must now account for deviations
that can change which side is dominant or submit reports in both directions.

\section{Proofs for the Two-Sided Mechanism}

\subsection{Well-Formedness, Individual Rationality, and Feasibility}
\label{sec:feasibility}
Without loss of generality, we prove well-formedness, individual rationality, and 
feasibility only for
the sell-$Y$ dominated case, since the other case is symmetric.

\paragraph{Well-formedness and individual rationality.}
Well-formedness and individual rationality for the minority side follow 
in a straightforward fashion. 
Well-formedness and individual rationality for the dominant side 
follow directly from \Cref{thm:generic-seller}.

\paragraph{Feasibility.}
By Lemma~\ref{fct:cross-cover-Y}, the aggregate
$Y$ spent by sell-$Y$ users satisfies $Q_Y \ge \crossing$.
Recall that the minority sell-$X$ orders 
supply $D_X$ units of $X$ 
and must receive $\crossing=\rho_0D_X$ units of $Y$. Thus 
the compensation to 
sell-$X$ users is fully funded by the sell-$Y$ orders.
After using $\crossing$ units
of the dominant sell-$Y$ supply to compensate the sell-$X$ users,  the residual
$y:=Q_Y-\crossing\ge0$
units of $Y$ are sent into the AMM's pool, which returns exactly $F(y)$ units of $X$.
The total amount of $X$ available to compensate dominant sell-$Y$ sellers is
therefore
\[
    D_X+F(Q_Y-\crossing)
    =\widetilde F_Y^{\crossing}(Q_Y).
\]
By Equation~\eqref{eq:aggregate-payment-upper}
of \Cref{thm:generic-seller}, 
\begin{equation*}
    P_X := \sum_{i \in \Yside} x_i
    \le
    \widetilde F_Y^{\crossing}(Q_Y)
    =D_X+F(Q_Y-\crossing).
\end{equation*}
where $x_i$ is the amount of $X$ tokens
compensated to user $i \in \Yside$. 
Thus every dominant-side compensation payment is fully funded, and the burnt surplus
    $B_X
    :=D_X+F(Q_Y-\crossing)-P_X
    \ge0$.

\subsection{Pareto Optimality for Eligible Users}
\label{sec:local}

We now prove Pareto optimality for eligible users.
Without loss of generality, we assume
the sell-$Y$ dominated case, since the proof for the other
case is symmetric.

Every eligible sell-$X$ user is on the minority side and, by
\eqref{eq:minority-X}, exhausts its input budget.  Consequently, after the
mechanism executes, no eligible sell-$X$ user has any $X$ left available to
sell.  Any nontrivial trade among eligible users would require both a user
supplying $X$ and a user supplying $Y$.  Thus no subset of eligible users can
make an additional admissible trade among themselves, let alone one that is a
Pareto improvement.

It remains to rule out a strictly beneficial additional trade with the pool.
Let $\reportv{i}\le\sigma_0$ be 
the ask of an eligible sell-$Y$ user. 
Let $y :=Q_Y-\crossing\ge0$
be the residual $Y$ input to the pool after 
compensating the sell-$X$ users where $Q_Y$ is the aggregate $Y$ 
spent by sell-$Y$ users. 
Define the final marginal prices by
\[
    \sigma_e:=F'_+(y)
    \quad\text{($X$ per $Y$)},
    \qquad
    \rho_e:=\frac{1}{\sigma_e}.
\]

An eligible user whose input budget is exhausted cannot make such a trade.
Fix, therefore, a dominant-side sell-$Y$ user $i$  
that is underfilled, i.e., 
$y_i<\reportq{i}$.  
For convenience, let $G(\cdot) := \widetilde F_Y^{\crossing}(\cdot)$. 
If its ask satisfied $\reportv{i}<G'_+(Q_Y)$, 
then increasing $y_i$ by a sufficiently small amount would raise reported
welfare, contradicting the optimality of social welfare
achieved by the allocation rule.  Hence
\[
    \reportv{i}\ge G'_+(Q_Y)
    =F'_+(Q_Y-\crossing)
= \sigma_e
\]
where the first equality follows from the fact that 
$Q_Y\ge \crossing$ and 
the two pieces of \eqref{eq:Ftilde-Y} meet with the same
slope.  
For any additional amount
$0<\delta\le\reportq{i}-y_i$, concavity of $F$ gives
\[
    F(y+\delta)-F(y)
    \le F'_+(y)\delta
    =\sigma_e\delta
    \le\reportv{i}\delta.
\]
The additional pool output is therefore no greater than the user's reported
cost of supplying the additional input, so the trade cannot strictly improve
its utility.  This proves both Pareto-optimality conditions in a sell-$Y$
dominated profile.

\subsection{Incentive Compatibility}
\label{sec:sybil}

It suffices to prove user incentive compatibility (UIC),
since by \Cref{fct:uicimpliesbic}, 
UIC directly gives strategy-proofness
for a builder(-as-user) for our mechanism which is order-insensitive.

We first prove a technical lemma 
which will later help us bound the utility of a strategic user
who submits opposite-direction sybil orders, 
effectively extending the curve's available offering
at the initial price.

\begin{lemma}[Technical lemma]
\label{lem:crossing-shift}
Fix a finite set $O$ of sell-$Y$ reports other than one comparison seller, and
for $\crossing,q\ge0$ let
\begin{equation*}
    \welfareGainY(\crossing;v,q)
    :=
    W_{\widetilde F_Y^{\crossing}}\bigl(O\cup\{(v,q)\}\bigr)
    -W_{\widetilde F_Y^{\crossing}}(O).
\end{equation*}
For every $v\le\sigma_0$, every $\alpha\ge0$, and every comparison cap
$\widehat q$ satisfying $0\le\widehat q\le q+\alpha$,
\begin{equation}
    \welfareGainY(\crossing+\alpha;v,\widehat q)
    \le
    \welfareGainY(\crossing;v,q)+(\sigma_0-v)\alpha.
    \label{eq:shift-Y}
\end{equation}
Symmetrically, if $O$ is a fixed set of sell-$X$ reports and
\[
    \welfareGainX(\crossing;u,r)
    :=
    W_{\widetilde H_X^{\crossing}}\bigl(O\cup\{(u,r)\}\bigr)
    -W_{\widetilde H_X^{\crossing}}(O),
\]
then for $u\le\rho_0$, $\alpha\ge0$, and
$0\le\widehat r\le r+\alpha$,
\begin{equation*}
    \welfareGainX(\crossing+\alpha;u,\widehat r)
    \le
    \welfareGainX(\crossing;u,r)+(\rho_0-u)\alpha.
\end{equation*}
\end{lemma}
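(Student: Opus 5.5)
The plan is to work directly with the layer-cake representation \eqref{eq:seller-layercake} from the proof of \Cref{thm:generic-seller}, applied to the curve $\widetilde F_Y^{\crossing}$. We then reduce \eqref{eq:shift-Y} to a pointwise inequality about minima, which we integrate over $t$. We treat only the sell-$Y$ statement; the sell-$X$ statement follows verbatim with $\widetilde H_X^{\crossing}$, $\rho_0$, $u$, $r$ in place of $\widetilde F_Y^{\crossing}$, $\sigma_0$, $v$, $q$.

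\textbf{Step 1: compute $K$ for the shifted curve.} The right derivative of $\widetilde F_Y^{\crossing}$ equals $\sigma_0$ on $[0,\crossing)$ and equals $F'_+(y-\crossing)$ for $y\ge\crossing$. Hence, for every $t\in[0,\sigma_0)$,
\[
K_{\widetilde F_Y^{\crossing}}(t)=\crossing+K_F(t),
\]
with the convention $\crossing+\infty=\infty$. The single point $t=\sigma_0$ has measure zero, so it does not matter.

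\textbf{Step 2: rewrite the welfare gain as an integral.} Write $a(t):=D_O(t)$ and $c_\crossing(t):=\crossing+K_F(t)$. Applying \eqref{eq:seller-layercake} to $O\cup\{(v,q)\}$ and to $O$, and using that the added report contributes $q\one\{v\le t\}$ to $D$, gives
\[
\welfareGainY(\crossing;v,q)=\int_v^{\sigma_0}\Bigl(\min\{a(t)+q,\,c_\crossing(t)\}-\min\{a(t),\,c_\crossing(t)\}\Bigr)\,dt .
\]
The range of integration uses $v\le\sigma_0$: for $t<v$ the integrand vanishes.

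\textbf{Step 3: the pointwise inequality.} Define $g(c,b):=\min\{a+b,c\}-\min\{a,c\}$ for $a,b\ge0$ and $c\in[0,\infty]$. We claim that for $\alpha\ge0$ and $0\le\widehat q\le q+\alpha$,
\[
g(c+\alpha,\widehat q)\le g(c,q)+\alpha .
\]
The argument has three parts. First, $g$ is nondecreasing in $b$, so $g(c+\alpha,\widehat q)\le g(c+\alpha,q+\alpha)$. Second, $\min\{a+q+\alpha,\,c+\alpha\}=\min\{a+q,c\}+\alpha$. Third, $\min\{a,c+\alpha\}\ge\min\{a,c\}$. Combining these gives $g(c+\alpha,q+\alpha)\le g(c,q)+\alpha$. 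The case $c=\infty$ is trivially consistent.

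\textbf{Step 4: integrate.} Apply the claim with $c=c_\crossing(t)$, noting that $c_{\crossing+\alpha}(t)=c_\crossing(t)+\alpha$. Integrating over $t\in[v,\sigma_0)$ yields
\[
\welfareGainY(\crossing+\alpha;v,\widehat q)\le\welfareGainY(\crossing;v,q)+(\sigma_0-v)\alpha,
\]
which is \eqref{eq:shift-Y}.

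\textbf{Main obstacle.} The step needing the most care is Step 1 together with the validity of \eqref{eq:seller-layercake} for the piecewise curve. In particular, we must check that $K_{\widetilde F_Y^{\crossing}}(t)$ is exactly $\crossing+K_F(t)$ for all $t<\sigma_0$, including values of $t$ where $F'_+$ jumps or $K_F(t)=\infty$. We must also check that the layer-cake identity only uses $t\in[0,\sigma_0]$, which holds because $\widetilde F_Y^{\crossing}$ has initial slope $\sigma_0$. Since \Cref{thm:generic-seller} establishes \eqref{eq:seller-layercake} for every curve satisfying the one-sided conditions, and $\widetilde F_Y^{\crossing}$ satisfies them, only the $K$ computation needs explicit verification.
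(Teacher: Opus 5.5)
Your proposal is correct and follows essentially the same route as the paper: the layer-cake representation \eqref{eq:seller-layercake}, the translation $K_{M+\alpha}(t)=K_M(t)+\alpha$ for the extended curve, a pointwise inequality on minima, and integration over $t\in[v,\sigma_0]$. Your difference-of-minima integrand $\min\{a+q,c\}-\min\{a,c\}$ is algebraically identical to the paper's $\min\{q,(K_M(t)-D_O(t))_+\}$, and your three-step argument (monotonicity in the cap, exact translation of the first minimum, monotonicity of the second) is simply a tidier proof of the paper's pointwise bound \eqref{eq:min-shift}, which the paper establishes by case analysis on $w$.
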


\begin{proof}
We prove \eqref{eq:shift-Y} since the other direction is symmetric.  Apply the
layer-cake representation \eqref{eq:seller-layercake}.  Let $D_O(t)$ be the
aggregate capacity in $O$ with ask at most $t$, and let $K_{\crossing}(t)$ be the
threshold quantity associated with the curve $\widetilde F_Y^{\crossing}$.  Extending
the initial spot-price segment from $\crossing$ to $\crossing+\alpha$ translates this threshold
by $\alpha$: for almost every $t\in[0,\sigma_0]$,
\[
    K_{\crossing+\alpha}(t)=K_{\crossing}(t)+\alpha.
\]
For $v\le\sigma_0$, the layer-cake formula gives
\begin{equation*}
    \welfareGainY(\crossing;v,q)
    =
    \int_v^{\sigma_0}
       \min\bigl\{q,(K_{\crossing}(t)-D_O(t))_+\bigr\}\,dt.
\end{equation*}
For every real $w$, every $q,\alpha\ge0$, and every
$0\le\widehat q\le q+\alpha$,
\begin{equation}
    \min\{\widehat q,(w+\alpha)_+\}
    \le
    \min\{q,w_+\}+\alpha.
    \label{eq:min-shift}
\end{equation}
Indeed, replacing $\widehat q$ by the larger $q+\alpha$ can only increase the
left-hand side, after which the inequality is immediate by considering
$w\ge0$, $-\alpha<w<0$, and $w\le-\alpha$.  Apply
Equation~\eqref{eq:min-shift} pointwise with
$w=K_{\crossing}(t)-D_O(t)$ and integrate over
$t\in[v,\sigma_0]$.  The interval has length $\sigma_0-v$, yielding
Equation~\eqref{eq:shift-Y}.
\end{proof}

\subsubsection{A Real Sell-$Y$ User}

Fix all reports of other real users and consider one real sell-$Y$ user of type
$(v,q)$.  Let
\begin{equation*}
   \DYother
    :=\sum_{\substack{i\text{ other sell-}Y\\[2pt] \reportv{i}\le\sigma_0}}\reportq{i},
\qquad
   \DXother
    :=\sum_{\substack{i\text{ other sell-}X\\[2pt] \reportu{i}\le\rho_0}}\reportr{i},
    \qquad
    \MYother:=\rho_0\DXother,
\end{equation*}
where 
$\DYother$
and 
$\DXother$
are the total spot-eligible sell-$Y$ and sell-$X$ capacities of the other
users, respectively.  
Thus 
$\DYother$,
$\DXother$,
and $\MYother$ are fixed before the strategic user's
sybil identities 
are introduced.

\paragraph{Truthful utility.}
The following facts about truthful utility are easy to see: 
\begin{itemize}[leftmargin=6mm,itemsep=1pt]
\item 
If $v>\sigma_0$, truthful utility is zero.  
\item 
If $v\le\sigma_0$ and
$\DYother+q<\MYother$, the truthful user is on the minority side and receives
\begin{equation*}
    \Util^{\rm tr}=(\sigma_0-v)q.
\end{equation*}
Moreover, we have 
\begin{equation}
    \welfareGainY(\MYother;v,q)=(\sigma_0-v)q=\Util^{\rm tr}.
    \label{eq:Delta-equals-minority}
\end{equation}
\item 
If $v\le\sigma_0$ and $\DYother+q\ge \MYother$, truthful sell-$Y$ is dominant and
\begin{equation}
    \Util^{\rm tr}=\welfareGainY(\MYother;v,q)
    \ge(\sigma_0-v)\ell,
    \qquad
    \ell:=(\MYother-\DYother)_+.
    \label{eq:sybil-Y-truth-dom}
\end{equation}
\end{itemize}

Replace the truthful identity by an arbitrary finite set $\playerids$ of sybil bids
in either direction.  Henceforth, we use the aggregate flow notation from
Equation~\eqref{eq:portfolio-flows}.  The user's net $Y$ sale is
$\netYSale=S_Y-R_Y$.  If $\netYSale>q$, the attack has utility $-\infty$ by
Equation~\eqref{eq:portfolio-utility-Y}.  A negative net sale is not 
catastrophic by itself: under Equation~\eqref{eq:portfolio-utility-Y} we must also show
that it entails a strict net loss in $X$.  We verify this separately in each
dominance regime below.  After doing so, 
essentially we only need to consider the regime
\begin{equation}
    0\le \netYSale\le q.
    \label{eq:finite-Y-netcap}
\end{equation}
Notice that Equation~\eqref{eq:finite-Y-netcap} does \emph{not} imply $S_Y\le q$.

\paragraph{\texorpdfstring{The sybil profile is sell-$Y$ dominated.}{The sybil profile is sell-Y dominated.}}
Every spot-eligible sell-$X$ sybil identity is then on the minority side and is
fully executed at the spot price; every ineligible one executes zero. 
Therefore  $R_Y=\rho_0S_X$.
These opposite-direction sybils extend the dominant sell-$Y$ crossing
parameter from $\MYother$ to $\MYother+R_Y$.

The individual payment bound \eqref{eq:individual-payment-upper}, summed over
the user's dominant-side sell-$Y$ sybils, gives $R_X\le\sigma_0 S_Y$.  If
net $Y$ sale $\netYSale = S_Y-R_Y<0$, then $\Delta_X=R_X-S_X=R_X-\sigma_0R_Y
    \le\sigma_0(S_Y-R_Y)<0$. In this case, 
the second catastrophic condition in
\eqref{eq:portfolio-utility-Y} applies, and the attack has $-\infty$ utility.  
Moreover, if 
$\netYSale = S_Y - R_Y> q$, the attack also has $-\infty$ utility.
Thus we may assume 
\begin{equation}
    R_Y\le S_Y \le q+R_Y.
    \label{eq:gross-bound-Ydom}
\end{equation}
The user's utility is
\begin{equation}
\begin{aligned}
    \Util^{\rm syb}
      &=R_X-S_X-v(S_Y-R_Y)\\
      &=R_X-\sigma_0R_Y-v(S_Y-R_Y).
    \label{eq:Yfn-Ydom-utility}
\end{aligned}
\end{equation}

If $v>\sigma_0$, using again $R_X \le\sigma_0 S_Y$ together with
$S_Y-R_Y=\netYSale\ge0$,
\[
    \Util^{\rm syb}
    \le(\sigma_0-v)(S_Y-R_Y)
    \le0
    =\Util^{\rm tr}.
\]

Now suppose $v\le\sigma_0$.  Freeze the curve
$\widetilde F_Y^{\MYother+R_Y}$ and merge the user's sell-$Y$ sybils only
for purposes of comparison.  The UIC guarantee in
part~(e) of Theorem~\ref{thm:generic-seller},
with a virtual sell-$Y$ user of true cost $vz$ and budget equal to the realized gross sale
$S_Y$, gives
\begin{equation}
    R_X-vS_Y
    \le
    \welfareGainY(\MYother+R_Y;v,S_Y).
    \label{eq:Yfn-fixedcurve-netcap}
\end{equation}
By \eqref{eq:gross-bound-Ydom}, $S_Y\le q+R_Y$.  Combining
\eqref{eq:Yfn-Ydom-utility}, \eqref{eq:Yfn-fixedcurve-netcap}, and
Lemma~\ref{lem:crossing-shift},
\begin{align*}
    \Util^{\rm syb}
    &\le
    \welfareGainY(\MYother+R_Y;v,S_Y)
       -(\sigma_0-v)R_Y\notag\\
    &\le
    \welfareGainY(\MYother;v,q).
\end{align*}
The last quantity equals truthful utility both when truth is dominant, by the
VCG marginal-contribution identity, and when truth is minority, by
\eqref{eq:Delta-equals-minority}.  Thus this attack cannot help.

\paragraph{\texorpdfstring{The sybil profile is sell-$X$ dominated.}{The sybil profile is sell-X dominated.}}
The user's active sell-$Y$ sybil identities are now minority identities,  
so they receive $R_X=\sigma_0 S_Y$.
The dominant-side payment bound gives $R_Y \le\rho_0S_X$, or equivalently
$S_X\ge\sigma_0R_Y$.  If $\netYSale = S_Y-R_Y<0$, then
we also have $\Delta_X=\sigma_0 S_Y-S_X
    \le\sigma_0(S_Y-R_Y)<0$.
Hence Equation~\eqref{eq:portfolio-utility-Y} assigns the
attack utility $-\infty$.  
Moreover, if $\netYSale = S_Y-R_Y > q$, the attack also has $-\infty$ utility. 
We may therefore restrict finite-utility attacks in
this regime to $0\le S_Y-R_Y\le q$.  Their utility is
\begin{equation}
    \Util^{\rm syb}=\sigma_0S_Y-S_X-v(S_Y-R_Y)
    \le
    (\sigma_0-v)(S_Y-R_Y)
    =(\sigma_0-v)\cdot (\netYSale).
    \label{eq:Yfn-Xdom-netbound}
\end{equation}
If $v>\sigma_0$, this is nonpositive and cannot beat truthful utility zero.
If $v\le\sigma_0$ and truthful reporting makes sell-$Y$ the minority side,
then $\netYSale\le q$ makes Equation~\eqref{eq:Yfn-Xdom-netbound} at most the truthful value
$(\sigma_0-v)q$.

It remains to consider the case $v\le\sigma_0$ in which truthful reporting
makes sell-$Y$ dominant.  Rewrite the user's utility as
\begin{equation}
    \Util^{\rm syb}
    =(\sigma_0-v)(S_Y-\rho_0S_X)
      +v(R_Y-\rho_0S_X).
    \label{eq:Yfn-Xdom-decomp}
\end{equation}
The second term is nonpositive because $R_Y\le\rho_0S_X$.  Let $\SXother$ be the amount
of $X$ sold, in the sybil outcome, by all other sell-$X$ identities.  
\Cref{fct:cross-cover-X}
gives 
\begin{equation*}
    S_X+\SXother\ge\sigma_0(\DYother+S_Y).
\end{equation*}
Given that $\SXother\le \DXother=\sigma_0\MYother$,
we have 
\[
    S_Y-\rho_0S_X\le \MYother-\DYother\le\ell.
\]
Since $\sigma_0-v\ge0$, Equation~\eqref{eq:Yfn-Xdom-decomp} 
and 
\eqref{eq:sybil-Y-truth-dom} 
yield
\[
    \Util^{\rm syb}
    \le(\sigma_0-v)\ell
    \le \Util^{\rm tr}. 
\]
 Thus no arbitrary finite cross-direction
sybil attack benefits a true sell-$Y$ user under the 
utility function defined in Equation~\eqref{eq:portfolio-utility-Y}.

\subsubsection{A Real Sell-$X$ User}

Fix all reports of other real users and consider one real sell-$X$ user of type
$(u,r)$.  Let
\begin{equation*}
   \DYother
    :=\sum_{\substack{i\text{ other sell-}Y\\[2pt] \reportv{i}\le\sigma_0}}\reportq{i},
\qquad
   \DXother
    :=\sum_{\substack{i\text{ other sell-}X\\[2pt] \reportu{i}\le\rho_0}}\reportr{i},
\qquad
   \MXother:=\sigma_0\DYother.
\end{equation*}
Thus $\MXother$ is the other users' spot-eligible sell-$Y$ capacity measured
in units of $X$.  The proof is obtained from the proof for a real sell-$Y$
user by making the substitutions
\[
\begin{gathered}
    X\longleftrightarrow Y,
    \qquad F\longleftrightarrow H,
    \qquad \sigma_0\longleftrightarrow\rho_0,
    \qquad (v,q)\longleftrightarrow(u,r),\\
    \DYother\longleftrightarrow\DXother,
    \qquad \MYother\longleftrightarrow\MXother,
    \qquad \welfareGainY(M;v,q)\longleftrightarrow\welfareGainX(M;u,r),\\
    S_Y\longleftrightarrow S_X,
    \qquad R_Y\longleftrightarrow R_X,
    \qquad \SXother\longleftrightarrow\SYother,
    \qquad \netYSale\longleftrightarrow\netXSale,
\end{gathered}
\]
and replacing the dominant-side curve $\widetilde F_Y^M$ by
$\widetilde H_X^M$.

The only modifications concern boundary classification, because ties are
resolved in favor of sell-$Y$ dominance.  For a truthful eligible sell-$X$
user, truth is on the minority side when
\[
    \DXother+r\le\MXother,
\]
where equality is included, and its utility is
$(\rho_0-u)r=\welfareGainX(\MXother;u,r)$.  Truth is on the dominant side only when
\[
    \DXother+r>\MXother,
\]
in which case its utility is $\welfareGainX(\MXother;u,r)$ and is at least
\[
    (\rho_0-u)(\MXother-\DXother)_+.
\]
Likewise, a sybil profile is sell-$X$ dominated only under a strict
imbalance $D_Y<\rho_0D_X$.  The reverse branch, $D_Y\ge\rho_0D_X$, includes
equality and is sell-$Y$ dominated.  Thus, at equality, the user's
spot-eligible sell-$X$ sybils are minority orders and execute fully at the
spot rate $\rho_0$, while its ineligible sell-$X$ sybils execute zero.
With these changes to the weak and strict inequalities, every bound in the
sell-$Y$ proof carries over under the substitutions above.

\section{Social Welfare Maximization}
\label{sec:sw}

\paragraph{Defining natural.}
We call a deterministic AMM mechanism
\emph{natural} if it satisfies the following properties:

\begin{itemize}[leftmargin=15mm,itemsep=1pt]
\item[\Drop] {\it Drop ineligible bids.} Every ineligible bid receives zero allocation and zero transfer.
Thus a sell-$X$ report with $\reportu{j}>\rho_0$ receives
$(x_j,y_j)=(0,0)$, and a sell-$Y$ report with
$\reportv{i}>\sigma_0$ receives $(y_i,x_i)=(0,0)$.

\item[\NPR] {\it No price reversal.} The residual trade with the AMM does not reverse the direction
implied by the spot-eligible imbalance.  Thus, at a sell-$Y$-dominated profile
the AMM weakly receives $Y$, and at a sell-$X$-dominated profile it weakly
receives $X$.  Additionally, the final marginal price moves weakly against the
dominant side.  In particular, if the profile is sell-$Y$ dominated, then
\[
    \sigma_e\le\sigma_0,
    \qquad\text{and}\qquad
    \rho_e\ge\rho_0.
\]
If the profile is sell-$X$ dominated, then
\[
    \rho_e\le\rho_0,
    \qquad\text{and}\qquad
    \sigma_e\ge\sigma_0.
\]
Here $\sigma_e$ and $\rho_e$ are the final marginal compensation rates in
units of $X$ per $Y$ and $Y$ per $X$, respectively.  At a nondifferentiable
point, the inequalities refer to the marginal rate for an additional trade in
the relevant dominant direction.
\end{itemize}

\paragraph{Justification of naturalness.}
The \Drop condition stipulates that the mechanism simply ignores ineligible bids. This assumption is justified in \Cref{sec:ic-le-impossibility}, where we prove that no UIC mechanism can achieve 
Pareto optimality for all bids including ineligible ones. 
In fact, \Cref{sec:ic-le-impossibility} establishes a stronger 
impossibility result: even an efficiency notion strictly weaker than 
Pareto optimality is incompatible with UIC. 

\NPR formalizes the conventional price-impact intuition that aggregate
selling pressure on an asset should not cause that asset to appreciate.
This is analogous to the classical Walrasian t\^{a}tonnement principle that prices
adjust in the direction of excess demand~\cite{arrowhurwiczblock}.
\NPR imposes only this directional requirement and makes no assumption
about the magnitude of the resulting price movement.

\begin{lemma}[Structure of minority-side execution]
\label{lem:natural-minority-structure}
Fix a report profile and a deterministic natural mechanism satisfying UIC and
Pareto optimality for eligible users.  Every executed
minority-side order is compensated at the initial spot price.  Moreover,
every minority-side order whose ask is strictly below the initial spot price
exhausts its reported budget.

In particular, at a sell-$Y$-dominated profile, define $M$ by
\begin{equation}
    \sigma_0M=\sum_{j\in\Xside}x_j,
    \label{eq:sw-minority-M-Y}
\end{equation}
so that $\sigma_0M$ is the total $X$ spent by the minority side.  The
minority side receives exactly $M$ units of $Y$, and the dominant side spends
at least $M$ units of $Y$:
\begin{equation}
    \sum_{i\in\Yside}y_i\ge M.
    \label{eq:sw-dominant-covers-minority}
\end{equation}
The symmetric statements hold at a sell-$X$-dominated profile.
\end{lemma}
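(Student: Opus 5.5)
We work in a sell-$Y$-dominated profile; the sell-$X$-dominated case follows by the usual substitution $X\leftrightarrow Y$, $F\leftrightarrow H$, $\sigma_0\leftrightarrow\rho_0$. By \Drop, every minority order $j$ that executes has $\reportu{j}\le\rho_0$. The plan has three steps: (i) lower-bound the marginal rate at which the pool would still absorb $X$ after the mechanism runs; (ii) use Pareto optimality and this bound to show that strictly-below-spot minority orders are filled; (iii) use UIC to pin each executed minority order's average rate to exactly $\rho_0$. Inequality \eqref{eq:sw-dominant-covers-minority} then follows from feasibility together with \NPR.

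\textbf{Step (i): marginal sell-$X$ rate.} Let $y\ge0$ be the net $Y$ the AMM receives; $y\ge0$ is guaranteed by \NPR. An additional sell-$X$ trade moves back along $F$ to the left of $y$, so its marginal $Y$-per-$X$ rate is $1/F'_-(y)$. Concavity of $F$ and (A3) give $F'_-(y)\le F'(0)=\sigma_0$, hence this rate is at least $\rho_0$. \NPR ($\rho_e\ge\rho_0$) gives the same bound directly.

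\textbf{Step (ii): budget exhaustion.} Suppose a minority order $j$ has $\reportu{j}<\rho_0$ and is underfilled, $x_j<\reportr{j}$. Selling a further small $\delta>0$ units of $X$ to the pool yields at least $(\rho_0-o(1))\delta$ units of $Y$, at a reported cost of $\reportu{j}\delta$. This strictly improves $j$'s utility, contradicting part (b) of Pareto optimality for eligible bids.

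\textbf{Step (iii): rate equals spot.} This is the step I expect to be the main obstacle. I would use the taxation principle: by UIC, fixing the other reports, an identity's transfer must be a function of its realized allocation alone, and that function must be consistent with IR for every true type in $[0,\rho_0]$.

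For the lower bound $y_j\ge\rho_0x_j$, suppose the average rate $y_j/x_j$ were below $\rho_0$. Take a truthful minority user with the same budget and ask $u$ strictly between $y_j/x_j$ and $\rho_0$. Changing only the ask inside $[0,\rho_0]$ leaves $D_X$, and hence dominance and $M$, unchanged, so by Step (ii) this user is fully filled. Comparing the menus available to the two reports then yields either a violation of IR or a profitable misreport. For the upper bound, suppose some minority order were paid strictly more than $\rho_0$ per unit. A dominant-side eligible sell-$Y$ user with $v=\sigma_0$ (or a zero-type user, via \Cref{fct:uic-implies-nfl}) could then inject a small spot-eligible sell-$X$ sybil with ask $0$. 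Dominance is preserved for small budgets, since ties favor sell-$Y$; if the profile is exactly tied, I would instead perturb the sybil's budget downward. The user could then rebuy the lost $X$ through its own sell-$Y$ report, obtaining a net gain in $X$ with no net loss in $Y$, which is a free lunch and contradicts UIC. The delicate part is making this round trip hit exactly zero net $Y$ while keeping the allocation of its sell-$Y$ identity controlled. If that fails, I would fall back on feasibility combined with Pareto optimality among users: a minority order paid above spot, with a dominant user willing to sell at $\sigma_0$, admits a reallocation that improves both parties.

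\textbf{Conclusion.} With every executed minority order paid exactly $\rho_0$ per unit, the minority side spends $\sigma_0M$ units of $X$ by definition \eqref{eq:sw-minority-M-Y} and receives exactly $M$ units of $Y$. Feasibility requires this $Y$ to come from the dominant side or from the AMM. By \NPR the AMM weakly receives $Y$, i.e.\ $\sum_{i\in\Yside}y_i-M\ge0$, which is \eqref{eq:sw-dominant-covers-minority}.
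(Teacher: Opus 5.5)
Your Steps (i) and (ii), the lower bound $y_j\ge\rho_0x_j$ in Step (iii), and the final accounting via \NPR are sound and essentially match the paper. The paper derives the lower bound from UIC against deviating to an ineligible ask, which earns zero by \Drop, rather than from IR, but this is the same constant-compensation-plus-limit argument.

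The genuine gap is the upper bound $y_j\le\rho_0x_j$. Your sybil argument assumes that because order $j$ is paid above spot, an injected sell-$X$ sybil with a \emph{different} report (ask $0$, tiny budget) would be paid above spot too. Nothing in \Drop, \NPR, Pareto optimality, or UIC links the two, since each identity's compensation depends on its own report and on the whole profile. Ruling out overpayment of such a minority-side report is in fact an instance of what the lemma is trying to prove, so the argument is circular in spirit. The rebuy leg runs at whatever rate the mechanism pays the dominant side, which, as you note, you do not control. And at an exact tie $D_Y=\rho_0D_X$, any positive spot-eligible sell-$X$ budget flips dominance, so shrinking the sybil's budget does not help.

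Your fallback also fails, because Pareto optimality constrains marginal rates, not the sunk average transfer $y_j/x_j$. A fully filled minority order has no $X$ left to sell. A bilateral trade between a sell-$X$ user with ask $\reportu{j}\le\rho_0$ and a sell-$Y$ user with ask $\reportv{i}\le\sigma_0$ would need a rate $\rho$ with $1/\reportv{i}<\rho<\reportu{j}$, which is impossible since $\reportu{j}\reportv{i}\le1$. So overpayment of a minority order is invisible to Pareto optimality.

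The missing idea is to let an \emph{ineligible} type mimic order $j$:
\begin{itemize}
\item Fix the other reports and take a true type $(a,\reportr{j})$ with $a>\rho_0$. Its truthful report is ineligible, so by \Drop it earns zero.
\item Reporting $(\reportu{j},\reportr{j})$ instead reproduces the given profile and outcome $(x_j,y_j)$. Since $x_j\le\reportr{j}$, no catastrophic case arises.
\item UIC gives $0\ge y_j-ax_j$, and letting $a\downarrow\rho_0$ yields $y_j\le\rho_0x_j$.
\end{itemize}
This single comparison covers every eligible minority order, including boundary asks $\reportu{j}=\rho_0$ and orders with $x_j=0$. The latter case matters: your average-rate argument only treats executed orders, but the claim that the minority side receives exactly $M$ units of $Y$ also needs $y_j=0$ whenever $x_j=0$.
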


\begin{proof}
We prove the result for a sell-$Y$-dominated profile.  Consider an eligible
minority-side sell-$X$ user $j$.  If $\reportu{j}<\rho_0$ and
$x_j<\reportr{j}$, then \NPR gives
$\rho_e\ge\rho_0>\reportu{j}$.  The user could sell a sufficiently small
additional amount of $X$ directly to the pool at a marginal rate strictly
above its reservation value, contradicting Pareto optimality for eligible
users.  Hence
\begin{equation}
    x_j=\reportr{j}
    \qquad\text{whenever }\reportu{j}<\rho_0.
    \label{eq:sw-strict-minority-full}
\end{equation}
Fix the other reports and bidder $j$'s reported budget $r$.  For an ask
$t<\rho_0$, let $h(t)$ denote its compensation in $Y$.  By
\eqref{eq:sw-strict-minority-full}, every such report sells $r$.  UIC applied
to any two asks $s,t<\rho_0$, in both directions, implies $h(s)=h(t)=C$.
A type $(t,r)$ can report an ineligible ask and obtain zero by \Drop, so
$C-tr\ge0$; letting $t\uparrow\rho_0$ gives $C\ge\rho_0r$.  Conversely, a
type $(a,r)$ with $a>\rho_0$ obtains zero truthfully and can report
$t<\rho_0$, so UIC gives $0\ge C-ar$; letting $a\downarrow\rho_0$ gives
$C\le\rho_0r$.  Thus $C=\rho_0r$.

For a boundary report $\reportu{j}=\rho_0$, UIC relative to an ineligible
report gives $y_j-\rho_0x_j\ge0$.  A type $a>\rho_0$ that deviates to this
boundary report must obtain utility at most zero, so
$0\ge y_j-ax_j$.  Letting $a\downarrow\rho_0$ gives the reverse inequality.
Consequently,
\begin{equation}
    y_j=\rho_0x_j
    \label{eq:sw-minority-spot-compensation}
\end{equation}
for every eligible minority-side user (and hence for every executed one, by
\Drop).  Summing \eqref{eq:sw-minority-spot-compensation} and using
$\rho_0=1/\sigma_0$ shows that the minority side receives $M$ units of $Y$.
If $y:=\sum_{i\in\Yside}y_i$ is the dominant side's total expenditure, token
balance leaves $y-M$ units of $Y$ as the residual input to the AMM.  By \NPR
this residual trade is in the dominant direction, so $y-M\ge0$.  This proves
\eqref{eq:sw-dominant-covers-minority}.  The sell-$X$-dominated case follows
by symmetry.
\end{proof}

\paragraph{Definition of dominant-side welfare.}
By \Cref{lem:natural-minority-structure}, at a sell-$Y$-dominated outcome the
minority side spends $\sigma_0M$ units of $X$ and receives $M$ units of $Y$,
while the dominant side spends some total $y\ge M$ units of $Y$.  The first
$M$ units of dominant-side expenditure are crossed with the minority supply,
and the remaining $y-M$ units are sent to the pool.  We therefore define
dominant-side social welfare as the $X$ produced by these two sources minus
the reported production cost of the dominant-side $Y$:
\begin{equation}
    \mathsf{DSW}_Y
    :=\sigma_0M+F(y-M)-\sum_{i\in\Yside}\reportv{i}y_i,
    \qquad y:=\sum_{i\in\Yside}y_i.
    \label{eq:dominant-sw-Y}
\end{equation}
For a sell-$X$-dominated outcome, define $M$ by
$\rho_0M=\sum_{i\in\Yside}y_i$ and, writing
$x:=\sum_{j\in\Xside}x_j\ge M$, define symmetrically
\begin{equation*}
    \mathsf{DSW}_X
    :=\rho_0M+H(x-M)-\sum_{j\in\Xside}\reportu{j}x_j.
\end{equation*}
Ineligible bids make zero contribution by \Drop.  The compensation token of
the dominant direction is the numeraire; no arbitrary common numeraire is
imposed across the two dominance regimes.

\begin{theorem}[Welfare maximality among natural mechanisms]
\label{thm:natural-welfare-maximality}
Fix a report profile.  Among deterministic natural mechanisms satisfying UIC
and Pareto optimality for eligible users, our mechanism has
the following properties:
\begin{enumerate}[label=(\roman*),leftmargin=8mm]
\item every strictly eligible minority-side user obtains the same truthful
utility under every mechanism in this class, while every boundary or
ineligible minority-side user obtains zero;
\item if the profile is sell-$Y$ dominated, our mechanism maximizes
$\mathsf{DSW}_Y$; if it is sell-$X$ dominated, our mechanism maximizes
$\mathsf{DSW}_X$.
\end{enumerate}
\label{thm:sw}
\end{theorem}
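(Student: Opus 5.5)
Both parts will be deduced from Lemma~\ref{lem:natural-minority-structure} together with the one-sided welfare characterization of \Cref{sec:generic-seller}. We argue at a sell-$Y$-dominated profile; the sell-$X$ case is symmetric. The one subtlety is that dominance is determined by the reports alone. Hence every mechanism in the class faces the same dominance regime, the same minority side, and the same quantities $D_X$ and $\rho_0D_X$.

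For part (i), take a minority-side sell-$X$ user $j$ with $\reportu{j}<\rho_0$. By Lemma~\ref{lem:natural-minority-structure} it exhausts its budget, $x_j=\reportr{j}$, and is compensated at the spot rate, $y_j=\rho_0\reportr{j}$. Its truthful utility is therefore $(\rho_0-u_j)r_j$ in every mechanism of the class. For a boundary user with $\reportu{j}=\rho_0$, the identity $y_j=\rho_0x_j$ from the same lemma gives utility $y_j-\rho_0x_j=0$. An ineligible user gets $(0,0)$ by \Drop, hence zero. This matches our mechanism, which fully executes eligible minority orders at the spot price by \eqref{eq:minority-X}.

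For part (ii), fix any mechanism in the class with outcome $\{y_i\}$ and minority volume $M$ as in \eqref{eq:sw-minority-M-Y}. By Lemma~\ref{lem:natural-minority-structure}, $y:=\sum_i y_i\ge M$. Every strictly eligible minority order sells fully, every boundary order sells at most its budget, and ineligible orders sell nothing, so $\sigma_0M\le D_X$, i.e.\ $M\le \crossing:=\rho_0D_X$. The key inequality I will prove is
\[
\sigma_0M+F(y-M)\le \widetilde F_Y^{\crossing}(y)\qquad\text{for all } M\le \crossing \text{ and } M\le y.
\]
For $y\le\crossing$ this follows from $F(y-M)\le\sigma_0(y-M)$, which holds by concavity with $F'_+(0)=\sigma_0$. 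For $y>\crossing$ it amounts to $F(y-M)-F(y-\crossing)\le\sigma_0(\crossing-M)$, which again holds because the slope of $F$ is at most $\sigma_0$ everywhere. Consequently
\[
\mathsf{DSW}_Y\le \widetilde F_Y^{\crossing}(y)-\sum_i\reportv{i}y_i={\sf SW}_{\widetilde F_Y^{\crossing}}(\Yside,\{y_i\}) \le W_{\widetilde F_Y^{\crossing}}(\Yside).
\]
Dropping ineligible sell-$Y$ bids costs nothing here, since they have zero allocation under \Drop and are never used by the maximizer. The last inequality holds because the outcome respects the budgets $0\le y_i\le\reportq{i}$.

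It remains to check that our mechanism attains this bound. Our mechanism uses $M=\crossing$ exactly, since it fully executes all eligible minority orders. It chooses $\{y_i\}$ to maximize ${\sf SW}_{\widetilde F_Y^{\crossing}}$, and by Fact~\ref{fct:cross-cover-Y} its total satisfies $Q_Y\ge\crossing$. In that regime $\widetilde F_Y^{\crossing}(Q_Y)=D_X+F(Q_Y-\crossing)=\sigma_0\crossing+F(Q_Y-\crossing)$. So its $\mathsf{DSW}_Y$ equals $W_{\widetilde F_Y^{\crossing}}(\Yside)$, and it is maximal.

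The main obstacle is the key inequality, specifically justifying $M\le\crossing$ when boundary minority bids may be only partially filled. Lemma~\ref{lem:natural-minority-structure} forces full execution only for strict asks, so this has to come directly from the budget constraints. I will also verify that the quantity-maximizing tie rule is not needed for optimality here, only for Fact~\ref{fct:cross-cover-Y}.
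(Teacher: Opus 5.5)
Your proposal is correct and follows essentially the same route as the paper: part (i) comes from \Cref{lem:natural-minority-structure} together with \Drop, and part (ii) rests on three ingredients. These are the budget-cap bound $M\le\crossing=\rho_0D_X$, the crossing-length monotonicity $\sigma_0M+F(y-M)\le\widetilde F_Y^{\crossing}(y)$ (the paper's \eqref{eq:sw-crossing-monotonicity}, proved by the same slope-at-most-$\sigma_0$ argument), and feasibility of the competitor's allocation in your mechanism's one-sided welfare problem, with your appeal to \Cref{fct:cross-cover-Y} to identify your mechanism's $\mathsf{DSW}_Y$ with $W_{\widetilde F_Y^{\crossing}}$ merely making explicit a step the paper leaves implicit.
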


\begin{proof}
We prove the claims for a sell-$Y$-dominated profile.  The sell-$X$-dominated
case follows by symmetry.

By \Cref{lem:natural-minority-structure}, a strictly eligible sell-$X$ user
$j$ sells its full budget and is compensated at the initial price.  Its
truthful utility is therefore $(\rho_0-u_j)r_j$.  A boundary user has zero
utility whether or not it is executed, and an ineligible user has zero utility
by \Drop.  This proves (i).

\paragraph{Dominant-side welfare.}
Let
\[
    M^*:=\rho_0D_X
\]
be the crossing length used by our mechanism.  It fully executes every
spot-eligible sell-$X$ bid, including boundary bids.  For another natural
mechanism, define $M$ as in \eqref{eq:sw-minority-M-Y}.  The lemma and the
reported caps imply $M\le M^*$; the only possible
difference is partial execution of bids with ask exactly $\rho_0$.

For purposes of the comparison, extend the expression in
\eqref{eq:dominant-sw-Y} to all $y\ge0$ by writing
\[
    \widetilde F^M(y):=
    \begin{cases}
       \sigma_0y,&0\le y\le M,\\
       \sigma_0M+F(y-M),&y\ge M.
    \end{cases}
\]
On every outcome of a mechanism in the comparison class, the lemma gives
$y\ge M$, so the second branch is precisely the welfare expression already
defined in \eqref{eq:dominant-sw-Y}.

We claim that, for every $y\ge0$, the modified compensation curve is
nondecreasing in its crossing length:
\begin{equation}
    M\le M^*
    \quad\Longrightarrow\quad
    \widetilde F^M(y)\le\widetilde F^{M^*}(y).
    \label{eq:sw-crossing-monotonicity}
\end{equation}
If $y\le M$, the two sides are equal to $\sigma_0y$.  If
$M<y\le M^*$, concavity and $F'_+(0)=\sigma_0$ give
$F(y-M)\le\sigma_0(y-M)$, so the claim follows.  If $y>M^*$,
concavity gives
\[
    F(y-M)-F(y-M^*)
    \le \sigma_0(M^*-M),
\]
which again proves \eqref{eq:sw-crossing-monotonicity}.

Let $(y_i)$ be the dominant-side allocation of the other mechanism.  It is a
feasible candidate for our mechanism's one-sided welfare problem because the
reported caps are the same.  Our allocation maximizes the one-sided objective
with curve $\widetilde F^{M^*}$, and
\eqref{eq:sw-crossing-monotonicity} therefore gives
\begin{align*}
    \mathsf{DSW}_Y^{\rm ours}
    &\ge
      \widetilde F^{M^*}\!\left(\sum_i y_i\right)
        -\sum_i\reportv{i}y_i\\
    &\ge
      \widetilde F^{M}\!\left(\sum_i y_i\right)
        -\sum_i\reportv{i}y_i
     =\mathsf{DSW}_Y^{\rm other}.
\end{align*}
This proves part (ii).  Exchanging $X$ with $Y$, $F$ with $H$, and
$\sigma_0$ with $\rho_0$ proves both claims for a sell-$X$-dominated profile.
\end{proof}

\begin{remark}[Weak local efficiency suffices]
In fact, \Cref{lem:natural-minority-structure,thm:natural-welfare-maximality}
continue to hold even if Pareto optimality for eligible users is replaced by the
weaker notion of \emph{weak local efficiency} defined in
\cite{ammtrilemma}, which only requires that eligible users cannot obtain a
Pareto improvement by trading with the pool.  Indeed, the only use of Pareto
optimality in the proofs is to show that a strictly eligible minority-side
user with unspent budget could profitably make an additional trade with the
pool.  Weak local efficiency rules out exactly this possibility.
\end{remark}

\section{Restrictions on Builder Fee Structure}
\label{sec:builder-impossibility}

Our mechanism burns the unused compensation token.  A natural
alternative is to transfer some of that excess to the block builder and burn
only the remainder.  
In this section, we show that subject to UIC and strategy-proofness
for a builder who is simultaneously a user with intrinsic demand,  
the only possible builder fee structure
is to have the residual surplus entirely ``burnt'', 
i.e., redistributed to the community, 
and pay the builder zero fees. 
In practice, however, one can remunerate 
the builder with a constant exogenous block subsidy
that is independent of the current batch's outcomes, 
e.g., using the community surplus accmulated so far. 
Such a payment will not affect
the game theoretic guarantees achieved by the mechanism. 

Our ``zero builder fee'' impossibility 
can be viewed as a seller-side analogue of 
the ``zero miner revenue'' impossibility in transaction-fee mechanisms; see,
for example, Theorem~4.7 of Chung and Shi~\cite{foundation-tfm}.  
Our proof techniques are also inspired by their work. 

\subsection{Additional Preliminaries and Notations}

Because this is an impossibility result, it is enough to restrict attention to
a smaller type and strategy space.  
Specifically, we consider a 
a finite sell-$Y$-only batch. 
Moreover, since our proof will only need strategies that misreport valuation, we may
fix arbitrary budget $\bar q_i>0$.  
Thus we effectively consider a single-parameter environment
where only 
the per-unit reservation values are private.  
Any mechanism that is incentive compatible on the
richer type space of the main model must remain incentive compatible on this
single-parameter restriction.

Although our earlier feasibility is a deterministic
AMM mechanism, our impossibility result in this section
holds even for {\it randomized} mechanisms.  For a report vector
$\widetilde{\mathbf v}=(\reportv{1},\ldots,\reportv{n})$, let
\[
    \bar y_i(\widetilde{\mathbf v})\in[0,\bar q_i]
\]
denote the {\it expected} amount of $Y$ sold by user $i$, and let
$\bar x_i(\widetilde{\mathbf v})$ denote its {\it expected} compensation in
$X$.  Let
\[
    \bar\mu(\widetilde{\mathbf v})\ge0,
\]
and $\bar\beta(\widetilde{\mathbf v})\ge0$ denote the {\it expected} builder fee and
{\it expected} burn amount, respectively.  
In the above, all expectations are taken over the mechanism's random coins.
Let
\[
    \bar Y(\widetilde{\mathbf v}):=\sum_i \bar y_i(\widetilde{\mathbf v}).
\]

In our impossibility result (\Cref{thm:zero-builder-fee}), we will make
use of individual rationality (IR), feasibility, user incentive compatibility (UIC), and 
strategy-proofness for a builder who is also a user with intrinsic demand. 
Since we want the impossibility to work even for 
possibly randomized mechanisms, the UIC and strategy-proofness
definitions will be modified to use the expected utility instead. 
Note that for a builder-as-user, the fees obtained by   
the builder will be added to its utility. 
For a feasibility result we would ideally
want IR to hold with probability $1$. However, for our impossibility
proof, we only need IR to hold in the expectation too, that is,   
$\bar x_i-v_i\bar y_i\ge0$.
Feasibility requires that 
the builder fees be entirely funded only from the current batch's available AMM
output.  Ex-post feasibility and concavity of $F$ imply the following expected
accounting constraint:
\begin{fact}[Accounting constraint imposed by feasibility] 
For a possibly randomized AMM mechanism with a concave pricing curve that satisfies
feasibility,  
the following accounting constraint must hold when there are only sell-$Y$ bids. 
\begin{equation}
    \sum_i \bar x_i(\widetilde{\mathbf v})+\bar\mu(\widetilde{\mathbf v})
       +\bar\beta(\widetilde{\mathbf v})
    \le F\!\left(\bar Y(\widetilde{\mathbf v})\right).
    \label{eq:builder-funding}
\end{equation}
\label{fct:builder-funding}
\end{fact}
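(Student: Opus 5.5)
The plan is to prove the constraint pointwise, for every realization of the mechanism's random coins, and then take expectations. Fix the report vector $\widetilde{\mathbf v}$ and a realization $\omega$ of the coins. Write $y_i(\omega)$, $x_i(\omega)$, $\mu(\omega)$, $\beta(\omega)$ for the realized sale, compensation, builder fee and burn, and set $Y(\omega):=\sum_i y_i(\omega)$.

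When only sell-$Y$ bids are present, no sell-$X$ order supplies any $X$. Ex-post feasibility, as defined in the model, says every token paid out is funded by opposite-side orders or by an AMM trade, with no outside subsidy. So the only source of $X$ is the pool's output. The sellers deliver $Y(\omega)$ units of $Y$, and the pool can receive at most that much. The users' compensation, the builder fee and the burn are all paid from this output, which gives the ex-post inequality
\[
    \sum_i x_i(\omega)+\mu(\omega)+\beta(\omega)\le F\bigl(Y(\omega)\bigr).
\]

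One point needs checking here: the pool might be sent only part $y'\le Y(\omega)$ of the $Y$, with the rest burnt as $Y$. I will handle this by monotonicity of $F$, since $F(y')\le F(Y(\omega))$. I am also implicitly assuming that the pool trades only in the sell-$Y$ direction, i.e.\ receives $Y$ and pays out $X$. If the pool instead received $X$ and paid out $Y$, no $X$ would be available at all, because only sell-$Y$ orders are present. In that case the left-hand side is $0\le F(Y(\omega))$, using $F\ge0$ on $[0,\infty)$.

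Finally I take expectations over $\omega$. Linearity of expectation turns the left-hand side into $\sum_i\bar x_i+\bar\mu+\bar\beta$. Since $F$ is concave, Jensen's inequality gives $\mathbb{E}[F(Y(\omega))]\le F(\mathbb{E}[Y(\omega)])=F(\bar Y(\widetilde{\mathbf v}))$, where $\bar Y=\sum_i\bar y_i$ by linearity. Chaining these yields \eqref{eq:builder-funding}.

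The only delicate step is making the ex-post accounting precise, i.e.\ justifying that the $X$ available is at most $F$ of the total $Y$ sold. I would argue this directly from the feasibility definition together with monotonicity of $F$, as above.
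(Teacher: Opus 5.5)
Your proposal is correct and follows the paper's proof. Both derive the ex-post bound $\sum_i x_i+\mu+\beta\le F(Y)$ from feasibility for every realization of the coins, then take expectations and apply Jensen's inequality to the concave $F$. Your extra remarks make explicit details the paper leaves implicit: monotonicity of $F$ covers the case where only part of the $Y$ reaches the pool, and no $X$ can be deposited into the pool when only sell-$Y$ bids are present.
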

\begin{proof}
Feasibility requires that 
with probability 1, the compensation to all users, the builder's fees, 
and the burnt amount are all fully funded by  
AMM pool's output. 
Taking expectation, and applying Jensen's inequality using the concavity
of $F$, 
we immediately get the stated accounting constrant. 
\end{proof}

Throughout, we assume that the mechanism 
is not aware which identities are owned by the builder.

\subsection{Proof of Zero Builder Fee}

\begin{lemma}[Seller payment uniqueness]
\label{lem:seller-payment-uniqueness}
Fix a sell-$Y$ user $i$ and the reports of all other users.  Let
$\bar y_i(\reportv{i})$ be user $i$'s expected amount of $Y$ spent when it reports
reservation value $\reportv{i}$.  Suppose two expected compensation rules
$\bar x_i(\reportv{i})$ and $\bar x_i'(\reportv{i})$ both make truthful reporting
dominant for every true reservation value $v_i$, under utilities
    $\bar x_i(\reportv{i})-v_i\bar y_i(\reportv{i})$ and
    $\bar x_i'(\reportv{i})-v_i\bar y_i(\reportv{i})$,
respectively.  Then
$\bar x_i'(\reportv{i})-\bar x_i(\reportv{i})$ is constant in $\reportv{i}$.
\end{lemma}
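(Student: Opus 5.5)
This is the standard Myerson payment-identity argument in a single-parameter setting, run for a seller rather than a buyer. The plan is to show that any expected compensation rule making truthful reporting dominant is pinned down, up to an additive constant, by the allocation rule $\bar y_i(\cdot)$ alone. Subtracting two such rules then gives a constant. Write $y(t):=\bar y_i(t)$ and $x(t):=\bar x_i(t)$, and let $U(v):=x(v)-v\,y(v)$ be the truthful utility.

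First, dominance of truthful reporting gives, for every pair $v,t$,
\[
    U(v)\ge x(t)-v\,y(t)=U(t)-(v-t)\,y(t).
\]
This has two consequences.
\begin{itemize}
\item $U(v)=\max_t\{x(t)-v\,y(t)\}$ is a pointwise maximum of affine functions of $v$, so $U$ is convex.
\item $-y(t)$ is a subgradient of $U$ at $t$. Applying the inequality in both directions also gives $(v-t)\bigl(y(t)-y(v)\bigr)\ge0$, so $y$ is nonincreasing, as expected for a seller.
\end{itemize}

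Next, since $U$ is convex, it is absolutely continuous on every compact interval. It is differentiable almost everywhere, with $U'(v)=-y(v)$ wherever the derivative exists, because the subgradient is unique there. Because $y$ takes values in $[0,\bar q_i]$, the function $U$ is Lipschitz with constant $\bar q_i$. Hence, for every $v$ and a fixed reference point $v_0$ (say $v_0=0$),
\[
    U(v)=U(v_0)-\int_{v_0}^{v}y(s)\,ds,
\]
which yields
\[
    x(v)=v\,y(v)+U(v_0)-\int_{v_0}^{v}y(s)\,ds.
\]

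Finally, the same identity holds for $x'$ with its own constant $U'(v_0):=x'(v_0)-v_0\,y(v_0)$, and the allocation $y$ is identical in both. Subtracting the two identities gives
\[
    x'(v)-x(v)=U'(v_0)-U(v_0),
\]
which is independent of $v$. This proves the lemma for all $\reportv{i}\ge0$.

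The only delicate step is justifying the integral formula rigorously, without assuming differentiability of $y$ or $x$. The approach is to use the Lipschitz and convexity properties of $U$ established above and invoke the fundamental theorem of calculus for absolutely continuous functions. Alternatively, one can bound $U(v)-U(t)$ between $-(v-t)\,y(t)$ and $-(v-t)\,y(v)$ and pass to Riemann sums, using that the monotone function $y$ is Riemann integrable.
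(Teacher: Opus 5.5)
Your proposal is correct and follows essentially the same route as the paper. Both establish the seller envelope identity $\bar x_i(v)=v\,\bar y_i(v)+U(0)-\int_0^{v}\bar y_i(s)\,ds$ for each truthful compensation rule with the same spending rule $\bar y_i$, then subtract. The paper cites Myerson's lemma for this identity, whereas you derive it from the convexity and $\bar q_i$-Lipschitz continuity of $U$. One cosmetic point: denote the truthful utility under $\bar x_i'$ by something other than $U'$, since you already use $U'$ for the derivative of $U$.
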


\begin{proof}
This is the seller form of the usual single-parameter payment-uniqueness
lemma.  Dominant-strategy truthfulness implies that $\bar y_i(\reportv{i})$ is
nonincreasing in $\reportv{i}$.  For the truthful compensation rule $\bar x_i$,
the envelope identity~\cite{myerson,archer2001truthful} (i.e.,
the unique payment identity of Myerson's lemma adapted
to a seller or procurement setting) 
gives
\[
    \bar x_i(v_i)-v_i\bar y_i(v_i)
    =\bar x_i(0)-\int_0^{v_i}\bar y_i(s)\,ds,
\]
and hence
\begin{equation*}
    \bar x_i(v_i)
    =v_i\bar y_i(v_i)+\bar x_i(0)-\int_0^{v_i}\bar y_i(s)\,ds.
\end{equation*}
The same identity holds for $\bar x_i'$ with exactly the same spending
rule $\bar y_i$.  Subtracting the two identities yields
\[
    \bar x_i'(v_i)-\bar x_i(v_i)
    =\bar x_i'(0)-\bar x_i(0),
\]
which is independent of $v_i$.  The argument applies to expected spending and
expected compensation for a randomized mechanism.
\end{proof}

\begin{theorem}[Zero builder fees]
\label{thm:zero-builder-fee}
Let $F:[0,+\infty)\to\R_{\ge0}$ be continuous, strictly increasing, and
concave, with $F(0)=0$, and let $\sigma_0=F'_+(0)\in(0,+\infty)$. 
Consider any possibly randomized mechanism 
that satisfies individual rationality, feasibility, 
UIC, and strategy-proofness for a builder-as-user. 
Then, for every report profile $\widetilde{\mathbf v}$,
\begin{equation}
    \bar\mu(\widetilde{\mathbf v})=0.
    \label{eq:zero-builder-fee}
\end{equation}
In particular, because the builder fee is nonnegative, zero
expectation implies that it is zero with probability $1$.  
\end{theorem}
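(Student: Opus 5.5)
Following the TFM ``zero miner revenue'' argument of Chung and Shi, I will exploit that a builder-as-user cares about the sum of its own user utility and the builder fee. Fix a report profile $\widetilde{\mathbf v}$ and a user $i$ with all other reports fixed. Suppose user $i$ is itself the builder. Strategy-proofness for the builder-as-user says that for every true value $v_i$, reporting $v_i$ maximizes $\bar x_i(\reportv{i})-v_i\bar y_i(\reportv{i})+\bar\mu(\reportv{i})$. UIC says truthful reporting maximizes $\bar x_i(\reportv{i})-v_i\bar y_i(\reportv{i})$ alone (since the mechanism cannot tell which identity is the builder, both constraints apply to the same functions). Thus both $\bar x_i$ and $\bar x_i':=\bar x_i+\bar\mu$ are truthful compensation rules for the same expected spending rule $\bar y_i$. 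By \Cref{lem:seller-payment-uniqueness}, $\bar\mu(\reportv{i},\widetilde{\mathbf v}_{-i})$ is constant in $\reportv{i}$. Applying this coordinate by coordinate, I will conclude that $\bar\mu$ is independent of the entire report vector, i.e.\ $\bar\mu(\widetilde{\mathbf v})\equiv c$ for a constant $c\ge0$ (on the fixed budget profile).

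Next I will show $c=0$ by evaluating at a profile where the pool produces nothing. I will choose all reported values very large, say $\reportv{i}>\sigma_0$ for every $i$. Individual rationality in expectation gives $\bar x_i\ge \reportv{i}\bar y_i>\sigma_0\bar y_i$ whenever $\bar y_i>0$. Combining with \Cref{fct:builder-funding} and concavity ($F(\bar Y)\le\sigma_0\bar Y$), I get $\sum_i \reportv{i}\bar y_i+c\le\sum_i\bar x_i+c\le F(\bar Y)\le\sigma_0\bar Y$. Since each $\reportv{i}>\sigma_0$, this forces $\bar Y=0$ and then $c\le F(0)=0$. Hence $c=0$, so $\bar\mu\equiv0$. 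Nonnegativity of the realized fee then gives zero almost surely.

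The main obstacle is making the first step rigorous. \Cref{lem:seller-payment-uniqueness} requires dominance for every true value. The builder-as-user constraint must be stated so that user $i$ with any true $v_i$ may be the builder. I also need the envelope identity's usual regularity, which is essentially covered by that lemma, and the budgets $\bar q_i$ are fixed so the single-parameter restriction applies. A subtlety is that the ``constant'' from the uniqueness lemma depends on $\widetilde{\mathbf v}_{-i}$. Chaining over coordinates, changing one report at a time from $\widetilde{\mathbf v}$ to the all-high profile, shows that $\bar\mu$ takes the same value at both. This is all that is needed, so no global constancy argument beyond this path is required.
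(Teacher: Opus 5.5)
Your proposal is correct and follows the paper's proof essentially step for step. The paper likewise applies \Cref{lem:seller-payment-uniqueness} to $\bar x_i$ and $\bar x_i+\bar\mu$ coordinate by coordinate, and then shows the resulting constant is zero at an anchor profile using IR, the funding constraint \eqref{eq:builder-funding}, and the tangent bound $F(y)\le\sigma_0 y$. The only cosmetic difference is the choice of anchor: the paper sets every ask equal to $\sigma_0$, which yields $\bar\mu+\bar\beta\le 0$ directly, whereas you use asks strictly above $\sigma_0$; both choices work.
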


As mentioned, this impossibility
uses only sell-$Y$ profiles and only deviations
that misreport one's valuation. Thus the impossibility
continues to hold when allowing
profiles in both directions and a broader strategy space, 
e.g., when 
users and builders can misreport both their valuation and budget, and 
when a builder may additionally censor or reorder bids
or inject sybil orders.

\begin{proof}(of \Cref{thm:zero-builder-fee}.)
Fix a user identity $i$ and all reports $\widetilde{\mathbf v}_{-i}$.  For a possible report $\reportv{i}$ of
identity $i$, abbreviate $\bar y_i(\widetilde{\mathbf v}_{-i},\reportv{i})$,
$\bar x_i(\widetilde{\mathbf v}_{-i},\reportv{i})$, and
$\bar\mu(\widetilde{\mathbf v}_{-i},\reportv{i})$ by
$\bar y_i(\reportv{i})$, $\bar x_i(\reportv{i})$, and $\bar\mu(\reportv{i})$,
respectively.  By user incentive compatibility, $\bar x_i(\reportv{i})$ is a
truthful seller compensation rule for spending rule $\bar y_i(\reportv{i})$.  Now
consider the same numerical bid profile in the
admissible role assignment in which identity $i$ belongs to the block builder.
Strategy-proofness for a builder-as-user says that
\[
    \bar x_i(\reportv{i})+\bar\mu(\reportv{i})
\]
is also a truthful seller compensation rule for the \emph{same} spending rule
$\bar y_i(\reportv{i})$, because the builder's total utility from this restricted deviation is
\[
    \bar x_i(\reportv{i})+\bar\mu(\reportv{i})-v_i\bar y_i(\reportv{i}).
\]
Lemma~\ref{lem:seller-payment-uniqueness} therefore implies that
\begin{equation}
    \bar\mu(\reportv{i})\ \text{is independent of }\reportv{i}
    \quad\text{when }\widetilde{\mathbf v}_{-i}\text{ is fixed}.
    \label{eq:builder-coordinate-independence}
\end{equation}
Since strategy-proofness for a builder-as-user applies no matter which
identity is builder-owned,
\eqref{eq:builder-coordinate-independence} holds for every coordinate.  Thus
expected builder revenue is invariant under changing the bids one coordinate
at a time.  For every finite profile $\widetilde{\mathbf v}$,
\begin{equation}
    \bar\mu(\widetilde{\mathbf v})
    =\bar\mu(\sigma_0,\ldots,\sigma_0).
    \label{eq:builder-anchor-reduction}
\end{equation}

It remains to evaluate the anchor profile in which every user's true
reservation value and truthful report equal $\sigma_0$.  User individual
rationality gives, for every $i$,
\[
    \bar x_i\ge \sigma_0\bar y_i.
\]
Summing and writing $\bar Y=\sum_i\bar y_i$,
\begin{equation}
    \sum_i\bar x_i
    \ge\sigma_0\bar Y.
    \label{eq:builder-anchor-ir}
\end{equation}
Concavity, $F(0)=0$, and $F'_+(0)=\sigma_0$ imply the global tangent bound
\begin{equation}
    F(y)\le\sigma_0 y 
    \qquad\text{for every } y\ge0.
    \label{eq:builder-tangent-bound}
\end{equation}
Using the funding constraint \eqref{eq:builder-funding} and
\eqref{eq:builder-anchor-ir}--\eqref{eq:builder-tangent-bound},
\begin{align*}
    \sigma_0\bar Y
    +\bar\mu
    +\bar\beta
    &\le
    \sum_i\bar x_i
      +\bar\mu
      +\bar\beta\\
    &\le
    F(\bar Y)\\
    &\le
    \sigma_0\bar Y.
\end{align*}
Hence
\[
    \bar\mu+\bar\beta\le0.
\]
Both quantities are nonnegative, so in particular
\[
    \bar\mu(\sigma_0,\ldots,\sigma_0)=0.
\]
Combining this with \eqref{eq:builder-anchor-reduction} proves
\eqref{eq:zero-builder-fee}.
\end{proof}

\section{Impossibility with Censorship}
\label{sec:censorship-impossibility}

For our feasibility result, we assume that the underlying consensus 
guarantees censorship resilience, and thus the ability to censor
is excluded from the builder's strategy space.  
In this section, we justify this assumption by
proving an impossibility. 
We show that when the block size is finite and the block builder may censor bids,  
no AMM mechanism
can simultaneously satisfy individual rationality, feasibility,
UIC, as well as strategy-proofness for a builder
who is simulatenously a user with intrinsic demand.
Further, this impossibility 
holds even for randomized mechanisms. 
The proofs in this section are inspired
by the transaction fee mechanism literature, e.g., see
Corollary 4.9 of Chung and Shi~\cite{foundation-tfm}.

\paragraph{A finite block model allowing censorship.}
More specifically, let $k$ denote the finite block size. The builder is tasked with selecting up to $k$ bids from among all outstanding bids and including them in the block in some prescribed order. 
Then, some trusted on-chain rule makes allocation
and compensation decisions based on the included bids and their relative ordering.

A strategic builder-as-user may observe the other users' bids and then strategically choose which bids to include and how to sequence them. In particular, it may omit any pending bids and may inject and include bids associated with its own sybil identities. 

The strategy space of a strategic user remains the same as before.
As in \Cref{sec:builder-impossibility}, we assume
that the mechanism is unaware which user identities
are owned by the builder.

\paragraph{Anonymity.}
We assume that the AMM mechanism is anonymous: its on-chain rules make allocation and payment decisions based solely on the valuations and budgets reported in the bids, as well as their relative ordering within the block. Any identity-related information (e.g., public keys) is ignored. This anonymity assumption is both desirable and satisfied by real-world AMMs, as well as by the mechanisms proposed in this paper. In particular, anonymity is consistent with the core decentralization principles underlying modern blockchains.

\paragraph{Non-triviality.}
We call the mechanism \emph{non-trivial} if there is some block
$S$ and an identity $i\in S$ whose expected compensation is strictly positive:
$\mathbb E[x_i(S)]>0$.  The expectation is over all randomization of the
mechanism.

\begin{theorem}[Impossibility with censorship]
\label{thm:censorship-impossibility}
Assume that the block size is finite. 
No non-trivial, anonymous AMM mechanism
can simultaneously satisfy
individual rationality, feasibility, UIC, and
strategy-proofness for a builder-as-seller.  
\end{theorem}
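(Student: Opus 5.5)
Following Corollary~4.9 of Chung and Shi~\cite{foundation-tfm}, the plan is a contradiction argument. Assume a non-trivial anonymous mechanism satisfies all four properties, and fix a block $S$ (with $|S|\le k$) and an identity $i\in S$ with $\mathbb E[x_i(S)]>0$. The strategy is to exploit the finite block size: the builder can fill the block entirely with its own sybil bids, or censor bids that would otherwise be included. Combining UIC for ordinary users with builder strategy-proofness should force every identity's expected compensation to be zero.

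\emph{Step 1 (censorship as a deviation).} Consider a world with strictly more than $k$ pending bids: the bids of $S$ plus many additional honest users. The honest builder includes some prescribed set of $k$ bids, so some user $j$, say with a bid numerically identical to $i$'s, is excluded and gets zero. By anonymity, the outcome depends only on the reported values, budgets and order. A builder-as-user can therefore replace any included honest bid with a copy of its own bid, or arrange the block to be exactly $S$ with its own identity occupying $i$'s slot.

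\emph{Step 2 (comparison).} Builder strategy-proofness says honest inclusion must give the builder at least this deviation's payoff. Suppose the builder's true type equals $i$'s report, which can be taken eligible and truthful. The deviation's payoff is $\mathbb E[x_i(S)]-v\,\mathbb E[y_i(S)]$, which is the utility of $i$ in $S$. By IR in its UIC form this is nonnegative, and I expect it can be made strictly positive by a suitable choice of the builder's true value $v$ below $i$'s report. To do this, use the envelope formula from Lemma~\ref{lem:seller-payment-uniqueness}, which gives positive utility when expected allocation is positive for lower values, and non-triviality plus feasibility ensure $\mathbb E[y_i]>0$.

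Meanwhile, the honest prescribed outcome can be arranged to exclude the builder's own bid. Overflow the pool of pending bids with many identical copies of honest bids. Using the symmetry implied by anonymity, averaged over the prescribed inclusion rule, some copy of the builder's type is left out and gets zero, giving builder utility zero. This contradicts strategy-proofness.

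\emph{Step 3 (main obstacle).} The delicate point is that the prescribed inclusion rule might always include the builder's own bid. The builder's bid is indistinguishable from the others, though, and if there are more than $k$ identical bids of this type, some identical copy must be dropped. By anonymity we may relabel so that the dropped copy is the builder's. Care is needed to ensure that the remaining block, after censorship, is still exactly $S$ with the builder in role $i$, and that the other copies do not alter the outcome. This holds because the builder constructs the block itself (censoring the extras and injecting sybils as needed to reproduce $S$), while UIC of the excluded honest users is unaffected. If positivity of utility fails, I would fall back on the Section~\ref{sec:builder-impossibility}-style anchoring, showing that compensation equals $\sigma_0\,\mathbb E[y]$ at value $\sigma_0$ and hence that all payments vanish, which contradicts non-triviality.
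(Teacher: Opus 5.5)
Your skeleton matches the paper's proof. You flood the pending set with many truthful bids of cap $q$ and low true value, and let the builder own one of them. The builder then censors so that the block is numerically $S$, with its own identity in $i$'s slot copying $i$'s report. Anonymity then gives it $T-vA$, where $T:=\mathbb E[x_i(S)]>0$ and $A:=\mathbb E[y_i(S)]\le q$. Positivity here is immediate: $T-vA\ge T-vq>0$ for any true value $v<T/q$. So the envelope formula, the claim $\mathbb E[y_i]>0$, and the anchoring fallback are unnecessary.

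\textbf{First gap: the builder fee.} A builder-as-seller receives its trading utility plus the builder fee of the block. So ``my bid is dropped, hence my utility is zero'' really compares $0+\bar\mu(\text{honest block})$ with $T-vA+\bar\mu(S)$. Nothing in your argument prevents the honest block from carrying the larger fee. The paper closes this by invoking \Cref{thm:zero-builder-fee}, whose hypotheses (IR, feasibility, UIC, builder-as-user strategy-proofness) are all assumed here. The fee is therefore zero on both sides.

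\textbf{Second, more substantive gap: the step ``more than $k$ copies, so some copy is dropped; relabel so that it is the builder's.''} This gives honest utility exactly zero only if a \emph{fixed} identity is excluded with probability one, i.e., only for a deterministic inclusion rule. The builder's identity is fixed before the randomness of the prescribed block construction (and of the mechanism) is realized. If, say, ties among identical bids are broken uniformly at random, every copy is included with positive probability and may have positive expected utility. The theorem is meant to cover such randomized mechanisms; non-triviality is defined in expectation.

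What is needed is a quantitative averaging, and this is where finite block size actually does its work in the paper. Every realized block has at most $k$ bids, hence total budget at most $Q_S+kq$, where $Q_S$ is the total cap of $S\setminus\{i\}$. Feasibility then bounds the copies' total compensation by $C:=F(Q_S+kq)$, independently of $N$. Each copy's cost is nonnegative and its expected utility is nonnegative by IR, so pigeonhole gives a copy with expected utility at most $C/N$. With true value $\varepsilon<T/(2q)$ the deviation is worth more than $T/2$, and taking $N>2C/T$ gives the contradiction. Your phrase ``averaged over the prescribed inclusion rule'' gestures at this, but it concludes ``zero'' and never supplies the $N$-independent bound on total compensation.
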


\begin{proof}
Suppose, toward a contradiction, that the mechanism is non-trivial.  Then
some block $S$, of size at most $k$, contains a bid $j$ with positive expected
compensation.  Write
\[
    T:=\mathbb E[x_j(S)]>0,
    \qquad
    A:=\mathbb E[y_j(S)],
\]
and let $q\ge0$ be the public quantity cap of that bid.  If $q>0$, choose
\begin{equation*}
    0<\varepsilon<\frac{T}{2q}.
\end{equation*}
If $q=0$, choose any $\varepsilon>0$; the cost of every feasible allocation
to an added identity is then zero.

\paragraph{Create many low-ask identities.}
Construct a pending-bid set containing all identities in $S\setminus
\{j\}$ and $N$ additional identities, each with cap $q$, true reservation
value $\varepsilon$, and truthful ask $\varepsilon$.  Run the prescribed
block-construction strategy on this pool.  For each added identity $i$, let $\Util_i$ denote
its expected trading utility under truthful behavior; an omitted bid has
allocation and compensation zero.  Individual rationality gives
$\Util_i\ge0$.

\paragraph{One identity has small truthful utility.}
Finite block capacity bounds the added identities' total compensation independently
of their number.  Indeed, if
$Q_S$ is the total budget of the bids in $S\setminus\{j\}$, a realized
block contains at most $k$ bids and hence has total budget at most
$Q_S+kq$.  Feasibility, nonnegative builder fee and burn, and monotonicity of
$F$ therefore give
\[
    \sum_{i=1}^N x_i
    \le \sum_{h\text{ included}}x_h
    \le F(Q_S+kq)=:C.
\]
Since each added identity has nonnegative cost,
\[
    \sum_{i=1}^N \Util_i
    =\mathbb E\!\left[\sum_{i=1}^N
       (x_i-\varepsilon y_i)\right]
    \le C.
\]
Consequently some identity $i^*$ satisfies $\Util_{i^*}\le C/N$.

\paragraph{Censorship gives that identity a profitable deviation.}
Now imagine that the unlucky user $i^*$ belongs to 
the builder. 
Its builder fee is zero
with probability $1$ by
Theorem~\ref{thm:zero-builder-fee}, 
so the builder's truthful utility is $\Util_{i^*}$.  
The builder can deviate by censoring the prescribed block, including
$S\setminus\{j\}$ together with $i^*$, and making $i^*$ submit $j$'s report.
The on-chain rule then sees the same numerical block $S$, with $i^*$ in place
of $j$.  Hence the deviation gives the builder expected utility
\[
    T-\varepsilon A
    \ge T-\varepsilon q
    >\frac{T}{2},
\]
again with zero builder fee.  Taking $N>2C/T$ makes the truthful utility
$\Util_{i^*}\le C/N<T/2$, contradicting builder-as-seller incentive
compatibility.  Therefore no block can give any seller positive expected
compensation.
\end{proof}

\section*{Acknowledgements}
Human intelligence was solely responsible for the 
new conceptual contributions including the proposal of this new paradigm.  
The main mechanism and some of its proofs were developed with assistance from ChatGPT 5.6. 
Other results were first proved with human intelligence and subsequently written up with assistance 
from ChatGPT 5.6 and Claude Fable 5. The authors carefully reviewed and substantially 
edited all AI-generated text.

\bibliographystyle{alpha}
\bibliography{refs}

\appendix

\section{Why Exclude Ineligible Users?}
\label{sec:ic-le-impossibility}

Recall that our Pareto optimality (PO) notion is required
to hold only for eligible users whose asks
are no higher than the initial spot price. 
We now explain why we do not aim to achieve efficiency for ineligible users. 
Specifically, we prove that 
the requirement of 
UIC precludes achieving PO for all users, including ineligible ones. 
Li, Shi, and Zhang~\cite{ammtrilemma} proved a similar impossibility
but their setting requires that the mechanism's outcome
land on the pricing curve without burning (or redistribution) of the surplus. 
Therefore, their proof cannot be easily adapted to our setting
where part of the surplus may be burnt.  
Our new argument below instead
uses the information rents forced by UIC, 
i.e., 
the extra utility a user obtains because the mechanism must remain incentive compatible
  across different private reservation values. 

\paragraph{Two-user scenario for the proof.}
It suffices to consider one sell-$Y$ user $i$ and one sell-$X$ user $j$.  In
the notation of the model, user $i$ supplies $y_i$ units of $Y$ and receives
$x_i$ units of $X$, while user $j$ supplies $x_j$ units of $X$ and receives
$y_j$ units of $Y$.
Feasibility without an outside subsidy means that there is a signed residual
AMM input $p$ and nonnegative surplus-disposal amounts $B_Y,B_X$ such that
\begin{equation}
    y_i=y_j+p+B_Y,
    \qquad
    x_j+F(p)=x_i+B_X.
    \label{eq:ic-le-offcurve-accounting}
\end{equation}
Here $p$ units of $Y$ are sent to the AMM and $F(p)$ units of $X$ are released.
The amounts $B_Y$ and $B_X$ may be burned or redistributed outside the current
batch.  In particular, \eqref{eq:ic-le-offcurve-accounting} only implies
\begin{equation}
    y_i\ge y_j+p,
    \qquad
    x_i\le x_j+F(p),
    \label{eq:ic-le-accounting-bounds}
\end{equation}
and does not require the aggregate user outcome itself to lie on the curve.
As elsewhere in the paper, the same residual trade determines the final
marginal prices
\begin{equation*}
    \sigma_e:=F'_+(p),
    \qquad
    \rho_e:=\frac{1}{\sigma_e}.
\end{equation*}

\begin{theorem}[UIC precludes Pareto optimality for all users]
\label{thm:ic-le-impossibility}
Let $F$ be the finite-reserve concave AMM pricing curve defined in
\Cref{sec:curve}.
No deterministic AMM mechanism satisfying user individual rationality and the
off-curve feasibility condition
\eqref{eq:ic-le-offcurve-accounting} can simultaneously satisfy user incentive
compatibility (UIC) and Pareto optimality for all users, including ineligible
users.
In fact, the impossibility already holds for two users with fixed, publicly known
budgets and with deviations restricted to reservation-value misreports.  
\end{theorem}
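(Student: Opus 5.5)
The plan is a Myerson--Satterthwaite-style argument. Pareto optimality pins down the allocations of both users, and the envelope identity then fixes their utilities up to constants. Individual rationality makes those constants nonnegative. Finally, the total information rent forced on the two users will exceed the total surplus that \eqref{eq:ic-le-accounting-bounds} makes available, even though burning is allowed.

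\emph{Setup and allocations.} Fix publicly known budgets $q$ for the sell-$Y$ user $i$ and $r$ for the sell-$X$ user $j$. I will take an ineligible $i$ with $v_i$ slightly above $\sigma_0$ and an eligible $j$ with $u_j<\rho_0$ and $v_iu_j<1$. Then bilateral gains from trade exist even though $i$ dislikes the pool price. I will also choose $r$ small relative to $q$ (or the reverse, whichever makes the bookkeeping cleanest). With these choices, Pareto optimality determines the outcome as a function of the reported $(v,u)$ in the relevant range:
\begin{itemize}[leftmargin=6mm,itemsep=1pt]
\item when $vu<1$ the two users must cross until one budget is exhausted;
\item any residual eligible capacity must be traded with the pool until the final marginal price $\sigma_e=F'_+(p)$ makes it unprofitable;
\item for $vu>1$ with $i$ ineligible, $i$ must not trade at all, since otherwise a reallocation among the two users or a trade with the pool would be a Pareto improvement.
\end{itemize}
Thus $y_i(\cdot,u)$ is a step-like nonincreasing function of $v$ with its drop at $v=1/u$, and symmetrically for $x_j(v,\cdot)$ as a function of $u$.

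\emph{Rents.} Restricting to value misreports with fixed budgets puts us in a single-parameter procurement setting. The envelope identity used in \Cref{lem:seller-payment-uniqueness} gives
\[
\Util_i(v)=\Util_i(\bar v)+\int_v^{\bar v}y_i(s,u)\,ds,
\]
and the analogous formula holds for $j$. Evaluating at a value $\bar v$ above which $i$ never trades, IR gives $\Util_i(\bar v)\ge0$, so $i$'s utility is at least the integral of its PO-forced allocation. Doing the same for $j$ yields lower bounds of the form $q(1/u-v)$ for $i$ and $r(1/v-u)$ for $j$, up to pool-trade corrections, whenever both budgets are fully crossed.

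\emph{Feasibility bound and contradiction.} On the other side, \eqref{eq:ic-le-accounting-bounds} together with the tangent bound $F(p)\le\sigma_0p$ bounds the sum of the two utilities by the actual gains from the realized trade. For a crossed amount $z$ this is roughly $z(1/u-v)$ measured in a common numeraire, plus pool gains bounded via concavity; burning only subtracts from it. Each rent above, however, is essentially the full gain from trade on its own. In the spirit of Myerson--Satterthwaite, the rents should be compared in a common unit at the point of sale and summed. Choosing $v$ and $u$ near the indifference line $vu=1$ but bounded away from $\sigma_0$ and $\rho_0$ makes the combined rent exceed the available surplus, which is a contradiction.

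\emph{Main obstacle.} The hard step will be pinning down the PO allocation exactly, including the pool leg. When $j$ has leftover $X$ it must also sell to the pool, and that residual trade moves $\sigma_e$ and interacts with $i$'s willingness to trade. It also makes the bookkeeping of units ($X$ versus $Y$ numeraire) in the rent comparison delicate. I would first try choosing $r$ tiny and $q$ large, so that $j$ is always exhausted by crossing with $i$ and the pool leg involves only $i$. Since $i$ is ineligible, $i$ sends nothing to the pool, so $p\le0$ is forced and the accounting reduces to a clean bilateral Myerson--Satterthwaite computation.
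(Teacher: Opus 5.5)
\paragraph{There is a genuine gap.} The configuration you commit to cannot produce a contradiction, because in it Pareto optimality does not force the two users to cross. Your configuration is: ineligible sell-$Y$ user $i$ with $v>\sigma_0$, eligible sell-$X$ user $j$ with $u<\rho_0$, and $r$ tiny.

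In a common numeraire, $j$ is a buyer of $Y$ with value $1/u>\sigma_0$, and $i$ sells $Y$ only at $v>\sigma_0$. When $r$ is small, the pool sells $Y$ to $j$ at a marginal price close to $\sigma_0$, so the pool is the \emph{cheaper} counterparty. Consider the outcome in which $j$ sells its whole budget to the pool and $i$ does not trade. It is Pareto optimal for all users:
\begin{itemize}
\item $j$ has no $X$ left, so no trade among the users is possible;
\item $p=-H(r)$ is close to $0$, so $\sigma_e=F'_+(p)<v$, and $i$ has no profitable pool trade.
\end{itemize}
This is exactly what the paper's own mechanism does here: $j$ is on the dominant side and runs the one-sided VCG against $H$, and $i$ is dropped. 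Fix $c>0$ and take $r$ small enough; then on every profile with $v\ge\sigma_0+c$ and arbitrary $j$-ask, that mechanism is IR, feasible, UIC, and Pareto optimal for all users. Note that whenever $vu<1$ we have $u<1/(\sigma_0+c)<H'_-(r)$, so $j$ is fully filled.

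Consequently, your claims that $y_i(\cdot,u)$ drops at $1/u$ and that $i$ earns a rent of order $q(1/u-v)$ are false. The reduction to a bilateral Myerson--Satterthwaite computation with $p\le0$ cannot close.

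A related symptom: you never use the finite-reserve bound $F<x_0$; you invoke only concavity and the tangent bound. Some such use is unavoidable. With unlimited liquidity at $\sigma_0$, posting the spot price to eligible bids and dropping ineligible ones would be UIC and Pareto optimal for everyone.

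\paragraph{The missing idea.} The pool price must be forced to move far, which requires the reverse of your budget choice. In the paper, the \emph{eligible} user is sell-$Y$ with asks $v\in[v_0,v_1)$, $v_1<\sigma_0$, and budget $Q$ so large that it cannot be funded: $v_0Q>R+F(Q)$.
\begin{enumerate}
\item IR plus feasibility force this user to be underfilled, $y_i(v)<Q$.
\item Pool-side efficiency for the underfilled user then forces $\sigma_e=F'_+(p)\le v<\sigma_0$. Hence $p>0$ and $\rho_e\ge 1/v>u:=1/v_1$.
\item So the \emph{ineligible} sell-$X$ user is strictly in the money against the pool and must be fully filled, $x_j=R$.
\item This holds for every sell-$X$ ask $a<1/v$. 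UIC therefore makes its compensation constant on that range, and IR as $a\uparrow 1/v$ gives $y_j\ge R/v$. Hence $y_i(v)\ge R/v+p(v)$.
\item UIC for type $v_0$ mimicking $v\uparrow v_1$ yields the rent $U(v_0)\ge R(1-v_0/v_1)=R\delta$.
\item This gives $x_{i,0}\ge R+v_0p_0+R\delta$. Combining with feasibility $x_{i,0}\le R+F(p_0)$ gives $F(p_0)-v_0p_0\ge R\delta>\Gamma(v_0)$, where $\Gamma(v_0)=\sup_{p\ge0}\{F(p)-v_0p\}\le x_0$. This is a contradiction.
\end{enumerate}
Finite reserves enter twice: once to force underfilling, and once to bound $\Gamma(v_0)$. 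The forced rent is compared against the pool's bounded surplus, not against bilateral gains from trade. Taking $R$ \emph{large}, not small, is what makes the rent exceed that surplus.
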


\begin{proof}
Consider the two-user scenario mentioned above.  
Specifically, 
for each $v\in[v_0,v_1)$, consider a truthful sell-$Y$ user of type $(v,Q)$
and a truthful sell-$X$ user of type $(u,R)$.  Denote their respective
allocations and compensations by
\[
    (x_i(v),y_i(v))
    \qquad\text{and}\qquad
    (x_j(v),y_j(v)),
\]
and denote the residual AMM input by $p(v)$.
We next explain how we choose the parameters $v_0$, $v_1$, $Q$ and $R$.
Let 
\begin{equation*}
    0<v_0<v_1<\sigma_0,
    \qquad
    \delta:=1-\frac{v_0}{v_1}>0.
\end{equation*}

Define the maximum AMM surplus at per-unit cost $v_0$ by
\begin{equation}
    \Gamma(v_0)
       :=\sup_{p\ge0}\bigl\{F(p)-v_0p\bigr\}.
    \label{eq:ic-le-gamma}
\end{equation}
Because the pool has finite $X$ reserves, $F(p)<x_0$ for every $p\ge0$.
Consequently $\Gamma(v_0)\le x_0<\infty$.  Choose the sell-$X$ budget $R$
large enough that the eventual rent lower bound $R\delta$ exceeds this
maximum surplus, and then choose the sell-$Y$ budget $Q$ large enough that a
full sell-$Y$ allocation cannot be funded (to be shown later):
\begin{equation*}
    R>\frac{\Gamma(v_0)}{\delta},
    \qquad
    v_0Q>R+F(Q).
\end{equation*}
The second choice is possible by taking $Q>(R+x_0)/v_0$.  Finally, set
\begin{equation*}
    u:=\frac{1}{v_1}.
\end{equation*}
Since $v_1<\sigma_0$, this choice gives
$u=1/v_1>1/\sigma_0=\rho_0$.  Thus the sell-$X$ user with valuation $u$ is ineligible at
the initial spot price.  Requiring Pareto optimality for this user is precisely
what will drive the contradiction.

\paragraph{Finite reserves force the sell-$Y$ user to be underfilled.}
Suppose, towards a contradiction, that $y_i(v)=Q$.  Individual rationality gives
\begin{equation}
    x_i(v)\ge vQ\ge v_0Q.
    \label{eq:ic-le-y-ir}
\end{equation}
Individual rationality of the sell-$X$ user also gives
$y_j(v)\ge u x_j(v)\ge0$.
The first identity in \eqref{eq:ic-le-offcurve-accounting} therefore implies
$p(v)\le y_i(v)=Q$.  Since $x_j(v)\le R$ and $F$ is increasing, the second
accounting bound gives
\[
    x_i(v)\le x_j(v)+F(p(v))\le R+F(Q)<v_0Q,
\]
contradicting \eqref{eq:ic-le-y-ir}.  Consequently,
\begin{equation*}
    y_i(v)<Q.
\end{equation*}

\paragraph{Pareto optimality fully executes the sell-$X$ order.}
Because the sell-$Y$ user is underfilled, it could profitably sell an
additional amount directly to the pool if $\sigma_e>v$.  Pareto optimality
for all users rules this out, and hence
\begin{equation*}
    F'_+(p(v))=\sigma_e\le v<\sigma_0.
\end{equation*}
Concavity and differentiability at zero imply
$F'_+(p)\ge F'(0)=\sigma_0$ whenever $p\le0$.  Hence
\begin{equation}
    p(v)>0.
    \label{eq:ic-le-positive-p}
\end{equation}
Moreover,
\[
    \rho_e=\frac{1}{F'_+(p(v))}
       \ge\frac{1}{v}
       >\frac{1}{v_1}=u.
\]
Thus underfilling the sell-$Y$ user pushes the marginal compensation in the
reverse direction above the sell-$X$ user's ask.
If the sell-$X$ user were underfilled, the strict inequality $\rho_e>u$
would likewise give it a profitable additional trade with the pool, contrary
to Pareto optimality for all users.  Therefore
\begin{equation}
    x_j(v)=R.
    \label{eq:ic-le-x-full}
\end{equation}

\paragraph{UIC forces a large sell-$X$ compensation.}
Fix $v$ and vary only the sell-$X$ user's true and reported ask $a$ over
$0\le a<1/v$, while keeping its budget equal to $R$.  The preceding underfilling
argument does not depend on the numerical sell-$X$ ask.  For every such $a$,
the same pool-trade condition therefore forces its allocation to equal $R$.

Let $y_j(a)$ be its compensation.  For any $a,b<1/v$, the two pairwise UIC
inequalities are
\[
    y_j(a)-aR\ge y_j(b)-aR,
    \qquad
    y_j(b)-bR\ge y_j(a)-bR.
\]
Because every ask in this interval receives the same full allocation, these
inequalities force the compensation to be constant: $y_j(a)=y_j(b)$.
Individual
rationality gives $y_j(a)\ge aR$ for every $a<1/v$; taking $a\uparrow1/v$ shows
that this constant is at least $R/v$.  Since the actual ask
$u=1/v_1$ lies in the interval, we obtain
\begin{equation}
    y_j(v)\ge\frac{R}{v}.
    \label{eq:ic-le-s-lower}
\end{equation}
Intuitively, the sell-$X$ user must receive at least $R/v$ units of $Y$.
Those units, together with the positive residual input $p(v)$ sent to the
pool, must be supplied by the sell-$Y$ user.  Accordingly, combining
\eqref{eq:ic-le-accounting-bounds},
\eqref{eq:ic-le-positive-p}, and \eqref{eq:ic-le-s-lower} gives
\begin{equation}
    y_i(v)\ge y_j(v)+p(v)
       \ge\frac{R}{v}+p(v).
    \label{eq:ic-le-y-lower}
\end{equation}

\paragraph{UIC forces a large sell-$Y$ information rent.}
Let
\[
    U(v):=x_i(v)-v y_i(v)
\]
be the sell-$Y$ user's truthful utility.  UIC for a user
of true type $(v_0,Q)$, comparing truthful reporting with the report $(v,Q)$,
implies
\begin{align*}
    U(v_0)
       &\ge x_i(v)-v_0y_i(v) \notag\\
       &=U(v)+(v-v_0)y_i(v).
\end{align*}
The low-cost type $v_0$ can mimic type $v$.  UIC therefore requires it to
retain the cost advantage $(v-v_0)y_i(v)$ as information rent.
Individual rationality gives $U(v)\ge0$, while
\eqref{eq:ic-le-y-lower} gives $y_i(v)\ge R/v$.  Hence, for every
$v\in(v_0,v_1)$,
\[
    U(v_0)\ge R\left(1-\frac{v_0}{v}\right).
\]
Taking the supremum as $v\uparrow v_1$ yields
\begin{equation}
    U(v_0)\ge R\delta.
    \label{eq:ic-le-rent-lower}
\end{equation}

\paragraph{The forced rent exceeds the AMM's available surplus.}
Write $p_0:=p(v_0)$ and similarly use a subscript $0$ for the other outcomes
at $v_0$.  From \eqref{eq:ic-le-y-lower} and
\eqref{eq:ic-le-rent-lower},
\begin{align*}
    x_{i,0}
       &=v_0y_{i,0}+U(v_0) \notag\\
       &\ge v_0\left(\frac{R}{v_0}+p_0\right)+R\delta \notag\\
       &=R+v_0p_0+R\delta.
\end{align*}
On the other hand, \eqref{eq:ic-le-x-full} and off-curve feasibility imply
\begin{equation*}
    x_{i,0}\le R+F(p_0).
\end{equation*}
The lower bound includes the sell-$Y$ user's production cost and its forced
information rent, whereas the upper bound is all the $X$ available from the
sell-$X$ user and the AMM.  Combining them gives
\[
    F(p_0)-v_0p_0
       \ge R\delta
       >\Gamma(v_0),
\]
whereas $p_0>0$ by \eqref{eq:ic-le-positive-p}.  This contradicts the
definition of $\Gamma(v_0)$ in \eqref{eq:ic-le-gamma}.

The argument used only single-identity reservation-value UIC with fixed
budgets, which is a restriction of the full UIC requirement.
The claimed impossibility follows.
\end{proof}

\begin{remark}[The proof only needs local efficiency]
The proof does not use the part of Pareto optimality that rules out Pareto
improvements through trades among users.  It uses only the requirement that
no user can trade directly with the pool to obtain a Pareto improvement.
This weaker condition is called \emph{local efficiency} (LE)~\cite{ammtrilemma}.  If
$\sigma_e$ and $\rho_e=1/\sigma_e$ are the final marginal compensation rates,
LE requires
the following to hold for all users $i$: 
\begin{equation*}
    y_i<\reportq{i}
       \ \Longrightarrow\ \reportv{i}\ge \sigma_e,
    \qquad
    x_j<\reportr{j}
       \ \Longrightarrow\ \reportu{j}\ge \rho_e.
\end{equation*}
Thus the theorem remains true if Pareto optimality for all users is replaced
by LE: UIC is incompatible even with this weaker efficiency requirement.
\end{remark}

\section{Necessity of Surplus Redistribution}
\label{sec:zero-burn-impossibility}

We now show that surplus burning is essential: without it, no AMM mechanism
can simultaneously provide the full set of guarantees achieved by our
mechanism---UIC, strategy-proofness for
a builder with or without intrinsic demand, and Pareto optimality for eligible
users.  We stress, however, surplus redistribution is more than a technical device for
circumventing this impossibility.  As discussed in \Cref{sec:intro}, it offers
a compelling new paradigm for provably eliminating MEV in two-asset AMM
contracts while substantially reducing 
the negative externalities of today's highly
centralized builder ecosystem~\cite{buildercentral}.

Li, Shi, and Zhang~\cite{ammtrilemma} showed that, when surplus burning is
prohibited, UIC and Pareto optimality (PO) for {\it all} users cannot be
achieved simultaneously.  However, as shown in
\Cref{sec:ic-le-impossibility}, requiring PO for all users is too stringent:
the impossibility persists even when surplus burning is allowed.
Our goal here is instead to establish the necessity of surplus burning
even under the weaker requirement of PO for {\it eligible} users only.
This distinction calls for a different proof argument described below.

For a concave function, write $F'_-(z)$ and $F'_+(z)$ for its left and right
derivatives.  Both exist at every interior point, and
\begin{equation}
    s<t \quad\Longrightarrow\quad F'_-(t)\le F'_+(s).
    \label{eq:zero-burn-slope-order}
\end{equation}

\begin{theorem}[Zero burn is impossible]
\label{thm:zero-burn-impossibility}
Let $F$ satisfy assumptions \textnormal{(A1)--(A4)} of
\Cref{sec:curve}.  
Suppose that a truthful zero-budget user must receive zero utility, then 
no deterministic AMM mechanism can simultaneously satisfy
individual rationality, feasibility with zero burn, UIC, strategy-proofness
for a builder-as-user, and Pareto optimality for eligible bids

Here zero-burn feasibility means exact token accounting: every token made
available by opposite-side orders and the residual AMM trade is paid either
to users or to the builder, and the burn amount must be zero.
\end{theorem}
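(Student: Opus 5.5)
The plan is to turn zero burn into exact budget balance among the users, and then show that the information rents forced by UIC exceed the surplus that the pool and the opposite side can supply. This is the same template as \Cref{thm:ic-le-impossibility}, except that only eligible users will be involved.

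\textbf{Step 1 (no builder fee).} I will invoke \Cref{thm:zero-builder-fee}. Its hypotheses (individual rationality, feasibility, UIC and strategy-proofness for a builder-as-user) are all assumed here, and zero-burn feasibility is a special case of feasibility. Hence $\mu=0$ with probability $1$ on every sell-$Y$-only profile, and by the symmetric argument on every sell-$X$-only profile. Combined with zero burn, this gives the exact accounting $\sum_i x_i=F(\sum_i y_i - \text{minority }Y)+(\text{minority }X)$. On a sell-$Y$-only profile this reads $\sum_i x_i = F(Y)$. The zero-budget normalization, via \Cref{fct:uic-implies-nfl}, supplies no free lunch, which I will use to pin down the additive constants in the envelope formulas.

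\textbf{Step 2 (rent lower bounds from UIC and PO).} For each user, fix the others' reports and apply the envelope identity from \Cref{lem:seller-payment-uniqueness}:
\[
U_i(v)=U_i(\sigma_0)+\int_v^{\sigma_0} y_i(s)\,ds,
\qquad U_i(\sigma_0)\ge 0 .
\]
Pareto optimality for eligible bids then gives pointwise lower bounds on $y_i(s)$ for $s<\sigma_0$: if $i$ is underfilled, then $F'_+$ at the residual pool input must be at most $s$. I will pick budgets so that PO pins allocations down on a nontrivial range of asks.

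\textbf{Step 3 (the contradiction).} A single sell-$Y$ user gives no contradiction, because posting prices along the curve is truthful and exactly budget balanced. So the profile must involve at least two eligible users. My first choice is one strictly eligible sell-$X$ user with ask below $\rho_0$ and a sell-$Y$ user. Here the PO condition among users (the part of PO unused in \Cref{thm:ic-le-impossibility}) should force the minority order to be fully crossed. The argument of \Cref{lem:natural-minority-structure}, comparing with asks near the eligibility boundary, should then fix its compensation at the spot rate. The sell-$Y$ user's rent $\int_v^{\sigma_0} y_i(s)\,ds$ then integrates over asks on which the crossed segment of length $M$ is always allocated. This yields a rent of the form of the layer-cake expression \eqref{eq:seller-layercake} for the curve $\widetilde F_Y^{M}$.

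The key comparison is the following. Under exact balance, the sum of utilities equals the full dominant-side welfare. Under UIC, however, each user's utility is at least its VCG marginal contribution plus a nonnegative constant. I will choose the parameters so that these constants are forced to be strictly positive, for example by letting an ask approach the boundary and using IR for the mimicking type, as in the appendix. The contradiction then takes the form $\text{sum of rents} > \text{available surplus}$, just as in the last paragraph of the proof of \Cref{thm:ic-le-impossibility}.

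\textbf{Main obstacle.} The hard step is Step 3: finding a profile of eligible users on which PO pins the allocation tightly enough that the forced rents strictly exceed the welfare. Naive two-sell-$Y$ profiles give only bounds of the type $\int (K(s)-q)_+\,ds$, which are too weak. I would therefore first try the two-sided profile described above, with the sell-$X$ budget tuned so that the dominance side flips as one ask varies. The rent collected across the flip should then exceed $F(p)-vp$ at the residual pool trade.
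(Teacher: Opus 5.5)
\textbf{There is a genuine gap, and it is Step~3.} You never fix a profile or derive a contradiction. The template you plan to use, ``UIC/IR-forced rents exceed the available surplus,'' cannot work here. Such a contradiction uses zero burn only through $\sum_i x_i\le$ (available output). Your other ingredients are IR, UIC, Pareto optimality for eligible bids, builder strategy-proofness, zero builder fee, no free lunch, and the boundary/IR argument for the constants. Every one of these also holds for \name, which burns. \name's payments satisfy that same inequality (\Cref{thm:generic-seller}(c) and \Cref{sec:feasibility}). So any chain of inequalities built only from these facts applies verbatim to \name and cannot end in a contradiction.

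The template works in \Cref{thm:ic-le-impossibility} only because PO for an \emph{ineligible} sell-$X$ user forces full execution at a per-unit rate of at least $1/v>\rho_0$. With eligible users, the analogous boundary argument pins the rate at the spot price (cf.\ \Cref{lem:natural-minority-structure}), which creates no excess rent. Zero burn has to be used in the opposite direction: exact accounting pushes some user's payment \emph{up} to the pool's entire output, beyond what UIC permits.

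\textbf{The missing idea is the gap between marginal and average compensation of a concave curve.} It lives in exactly the two-sell-$Y$ profile you dismissed.
\begin{enumerate}[label=(\roman*),leftmargin=8mm,itemsep=1pt]
\item Since $F<x_0$, there is $Q>0$ with $\sigma:=F'_+(Q)<F(Q)/Q$. Take two sell-$Y$ users, each with cap $Q$, and asks $\sigma<v_1<v_2<F(Q)/Q$; all are strictly eligible.
\item PO gives three local conditions. Cheaper users exhaust their caps before costlier ones sell. Executed users have ask at most $F'_-(Y)$, since otherwise unwinding slightly with the pool helps. Underfilled users have ask at least $F'_+(Y)$.
\item Under truthful reports these conditions exclude user~2, so no free lunch gives it zero utility.
\item If user~2 reports $v^*<\sigma$, the same conditions exclude user~1 and force user~2 to sell exactly $Q$.
\item With zero builder fee and zero burn, exact accounting then gives $x_2=F(Q)$. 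User~2's true utility is $F(Q)-v_2Q>0$, contradicting UIC.
\end{enumerate}
No envelope integral or layer-cake bound is needed; the contradiction is a single profitable misreport. In rent language, zero burn forces a rent at the report $v^*$ larger than UIC allows for a type that is excluded at $v_2$. This is the reverse of the comparison you were aiming for.

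\textbf{Two secondary issues.} \Cref{thm:zero-builder-fee} is proved only for sell-$Y$-only batches, so your exact-accounting identity on two-sided profiles is not justified as written. Also, \Cref{lem:natural-minority-structure} depends on \Drop and \NPR, which are not hypotheses of this theorem.
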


\begin{proof}
Suppose, toward a contradiction, that such a mechanism exists.  We use only
sell-$Y$ profiles.  The key tension is between marginal and average
compensation: Pareto optimality determines who trades by comparing asks with
the pool's marginal compensation, whereas zero burn can force a sole trading
user to receive the pool's entire output.  We will exploit the gap between
these two rates to make an otherwise excluded user profit by understating
its valuation.

By \Cref{thm:zero-builder-fee}, strategy-proofness for a
builder-as-user, together with IR, feasibility, and UIC, forces the builder
fee to be zero on every profile.  Since the burn is also zero, a sell-$Y$-only
outcome with total input $Y:=\sum_i y_i$ satisfies the exact accounting
identity
\begin{equation}
    \sum_i x_i=F(Y).
    \label{eq:zero-burn-exact-accounting}
\end{equation}

Due to \Cref{fct:uic-implies-nfl}, we have the following: 
\begin{equation}
    y_i=0 \quad\Longrightarrow\quad x_i=0
    \qquad\text{on a sell-$Y$-only profile.}
    \label{eq:zero-burn-zero-allocation}
\end{equation}

We first record the allocation restrictions imposed by Pareto optimality:
cheaper supply must be used first, and users must have no profitable way to
increase or unwind their pool trades.  Consider a truthful sell-$Y$-only
profile consisting of eligible users and put $Y=\sum_i y_i$.

First, if users $i$ and $j$ have asks $v_i<v_j$, and $j$ sells a positive
amount, then $i$ must exhaust its cap.  Otherwise, for a sufficiently small
$\varepsilon>0$, user $i$ can give $\varepsilon$ units of $Y$ to user $j$ in
exchange for $r\varepsilon$ units of $X$, where
$v_i<r<v_j$.  Choose $\varepsilon$ small enough that user $j$ returns no more
than the $X$ and $Y$ involved in its original trade.  User $i$'s utility rises
by $(r-v_i)\varepsilon$, and user $j$'s utility rises by
$(v_j-r)\varepsilon$.  The pool position and every other user's outcome are
unchanged.  This is a Pareto improvement among eligible users.  Therefore
\begin{equation}
    v_i<v_j,\ y_j>0
       \quad\Longrightarrow\quad y_i=q_i.
    \label{eq:zero-burn-priority}
\end{equation}

Second, any user who sells a positive amount must have ask at most the final
left marginal compensation:
\begin{equation}
    y_i>0 \quad\Longrightarrow\quad v_i\le F'_-(Y).
    \label{eq:zero-burn-executed-slope}
\end{equation}
Indeed, if $v_i>F'_-(Y)$, the user can unwind a sufficiently small amount
$\varepsilon$ of its trade.  It returns
$F(Y)-F(Y-\varepsilon)$ units of $X$ and recovers $\varepsilon$ units of
$Y$, increasing its utility by
\[
    v_i\varepsilon-\bigl(F(Y)-F(Y-\varepsilon)\bigr)>0.
\]
Exact accounting is preserved and, for small enough $\varepsilon$, the
user's net trade remains in its declared direction.  Conversely, an
underfilled user must satisfy
\begin{equation}
    y_i<q_i \quad\Longrightarrow\quad v_i\ge F'_+(Y),
    \label{eq:zero-burn-underfilled-slope}
\end{equation}
because otherwise it could sell a sufficiently small additional amount
directly to the pool and receive $F(Y+\varepsilon)-F(Y)$ units of $X$.

We now choose the witness types using a strict gap between the marginal
compensation at input $Q$ and the average compensation over the whole trade.
There exists $Q>0$ such that
\begin{equation*}
    \sigma:=F'_+(Q)<\frac{F(Q)}{Q}.
\end{equation*}
To see this, concavity and $F(0)=0$ always give
$F'_+(Q)\le F(Q)/Q$.  If equality held for every $Q>0$, $F$ would be
linear with its positive initial slope on $[0,+\infty)$, contradicting the
finite upper reserve bound $F(Q)<x_0$.  Moreover, $\sigma>0$: if the right
derivative of a concave increasing function vanished at a finite point, the
function would be constant thereafter, contrary to strict increasingness.

Choose numbers
\begin{equation}
    0<v^*<\sigma<v_1<v_2<\frac{F(Q)}{Q}.
    \label{eq:zero-burn-type-choice}
\end{equation}
All three asks are strictly eligible, since concavity gives
$F(Q)/Q\le F'_+(0)=\sigma_0$.  Consider two users, each with cap $Q$:
the first user's true valuation is $v_1$, and the second user's true
valuation is $v_2$.  Write $(x_1,y_1)$ and $(x_2,y_2)$ for their respective
compensations and allocations at the profile under consideration.
The second user's valuation exceeds the marginal compensation $\sigma$, but is
below the average compensation $F(Q)/Q$.  Thus supplying the whole amount
$Q$ would give it positive utility if it received all of $F(Q)$, even though
supplying additional units beyond $Q$ would not be worthwhile at the margin.
The ask $v^*<\sigma$
will be its dishonest report.

We first show that the second user sells zero under truthful reports:
the cheaper first user prevents it from obtaining any allocation.
If $y_2>0$, then
\eqref{eq:zero-burn-priority} forces $y_1=Q$.  Hence the total input satisfies
$Y>Q$, and \eqref{eq:zero-burn-slope-order} and
\eqref{eq:zero-burn-executed-slope} give
\[
    v_2\le F'_-(Y)\le F'_+(Q)=\sigma,
\]
contrary to $v_2>\sigma$.  Thus $y_2=0$, and
\eqref{eq:zero-burn-zero-allocation} gives $x_2=0$.  The second user's
truthful utility is therefore zero.

Now let the second user deviate only by reporting ask $v^*$, keeping its cap
equal to $Q$.  This reverses the priority: the second user now appears
cheaper and, as we show next, becomes the sole supplier.
The numerical profile $(v_1,v^*)$ can equally be viewed as a truthful profile,
so Pareto optimality applies to its outcome with respect to these reported
valuations.  If $y_1>0$, priority
\eqref{eq:zero-burn-priority} forces $y_2=Q$.  Then $Y>Q$, and the executed
first user would have to satisfy
\[
    v_1\le F'_-(Y)\le F'_+(Q)=\sigma,
\]
again a contradiction.  Hence $y_1=0$, and
\eqref{eq:zero-burn-zero-allocation} gives $x_1=0$.

Finally, the second user must sell its full cap under the deviation.  Otherwise
$Y=y_2<Q$, while concavity gives
\[
    F'_+(Y)\ge F'_+(Q)=\sigma>v^*,
\]
contradicting the underfilled condition
\eqref{eq:zero-burn-underfilled-slope}.  Therefore $y_2=Q$.
At this point, zero burn is decisive: the builder receives no fee, and the
inactive first user receives nothing, so the entire AMM output must go to
the second user.  Exact accounting
\eqref{eq:zero-burn-exact-accounting} forces
\[
    x_2=F(Q).
\]
Evaluated at its true valuation $v_2$, rather than its report $v^*$, the second
user's deviating utility is
\[
    F(Q)-v_2Q>0
\]
by \eqref{eq:zero-burn-type-choice}, whereas its truthful utility was zero.
This contradicts UIC.  All reports used in the argument have the same fixed
cap, and every ask used in a Pareto comparison is strictly eligible.  The
claimed impossibility follows.
\end{proof}

\begin{remark}
If builder fees are prohibited, then the impossibility in \Cref{thm:zero-burn-impossibility} holds without
assuming strategy-proofness for a builder-as-user.
Indeed, the proof uses builder-as-user strategy-proofness only to invoke
the zero-builder-fee theorem (\Cref{thm:zero-builder-fee}), which is unnecessary
when zero builder fees are imposed by assumption.  
\end{remark}

\end{document}